\newcommand{\Con}{\ensuremath{\mathcal{C}}}
\newcommand{\mb}[1]{\ensuremath{\mathbb{#1}}}
\newcommand{\N}{\mb{N}}
\newcommand{\R}{\mb{R}}
\newcommand{\G}{\ensuremath{{\mathcal G}}}
\newcommand{\sgn}{\mathop{\mathrm{sgn}}}
\newfont{\bl}{msbm10 scaled \magstep2}
\newcommand{\beq}{\begin{equation}}
\newcommand{\eeq}{\end{equation}}
\newcommand{\F}{\ensuremath{{\mathcal F}}}
\newcommand{\notmid}{\mid\kern-0.5em\not\kern0.5em}
\newcommand{\eps}{\varepsilon}
\newcommand{\supp}{\mathop{\mathrm{supp}}}
\newenvironment{pr}{\begin{proof}[\textbf{Proof:}] \ }{\end{proof}}
\newtheorem{thm}{Theorem}[section]
\newtheorem{lem}[thm]{Lemma}
\newtheorem{prop}[thm]{Proposition}
\newtheorem{cor}[thm]{Corollary}
\newtheorem{rem}[thm]{Remark}
\newtheorem{defi}[thm]{Definition}
\newcommand{\p}{\mathcal{P}}
\newcommand{\vol}{\mathrm{vol}}
\newcommand{\proj}{\mathrm{proj}}
\newcommand{\diam}{\mathrm{diam}}
\newcommand{\J}{\mathcal{J}}
\newcommand{\LLS}{Lorentzian length space }
\newcommand{\LLSn}{Lorentzian length space}
\newcommand{\LpLS}{Lorentzian pre-length space }
\newcommand{\LpLSn}{Lorentzian pre-length space}
\newcommand{\Xll}{\ensuremath{(X,d,\ll,\leq,\tau)} }
\newcommand{\nll}{\centernot\ll}
\newcommand{\dist}{\mathrm{dist}}
\renewcommand{\labelenumi}{(\roman{enumi})}
\renewcommand\theenumi\labelenumi
\newcommand{\dims}{\dim^{{\tau}}}
\newcommand{\Xllm}{\ensuremath{(X,d,m,\ll,\leq,\tau)} }
\newcommand{\cH}{{\mathcal H}}
\newcommand{\V}{\mathcal{V}}
\newcommand{\red}{\color{red}}
\newcommand{\blue}{\color{blue}}
\newcommand{\green}{\color{green}}
\newcommand\reallywidehat[1]{%
\savestack{\tmpbox}{\stretchto{%
  \scaleto{%
    \scalerel*[\widthof{\ensuremath{#1}}]{\kern-.6pt\bigwedge\kern-.6pt}%
    {\rule[-\textheight/2]{1ex}{\textheight}}%WIDTH-LIMITED BIG WEDGE
  }{\textheight}% 
}{0.5ex}}%
\stackon[1pt]{#1}{\tmpbox}%
}
\title{A Lorentzian analog for Hausdorff dimension and measure
\thanks{RM's research is supported in part by the Canada Research Chairs program and by Natural Sciences and Engineering Research Council of Canada Discovery Grants RGPIN--2015--04383 and 2020--04162. CS's research is supported by research grant J4305 of the Austrian Science Fund FWF.}
}
\author{Robert J.\ McCann\thanks{{\tt mccann@math.toronto.edu}, Department of Mathematics, University of Toronto, Canada.}\,
and Clemens S\"amann\thanks{{\tt clemens.saemann@utoronto.ca}, Department of Mathematics, University of Toronto, Canada. 
Current address: Faculty of Mathematics, University of Vienna, Austria.}}
\begin{document}

 \maketitle

 \begin{abstract}
 We define a one-parameter family of canonical volume measures on Lorentzian (pre-)length spaces. In the 
Lorentzian setting, this allows us to define a geometric dimension --- akin to the Hausdorff dimension for metric spaces --- 
that distinguishes between e.g.\ spacelike and null subspaces of Minkowski spacetime.  The volume measure corresponding to 
its geometric dimension gives a natural reference measure on a synthetic or limiting spacetime, and allows us to define what 
it means for such a spacetime to be {\em collapsed} (in analogy with metric measure geometry and the theory of Riemannian 
Ricci limit spaces).
As a crucial tool we introduce a doubling condition for causal diamonds and a notion of causal doubling measures. Moreover, 
applications to continuous spacetimes and connections to synthetic timelike curvature bounds are given.
\bigskip

\noindent
\emph{Keywords:} metric geometry, Lorentz geometry, Lorentzian length spaces, Hausdorff dimension, synthetic curvature 
bounds, 
continuous spacetimes, doubling measures
\medskip

\noindent
\emph{MSC2020:}
28A75, %Length, area, volume, other geometric measure theory 
51K10, %Synthetic differential geometry
53C23, %Global geometric and topological methods (à la Gromov); differential geometric analysis on metric spaces
53C50, %Lorentz manifolds, manifolds with indefinite metrics 
53B30, %Lorentz metrics, indefinite metrics
53C80, %Applications to physics 
83C99 %None of the above, but in this section (83C GR)

\end{abstract}
 
 \newpage
 \tableofcontents

\section{Introduction}

One of the current topics on the interface between physics and mathematics concerns the language in which General Relativity 
is formulated --- differential geometry ---  and physically relevant models that might not fit into that language. Naturally, 
as a geometric theory General Relativity is developed for smooth Lorentzian manifolds; however models of stars, for 
example, may have non-smooth matter distributions, in which case the Lorentzian metric that the Einstein equations use to 
describe such a spacetime cannot be smooth either. Moreover, fundamental to Einstein's theory of relativity is the notion of 
\emph{curvature}. In the Riemannian or metric setting analogs of sectional and Ricci curvature bounds can be defined in a 
very general setting, allowing for non-smoothness or even absence of the Riemannian metric. The common framework in the 
metric setting are \emph{length spaces} and there one can introduce curvature bounds via triangle or angle comparison leading 
to the notion of Alexandrov- or CAT$(k)$-spaces (\cite{BBI:01, BH:99}). Moreover, on a metric measure space $(X,d,m)$, i.e., 
a metric or length space with Borel measure $m$, (lower) Ricci curvature bounds are introduced using techniques from the 
theory of \emph{optimal transport} via convexity properties of entropy functionals along geodesics of probability 
measures \cite{LV:09,Stu:06a,Stu:06b, Vil:09}). This gives rise to the theory of $\mathsf{CD}(K,N)$-spaces, which 
roughly speaking, are measure spaces $(X,d,m)$ %satisfies the $\mathsf{CD}(K,N)$-condition it means that it 
which behave as though they have Ricci curvature bounded below by $K$ and dimension bounded above by $N$. Such 
generalizations of curvature bounds to the \emph{synthetic} metric setting have proven extremely fruitful, even 
yielding profound insights concerning smooth Riemannian manifolds, cf.\ e.g.\ \cite{CC:97, CC:00a,CC:00b, CJN:21, Deng21}.

Inspired by this, a line of research was initiated in \cite{KS:18}, where the notion of \emph{Lorentzian length spaces} 
was introduced and shown to %it was established that this can 
serve as a common framework for analogs of timelike and sectional curvature bounds 
while, at the same time, capturing all the relevant parts of causality theory and Lorentzian geometry. This line of research 
was extended recently in a seminal work of Cavalletti and Mondino \cite{CM:20}, where they build on ideas from the 
smooth Lorentzian setting \cite{McC:20, MS:21} to develop the concept of lower 
timelike Ricci curvature bounds in the framework of Lorentzian (pre-)length spaces. We give a brief introduction to \LLSn s in Subsection \ref{subsec-lls}.

Furthermore, spacetimes of low regularity and related topics have been the focus of a very active field of investigation in 
mathematical physics. One aim is to understand the \emph{cosmic censorship conjecture} of Penrose, cf.\ e.g.\ 
\cite{Ise:15}, which, roughly, states that the maximal globally hyperbolic development of generic initial data for the 
Einstein equations is inextendible as a suitably regular spacetime. To this end an intensive study of causality 
theory (cf.\ \cite{Min:19b}) for Lorentzian metrics of low regularity was initiated by Chru\'sciel and Grant in 
\cite{CG:12} and then pursued by various researchers, see e.g.\ \cite{Min:15,KSS:14,KSSV:14,Sae:16}. In particular, 
Chru\'sciel and Grant showed in \cite{CG:12} that for Lorentzian manifolds with merely continuous metrics pathologies in the 
causality do occur: For example, there are so-called \emph{causal bubbles}, where the boundary of the lightcone is not a 
hypersurface but has positive measure. This causal bubbling phenomenon and the failure of the so-called \emph{push-up 
property} has been investigated in \cite{GKSS:20}. A groundbreaking result for continuous Lorentzian metrics is the 
$\Con^0$-inextendibility of the Schwarzschild solution to the Einstein equations, which was shown by Sbierski in 
\cite{Sbi:18} and has initiated further research into low regularity (in-)extendibility and causality (e.g.\ \cite{GLS:18, 
GL:17, DL:17, GL:18, Sbi:21}) and in particular \cite{GKS:19}, where an inextendibility result 
for \LLSn s is given. As mentioned above, the importance of such low regularity (in-)extendibility results is rooted in 
the cosmic censorship conjecture.

Another very active field of mathematical general relativity, where low regularity features prominently in current lines 
research, is the study of \emph{singularities} and \emph{singularity theorems}, predicting causal geodesic 
incompleteness under certain causality and curvature hypotheses. The classical singularity theorems of Hawking and 
Penrose have only recently been successfully generalized to the $\Con^{1,1}$-setting (\cite{KSSV:15, KSV:15, GGKS:18}). The 
$\Con^{1,1}$-regularity class of the metric is a natural class to consider as the Riemann curvature tensor is still almost 
everywhere defined and locally bounded. Notably, these singularity theorems have recently been extended even further to the 
regularity class $\Con^1$ in \cite{Gra:20}, where the additional difficulty of non-unique geodesics had to be overcome. 
Moreover, the first singularity theorems in the synthetic setting have been proven in \cite{AGKS:21} for generalized cones, 
while an analog of the Hawking singularity theorem has been established in \cite{CM:20} using the new notion of lower timelike 
Ricci curvature bounds for \LpLSn s.

Let us also briefly discuss another natural extension of smooth Lorentzian geometry, namely cone structures on 
differentiable manifolds and Lorentz-Finsler spacetimes, see \cite{FS:12,BS:18,Min:19a, MS:19, LMO:21}, which 
provide %} %proved to be 
%a significant extension of
new perspective on the field. 

Finally, there have been several approaches to a synthetic or axiomatic description of (parts of) Lorentzian 
geometry and causality in the past: The \emph{causal spaces} of Kronheimer and Penrose \cite{KP:67}, and the \emph{timelike 
spaces} of Busemann \cite{Bus:67}. A closely related direction of research is 
the recent approach of Sormani and Vega \cite{SV:16} and its further development by Allen and Burtscher in \cite{AB:21} of 
defining a metric on a (smooth) spacetime that is compatible with the causal structure in case the spacetime admits a time 
function satisfying an anti-Lipschitz condition. Recently, this approach has been extended to the setting of \LLSn s in 
\cite{KS:21} and it was shown that these two approaches are in a strong sense compatible.
\bigskip

One goal of our work is to %find a suitable notion of what a
identify any preferred choices of reference measure (and dimension) on a Lorentzian length space, 
analogous to the Hausdorff measure and dimension of a metric space.
{G}iven the centrality of these Hausdorff notions in the theory of metric spaces $(X,d)$ and Riemannian geometry, it is 
clear that one needs a satisfactory answer to develop the synthetic Lorentzian theory further. To illustrate this we review 
several results from the Riemannian or metric space case below.

Let $(M_k, g_k, p_k)_k$ be a sequence of pointed (smooth) Riemannian manifolds of the same (topological) dimension $N$ and 
Ricci curvature uniformly bounded below. Assuming that $(M_k,g_k,p_k)\to (X,d,p)$ in the Gromov-Hausdorff sense, where 
$(X,d,p)$ is a pointed length space, Cheeger and Colding \cite{CC:97} established that either
\begin{enumerate}
 \item the volumes of a balls or fixed radii collapse, i.e., $\vol^{g_k}(B_1^{d^{g_k}}(p_k))\to 0$ or
 \item they do not collapse, i.e., $\inf_k \vol^{g_k}(B_1^{d^{g_k}}(p_k))>0$.
\end{enumerate}
Thus, the notion of a \emph{collapsed} or \emph{non-collapsed} limit of Riemannian manifolds was introduced into the 
literature, with the intuition that non-collapsed limit spaces are more \emph{regular} (in a suitable sense). In particular, 
a non-collapsed limit space $(X,d)$ has Hausdorff dimension $N$, its $N$-dimensional Hausdorff measure 
is positive, i.e., $\mathcal{H}^N(X)>0$, and the volume measures $\vol^{g_k}$ converge to $\mathcal{H}^N$ in (a suitably 
adapted) weak-$*$ sense. This has been promoted to a definition in the synthetic Riemannian ($\mathsf{RCD}$) setting by De 
Philippis and Gigli in \cite{dePG:18}: An $\mathsf{RCD}(K,N)$-space $(X,d,m)$ is \emph{non-collapsed} if $m=\mathcal{H}^N$.
Here, the $\mathsf{RCD}$-condition is a strengthening of the $\mathsf{CD}$ condition, introduced by Ambrosio, Gigli and 
Savar\'e in \cite{AGS:14}, to ensure the Riemannian character of the metric measure space (e.g.\ ruling out spaces of 
Finsler type). Recently, Bru\'e and Semola established in \cite{BS:20} the following remarkably strong result: Let $(X,d,m)$ 
be a metric measure space satisfying the $\mathsf{RCD}(K,N)$-condition for $K\in\R$, $N\in(1,\infty)$. Then there is a 
$k\in\N$ with $1\leq k\leq N$ and $R\subseteq X$ (the \emph{regular set}) such that $m\rvert_{R}$ is absolutely continuous 
with respect to $\mathcal{H}^k$ and $m(X\backslash R)=0$. Yet more recently, Brena, Gigli, Honda and Zhu \cite{BGHZ:21} 
show that \emph{weakly non-collapsed} $\mathsf{RCD}$-spaces are, in fact, non-collapsed.

Consequently, our goal is to introduce to the nonsmooth Lorentzian setting a natural family of Borel measures $(\V^N)_N$ 
indexed by a dimensional parameter $N$, that play a role analogous to the family of Hausdorff measures on a metric 
space. 
Having such a family allows one to define a geometric dimension akin to the Hausdorff dimension and --- by analogy with 
the $\mathsf{RCD}$ case --- a notion of {\em non-collapsed} $\mathsf{TCD}$ %wTCD^e_p}(K,N)$ 
spaces. At the same time,  our construction is novel and potentially useful even in the smooth setting.
For example,  the geometric dimension of a subspace of Minkowski space may or may not agree with its algebraic
dimension,  depending on whether or not the restriction of the metric to this subspace has a definite, semidefinite, or indefinite sign.
This provides a Lorentzian analog to the well-known fact that the Hausdorff and topological dimensions do not generally agree in subRiemannian geometries \cite{Mitchell75} \cite{GhezziJean15}. 
Our geometric measure and dimensions also provide a natural, invariant tool for quantifying the partial regularity of particular 
solutions to the Einstein equation,  analogous to the parabolic Hausdorff measure introduced by Caffarelli, Kohn and Nirenberg
to limit the size and dimension of any singular set for solutions of the Navier-Stokes equation \cite{CaffarelliKohnNirenberg82}.

 Finally, another aspect of 
our construction is its possible relation to models of quantum gravity. To be precise, a central paradigm of quantum 
gravity is that the spacetime geometry can be recovered from the causal structure and a volume element, i.e., the geometry 
and curvature is an \emph{emergent phenomenon}, see the result of Hawking, King and McCarthy \cite{HKM:76}. This is, in 
particular, the starting point of the causal set approach (\cite{BLMS:87,Sur:19}) to quantum gravity. Another approach are 
\emph{causal Fermion systems} (\cite{Fin:18}, \url{https://causal-fermion-system.com/}), see also the discussion in 
\cite[Subsec.\ 5.3]{KS:18}. Our approach remains feasible when one goes beyond Lorentzian manifolds but retains a 
notion of causal structure and time separation. Indeed, we show that one can define a one-parameter family of 
volume element(s) from just the causal structure and a notion of time separation.
\bigskip

In the following subsection we present the main results of our article and outline the structure of the paper.

\subsection{Main results and outline of the article}
The plan of the paper is as follows. In Subsection \ref{subsec-not} we introduce the relevant background on Lorentzian 
geometry and fix some notations and conventions. Then, to conclude the introduction we give a 
brief review of the theory of \LLSn s in Subsection \ref{subsec-lls}.

In Section \ref{sec-con-mea} we construct a natural family $(\V^N)_N$ of Borel measures on very general Lorentzian spaces, 
indexed by a dimensional parameter $N$, see Proposition \ref{pro-con-mea}. Then in Section \ref{sec-dim} we use this family 
to define a notion of geometric dimension $\dims$. This culminates in Corollary \ref{C:dimension}, which shows the value $N=\dims$ of
this geometric dimension to be characterized by $\V^k(X)=\infty$ and $\V^K(X)=0$ for all $0 \le k<N<K$.

%characterization of geometric dimension:

% \setcounter{section}{3}
% \setcounter{thm}{3}

%\begin{cor}[Equivalent characterizations of dimension]
 % Let \Xll be a locally $d$-uniform \LpLSn. Then $N=\dims(X)$ if and only if $\V^k(X)=\infty$ and $\V^K(X)=0$ 
    %for all $0 \le k<N<K <\infty$.
 % Moreover, 
 % \begin{equation}
  % \dims(X)= {\sup} \{L\geq 0: \V^L(X)=0\}\,.
 % \end{equation}
%\end{cor}

Next, in Subsection \ref{subsec-len}, we investigate the relation between the length of causal curves and the 
one-dimensional Lorentzian measures of their images. In particular, images of null curves have zero geometric dimension, 
while their Hausdorff dimension is one, see Corollary \ref{cor-nul-0-dim}. Moreover, the length agrees with the 
one-dimensional measure for globally hyperbolic \LLSn s (Proposition \ref{prop-1d-mea-len}).

% \setcounter{thm}{7}
%\begin{prop}[One-dimensional measure versus length]
 % Let \Xll be a strongly causal \LpLSn. Let $\gamma\colon[a,b]\rightarrow X$ be a future directed causal curve. Then
% \begin{equation}
 % \V^1(\gamma([a,b]))\leq L_\tau(\gamma)\,.
 %\end{equation}
 %Furthermore, if all causal diamonds $J(x,y)$ in $X$ are closed (as e.g.\ if $X$ is globally hyperbolic), then the length of 
%a causal curve agrees with the one-dimensional measure $\V^1$ of its image, i.e., $\V^1(\gamma([a,b]))= L_\tau(\gamma)$.
%\end{prop}

Section \ref{sec-dim} concludes with Subsection \ref{subsec-dim-min}, where illustrate the geometric dimension on linear 
subspaces of Minkowski spacetime. In particular, we establish that for spacelike subspaces the algebraic and geometric 
dimensions agree,  whereas for null subspaces the former exceeds the latter by one (as in the above mentioned case of null 
lines).
%algrebraic dimension exceeds the geometric dimension is one less than the algebraic dimension.

The main part of our article is devoted to applying the theory we develop
to spacetimes with continuous metrics \ref{sec-con-st}. 
In Subsection \ref{subsec-dou-cd} we introduce a notion of \emph{enlargement} of causal diamonds in an axiomatic way. 
Then in Subsection \ref{subsec-cyl-nhd} we establish the existence of cylindrical neighborhoods adapted to our specific 
purposes, which we then use to establish that the volume measure $\vol^g$ induced by a continuous Lorentzian metric $g$
on a spacetime $(M,g)$ of topological dimension $n$ agrees with our Lorentzian measure 
$\V^n$ in Subsection \ref{subsec-com-st}, Theorem \ref{thm-mea-vol-equ}.

%\begin{thm}[Metric versus geometric volume]
% The volume measure $\vol^g$ agrees with the measure $\V^n$ constructed in Proposition \ref{pro-con-mea}, i.e., 
%$\vol^g=\V^n$, where $(M,g)$ is viewed as a \LLSn.
%\end{thm}

After that, we introduce the notion of \emph{causally doubling measure} in Subsection \ref{subsec-cou-dou-mea} and relate it 
to the geometric dimension. In particular, a doubling constant yields a bound from above on the geometric dimension, see 
Theorem \ref{thm-dou-syn-dim}.

Finally, in Section \ref{sec-ric} we relate the synthetic timelike Ricci curvature bounds of Cavalletti and Mondino 
\cite{CM:20}, by showing for any lower bound $K \in \R$ and transport exponent $0< {\mathsf p}<1$, that --- in the 
absence of causal bubbles and timelike branching --- their weak entropic timelike curvature-dimension (respectively timelike 
measure contraction) property $\mathsf{wTCD^e_p}(K,N-1)$ (respectively $\mathsf{TMCP^e_p}(K,N)$) for a continuous globally 
hyperbolic spacetime $(M,g)$ provides a bound 
$\dim^\tau \le N$ on its geometric dimension; see Theorem \ref{T:synthetic vs geometric dimension} and Corollary 
\ref{cor-n-leq-N}. This motivates us to define:

\begin{defi}[Non-collapsed $\mathsf{TCD}$-space]
 A $\mathsf{wTCD_p^e}(K,N)$-space $X$ with reference measure $m$ is called \emph{non-collapsed} if $m=c\,\V^N$, for some 
constant $c>0$.
\end{defi}

%\begin{thm}[Synthetic vs geometric dimension]
% Let $(M,g)$ be a continuous, globally hyperbolic and causally plain spacetime whose induced measured \LLS 
%$(M,d^h,\vol^g,\ll,\leq,\tau)$ satisfies the $\mathsf{wTCD^e_p}(K,N)$ condition for some $K\in\R$, $N\in[1,\infty)$ and 
%$\mathsf{p}\in(0,1)$.
% Then
% \begin{align}
 % n=\dims(M)\leq N+1\,.
% \end{align}
%\end{thm}
%This can be sharpened, under stronger hypotheses, to $n\leq N$, see Corollary \ref{cor-n-leq-N}.

A technical approximation result for spacetimes with continuous metrics --- used in the proof of Theorem 
\ref{T:synthetic vs geometric dimension} --- is outsourced to the Appendix \ref{app}.

\subsection{Notation and conventions}\label{subsec-not}
We fix some notation and conventions: For $\kappa\geq 0$ we denote by $\mathcal{H}^\kappa$ (and $\mathcal{H}^\kappa_\delta$), 
the $\kappa$-dimensional Hausdorff %(and approximate Hausdorff) 
outer measure (at scale $\delta>0$) on a metric space $(X,d)$.

Throughout this article, $M$ denotes a smooth, connected, second countable Hausdorff manifold. We fix a smooth complete Riemannian metric $h$ on $M$ and denote the induced (length) metric by $d^h$. Unless otherwise stated, $g$ is a continuous Lorentzian metric on $M$, and we assume that $(M,g)$ is time-oriented (i.e., there exists a continuous timelike vector field $\xi$, that is, $g(\xi,\xi)<0$ everywhere). We call $(M,g)$ a \emph{continuous spacetime}. Moreover, note that our convention is that $g$ is of signature $(-+++\ldots)$. A vector $v\in TM$ is called
\begin{align}
%\left.
\begin{cases}
 \text{\emph{timelike}}\\
 \text{\emph{null}}\\
 \text{\emph{causal}}\\
 \text{\emph{spacelike}}
\end{cases}
%\right\}
 \text{\quad if \qquad}
 g(v,v)\qquad
 \begin{cases}
  < 0\,,\\
  = 0 \text{ and } v\neq 0\,,\\
  \leq 0\text{ and } v\neq 0\,,\\
  >0\text{ or } v=0\,.
 \end{cases}
\end{align} 
Furthermore,
%\marginpar{\blue RM: Please clarify: does causally convex mean every future-directed geodesic with endpoints in $V$ lies in $U$;  or does it mean every pair of causally
%related points in $V$ are connected by a geodesic in $U$?}
%
%\marginpar{RM: you want to define future/past directed even for noncausal vectors? Also, the zero vector is not considered causal?}
%\end{align} Furthermore,
a {causal} vector $v\in TM$ is \emph{future/past directed} if $g(v,\xi)<0$ (or $g(v,\xi)>0$, respectively), where $\xi$ is 
the global timelike vector field giving the time orientation of the spacetime $(M,g)$. Analogously, one defines the causal 
character and time orientation of sufficiently smooth curves into $M$. The \emph{(Lorentzian) length} $L^g(\gamma)$ of a 
causal curve $\gamma\colon[a,b]\rightarrow M$ is defined as
\begin{equation}
 L^g(\gamma):=\int_a^b \sqrt{-g(\dot\gamma,\dot\gamma)}\,.
\end{equation}
The \emph{time separation function} is defined as follows: for $x,y\in M$ set
\begin{equation}
 \tau(x,y):=\sup\, \{ L^g(\gamma): \gamma \text{ future directed causal from } x \text{ to } y\}\cup\{0\}\,.
\end{equation}

Two events $x,y\in M$ are \emph{timelike} related if there is a future directed timelike curve from $x$ to $y$, denoted by 
$x\ll y$. Analogously, $x\leq y$ if there is a future directed causal curve from $x$ to $y$ or $x=y$. The spacetime $(M,g)$ 
is \emph{strongly causal} if for every point $p\in M$ and for every neighborhood $U$ of $p$ there is a neighborhood $V$ of 
$p$ such that $V\subseteq U$ and all causal curves with endpoints in $V$ are contained in $U$, (in which case $V$ is 
called \emph{causally convex} in $U$). Furthermore, a spacetime is \emph{causally plain} if there are no \emph{causal 
bubbles}, i.e., the boundary of the lightcone is a set of zero Lebesgue measure, cf. \cite[Def.\ 1.16]{CG:12}. In particular, 
in causally plain spacetimes the push-up principle holds: $\tau(p,q)>0$ if and only if $p\ll q$, cf.\ \cite[Thm.\ 
2.12]{GKSS:20}. This causality condition is only relevant in low regularity: every spacetime with locally Lipschitz 
continuous metric is causally plain (\cite[Cor.\ 1.17]{CG:12}), hence any smooth spacetime.

Given Lorentzian metrics $g_1$, $g_2$, we say that $g_2$ has \emph{strictly wider light cones\/} than $g_1$, denoted by $g_1\prec g_2$, if for any tangent vector $X \neq 0$, $g_1(X,X) \le 0$ implies that $g_2(X,X)<0$ (cf.~\cite[Sec.\ 3.8.2]{MS:08},~\cite[Sec.\ 1.2]{CG:12}). Thus any $g_1$-causal vector is timelike for $g_2$. For $C>0$ we denote by $\eta_C$ the (scaled) Minkowski metric on $\R^{n}$, i.e.,
\begin{equation}
 \eta_C:= -C^2\, \mathrm{d}t^2 + (\mathrm{d}x_1)^2 + \ldots + (\mathrm{d} x_{n-1})^2\,.
\end{equation}
Note that for $C > 1$ one has $\eta_{C^{-1}}\prec \eta_C$.

\subsection{A concise review of Lorentzian (pre-)length spaces}\label{subsec-lls}
We conclude the introduction by giving a brief review of \LLSn s \cite{KS:18}.
\medskip

Let $X$ be a set endowed with a preorder $\leq$ and a transitive relation $\ll$ contained in $\leq$. If $x\ll y$ 
respectively $x\le y$ we call $x$ and $y$ timelike respectively causally related. The \emph{causal/timelike future} of $x\in X$ is defined as
\begin{equation}
 I^+(x):=\{y\in X: x\ll y\}\,,\qquad \qquad J^+(x):=\{y\in X: x\leq y\}\,.
\end{equation}
Analogously, one defines the \emph{causal/timelike past} $I^-(x)/J^-(x)$ of $x\in X$ as the set of points $y\in X$ such that $y\ll x$ or $y\leq x$, respectively. Moreover, the \emph{chronological/causal diamond} with vertices $x,y\in X$ is 
\begin{equation}
 I(x,y):=I^+(x)\cap I^-(y)\,,\qquad\qquad J(x,y):=J^+(x)\cap J^-(y)\,.
\end{equation}
A subset $A\subseteq X$ is \emph{causally convex} if for all $x,y\in A$ the causal diamond $J(x,y)$ is contained in $A$, i.e., $J(x,y)\subseteq A$.

If $X$ is, in addition, equipped with a metric $d$ and a lower semicontinuous map $\tau \colon X\times X \to [0, \infty]$ that satisfies the reverse triangle 
inequality
\begin{equation}
 \tau(x,y) + \tau(y,z)\leq \tau(x,z)\,
\end{equation}
(for all $x\leq y\leq z$), as well as $\tau(x,y)=0$ if $x\nleq y$ and $\tau(x,y)>0 \Leftrightarrow x\ll y$, then \Xll is called a \emph{Lorentzian pre-length space\/} and $\tau$ is called the \emph{time separation function\/} of $X$. Note that these assumptions on \Xll yield that the \emph{push-up principle} holds, i.e., for all $x,y,z\in X$ with $x\ll y\leq z$ or $x\leq y\ll z$ one has that $x\ll z$.

A non-constant curve $\gamma \colon I\rightarrow X$ on an interval $I \subset \R$ is called 
future-directed \emph{causal} (respectively \emph{timelike}) if $\gamma$ is locally Lipschitz continuous with respect to $d$ and if for all 
$t_1,t_2\in I$ with $t_1<t_2$ we have $\gamma(t_1)\leq\gamma(t_2)$ (respectively $\gamma(t_1)\ll\gamma(t_2)$). It
is called \emph{null\/} if, in addition to being causal, no two points on the curve are related with respect 
to $\ll$. For strongly causal continuous Lorentzian metrics, this notion of causality
coincides with the usual one (\cite[Prop.\ 5.9]{KS:18}). Moreover, it should be emphasized that for a continuous spacetime $\ll$-timelike curves need not be $g$-timelike curves, see \cite[Ex.\ 2.22]{KS:18}. However, the chronological and causal futures and pasts with respect to both notions agree (by the definition of $\ll$, $\leq$ in continuous spacetimes). Thus, there is no ambiguity in denoting them by $I^\pm(x)$ and $J^\pm(x)$, respectively.

In analogy to the theory of metric length spaces, the length of a causal curve is defined via the time separation
function:  For $\gamma \colon [a,b]\rightarrow X$ future-directed causal we set
\begin{equation}
L_\tau(\gamma):=
\inf\Big\{\sum_{i=0}^{N-1} \tau(\gamma(t_i),\gamma(t_{i+1})): a=t_0<t_1<\ldots<t_N=b,\ N\in\N\Big\}.
\ 
\end{equation}
For smooth and strongly causal spacetimes $(M,g)$ this notion of length coincides with the usual one:
$L_\tau(\gamma)=L^g(\gamma)$ (\cite[Prop.\ 2.32]{KS:18}).
A future-directed causal curve $\gamma \colon [a,b]\rightarrow X$ is 
\emph{maximal\/} if it realizes the time separation, i.e., if $L_\tau(\gamma) = \tau(\gamma(a),\gamma(b))$.
\medskip

Standard causality conditions can also be imposed on Lorentzian pre-length spaces, and substantial parts of the causal ladder 
(\cite{MS:08}) continue to hold in this general setting, cf.\ \cite[Subsec.\ 3.5]{KS:18} and \cite{ACS:20}. What is needed in this work are the following notions:

A causal space $(X,\ll,\leq)$ is called \emph{causal} if the relation $\leq$ is a partial order; this rules out, e.g.\ closed timelike loops. A \LpLS \Xll is called \emph{non-totally imprisoning} if for every compact $K\subseteq X$ there is a constant $C\geq 0$ such $L^d(\gamma)\leq C$ for all causal curves $\gamma$ contained in $K$. Here $L^d$ denotes the (variational) length of a curve. A \LpLS \Xll is called \emph{strongly causal} if the metric topology of $(X,d)$ agrees with the topology
generated by the chronological diamonds $I(x,y)$ ($x,y\in X$) --- also called the \emph{Alexandrov topology}. Finally, 
a non-totally imprisoning \LpLS is called \emph{globally hyperbolic} if all causal diamonds $J(x,y)$ ($x,y\in X$) are 
compact.

We need the following auxiliary result, which has not been established yet in the setting of \LpLSn s --- only for \LLSn s, 
see \cite[Thm.\ 3.26]{KS:18} (cf.\ also \cite[Lem.\ 1-2]{KP:67} for a similar result in a slightly different setting, which 
uses the topology generated by the chronological futures and pasts $I^\pm(x)$ but called Alexandrov topology there).

%\marginpar{\blue RM: [Square brackets set titles in Roman]}
\begin{lem}[Strong causality implies causality]\label{lem-str-cau-cau}
Let \Xll be a strongly causal \LpLSn, then it is causal.
\end{lem}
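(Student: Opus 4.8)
The plan is to prove antisymmetry of $\leq$: since $\leq$ is assumed to be a preorder, it is already reflexive and transitive, so the only thing separating it from a partial order is antisymmetry, and this is exactly what \emph{causal} requires. I would therefore argue by contradiction, assuming there exist $x\neq y$ in $X$ with $x\leq y$ and $y\leq x$, and derive a contradiction from strong causality.

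As a warm-up I would first dispose of the situation where the loop is timelike. If $x\ll y$ (the case $y\ll x$ being symmetric), then combining with $y\leq x$ via the push-up principle gives $x\ll y\leq x$, hence $x\ll x$. This forces $\tau(x,x)>0$, and feeding the chain $x\leq x\leq x$ into the reverse triangle inequality yields $2\tau(x,x)\leq\tau(x,x)$, so $\tau(x,x)\leq 0$, a contradiction. Thus any such loop must be purely causal with $\tau(x,y)=\tau(y,x)=0$, which is the genuine case and the one where the topological hypothesis must enter.

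The key observation I would exploit is that \emph{every chronological diamond containing $x$ also contains $y$}, and vice versa. Indeed, if $x\in I(a,b)$, i.e. $a\ll x\ll b$, then push-up applied to $a\ll x\leq y$ gives $a\ll y$, while push-up applied to $y\leq x\ll b$ gives $y\ll b$; hence $a\ll y\ll b$, so $y\in I(a,b)$. (Note this argument uses only $x\leq y$ and $y\leq x$, so it in fact already subsumes the warm-up.) To finish, I would invoke strong causality. Since $(X,d)$ is metric it is Hausdorff, so there is a metric-open $U\ni x$ with $y\notin U$, and we may take $U\neq X$. By strong causality $U$ is open in the Alexandrov topology, so $x$ lies in some finite intersection $W=\bigcap_{i=1}^{k}I(a_i,b_i)\subseteq U$ of chronological diamonds, with $k\geq 1$ (otherwise $W=X\subseteq U$ would force $U=X$). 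By the preceding observation $y$ lies in each $I(a_i,b_i)$, whence $y\in W\subseteq U$, contradicting $y\notin U$. Therefore $x=y$, $\leq$ is antisymmetric, and $X$ is causal.

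I expect the main obstacle to be precisely the purely causal case isolated in the second paragraph: there the reverse triangle inequality alone carries no information (all the relevant $\tau$-values vanish), so one genuinely needs the topological content of strong causality rather than any metric/length estimate. A secondary delicate point is that the chronological diamonds are only a \emph{subbasis} for the Alexandrov topology, so one must argue with finite intersections rather than a single diamond and check the empty-intersection edge case $k=0$; the push-up argument is robust enough to handle each diamond in the intersection uniformly, which is what makes the finish go through.
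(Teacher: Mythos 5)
Your proposal is correct and follows essentially the same route as the paper: both use that strong causality makes the Alexandrov topology Hausdorff, take a finite intersection of chronological diamonds around $x$ separating it from $y$, and apply push-up to $a_i\ll x\leq y$ and $y\leq x\ll b_i$ to force $y$ into that intersection, yielding a contradiction. Your explicit handling of the empty-intersection case and the (dispensable) timelike warm-up are minor additions of care, not a different argument.
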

\begin{pr}
 Assume that \Xll is not causal, i.e., there are $x\neq y$ such that $x\leq y \leq x$. By assumption the Alexandrov topology agrees with the metric topology, hence it is Hausdorff. Consequently, there is a neighborhood $U$ of $y$ and there are $x^\pm_1,\ldots, x^\pm_N\in X$ such that $V:=I(x^-_1,x^+_1)\cap \ldots I(x^-_N,x^+_N)$ is a neighborhood of $x$ and $U\cap V=\emptyset$. However, for every $i=1,\ldots,N$ we have that $x^-_i\ll x \leq y \leq x\ll x^+_i$, so $x^-_i\ll y \ll x^+_i$ by push-up, and thus
 $y\in U\cap V$ --- a contradiction.
\end{pr}

Lorentzian length spaces are close analogues of metric length spaces in the sense that the time separation function
can be calculated from the length of causal curves connecting causally related points. A \LpLS that satisfies some 
additional technical assumptions (cf.\ \cite[Def. 3.22]{KS:18}) is called a \emph{\LLSn\/} if
for any  $x,y\in X$
\begin{equation*}
\tau(x,y)= \sup\{L_\tau(\gamma):\gamma \text{ future-directed causal from }x \text{ to } y\}\cup\{0\}\,.
\end{equation*}

Any smooth strongly causal spacetime is an example of a Lorentz\-ian length space. More generally,
spacetimes with low regularity metrics \cite[Sec.\ 5]{KS:18}, certain Lorentz-Finsler spaces \cite{Min:19a} and warped products of a line with a (metric) length space \cite{AGKS:21} provide further examples.

Finally, one can then define curvature bounds in terms of triangle comparison with the two-dimensional Lorentzian model 
spaces of constant curvature, cf.\ \cite[Sec.\ 4]{KS:18} or Ricci curvature bounds via convexity properties of entropy 
functionals on the space of probability measures on a \LpLSn, cf.\ \cite{CM:20} (and Section \ref{sec-ric}).

\section{Construction of the measures}\label{sec-con-mea}
The construction in this section works for slightly more general spaces than \LpLSn s. In fact, the time separation function $\tau$ need not be lower semicontinuous, the reverse triangle inequality and the push-up principle, i.e., $\tau(p,q)>0$ if and only if $p\ll q$, are not required to hold. Therefore, in this section we consider a set $X$, a transitive and reflexive relation $\leq$ on $X$, and $\tau\colon X\times X\rightarrow [0,\infty]$. Of course, later we consider Lorentzian (pre-)length spaces with their time separation function $\tau$.

\begin{defi}[Volume of a causal diamond]\label{def-vol-cd}
Let $N\in[0,\infty)$. Set $J(p,q):=J^+(p)\cap J^-(q)$ for $p,q\in X$ and for $\tau(p,q)<\infty$ set
 \begin{equation}\label{eq-def-rho}
   \rho_N(J(p,q)):=\omega_N\, \tau(p,q)^{N}\,,
 \end{equation}
 where $\omega_N:=\frac{\pi^{\frac{N-1}{2}}}{N\,\Gamma(\frac{N+1}{2})2^{N-1}}$ and $\Gamma(x):=\int_0^\infty t^{x-1}e^{-t}\, dt$ is Euler's gamma function.
\end{defi}
% \marginpar{\blue RM: I'm be tempted to subscript the $N$ on $\rho$ as we did for $\omega$, but will not make the change elsewhere until we agree}

The motivation for defining $\rho_N$ in this way comes from the volumes of causal diamonds in $N$-dimensional Minkowski spacetime, when $N\geq 2$ is an integer.
Recall the Lebesgue volume of the unit ball $B_1(0) \subset \R^{N-1}$ is $\alpha_{N-1}:=\mathcal{H}^{N-1}(B_1(0)) = \pi^{\frac{N-1}{2}}/\Gamma(\frac{n+1}{2})$; the remaining factors $N\,2^{N-1}$
 in $\omega_N$ come from the volume formula for a solid cone with base $B_{\tau/2}(0) \subset \R^{N-1}$.
 
\begin{lem}[Volumes of causal diamonds in Minkowski spacetime]
 Let $N\geq 2$ be an integer and for $\tau(p,q)<\infty$ let $\tilde J(\tilde p,\tilde q)$ be a causal diamond in $N$-dimensional Minkowski spacetime $\R^{N}_1$ with $\tau(p,q)=\tilde\tau(\tilde p,\tilde q):=\tau^{\R^{N}_1}(\tilde p,\tilde q)$ then
 \begin{equation}
   \rho_N(J(p,q))=\vol^{\R^{N}_1}(\tilde J(\tilde p,\tilde q))=\omega_N\, \tau(p,q)^{N}\,.
 \end{equation}
 Moreover, the the volume $\rho_N(J(p,q))=\vol^{\R^{N}_1}(\tilde J(\tilde p,\tilde q))$ is independent of the choice of $\tilde p,\tilde q$. In particular, in case $\tau(p,q)>0$, we can find $p',q'\in\R^N_1$ with $\tilde\tau(p',q')=\tilde\tau(\tilde p,\tilde q)=\tau(p,q)$ and $p',q'$ only differ in the time component, i.e., $\proj_{\R^{N-1}}(p')=\proj_{\R^{N-1}}(q')$.
\end{lem}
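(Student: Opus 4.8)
The plan is to use the Poincar\'e invariance of Minkowski spacetime to reduce to a normalized pair of endpoints, and then to compute the volume of the resulting diamond by an elementary integration. I would first dispose of the degenerate case $\tau(p,q)=0$: then $\tilde p,\tilde q$ are not timelike related, so $\tilde J(\tilde p,\tilde q)$ is empty, a point, or a segment of the light cone, hence has vanishing Lebesgue measure, in agreement with $\omega_N\cdot 0^N=0$. For the main case $\tau(p,q)>0$ I would begin by proving the ``in particular'' assertion: since $\tilde p\ll\tilde q$ with $-\eta_1(\tilde q-\tilde p,\tilde q-\tilde p)=\tilde\tau(\tilde p,\tilde q)^2=\tau(p,q)^2$, the future timelike vector $\tilde q-\tilde p$ lies on a hyperboloid on which $O(1,N-1)$ acts transitively; composing a translation taking $\tilde p$ to the origin with a suitable Lorentz boost produces $p'=0$ and $q'=(\tau(p,q),0,\ldots,0)$, which differ only in the time component and satisfy $\proj_{\R^{N-1}}(p')=\proj_{\R^{N-1}}(q')$.

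The second step is the explicit computation for this normalized pair. Writing a point as $(t,\vec x)\in\R\times\R^{N-1}$ and setting $\tau:=\tau(p,q)$, the conditions $(t,\vec x)\in J^+(p')$ and $(t,\vec x)\in J^-(q')$ read $t\geq|\vec x|$ and $\tau-t\geq|\vec x|$, so
\[
 \tilde J(p',q')=\bigl\{(t,\vec x): |\vec x|\leq t\leq \tau-|\vec x|\bigr\},
\]
which in particular forces $|\vec x|\leq\tau/2$. By Fubini the volume equals $\int_{|\vec x|\leq\tau/2}(\tau-2|\vec x|)\,d\vec x$; passing to polar coordinates in $\R^{N-1}$, where the unit sphere $S^{N-2}$ has surface area $(N-1)\alpha_{N-1}$, reduces this to $(N-1)\alpha_{N-1}\int_0^{\tau/2}(\tau-2r)r^{N-2}\,dr$. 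Evaluating the radial integral gives $\tau^N/\bigl(N(N-1)2^{N-1}\bigr)$, so the volume is $\alpha_{N-1}\tau^N/(N\,2^{N-1})$; substituting $\alpha_{N-1}=\pi^{(N-1)/2}/\Gamma(\tfrac{N+1}{2})$ produces exactly $\omega_N\,\tau^N$.

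Finally, independence of the choice of $\tilde p,\tilde q$ follows since, by the first step, any two admissible pairs with the same time separation are related by an isometry of $(\R^N,\eta_1)$; such isometries are affine with linear part in $O(1,N-1)$, hence have Jacobian of absolute value one and preserve Lebesgue volume. Together these steps give $\vol^{\R^N_1}(\tilde J(\tilde p,\tilde q))=\omega_N\,\tau(p,q)^N=\rho_N(J(p,q))$, the last equality holding by the definition \eqref{eq-def-rho}. I expect no genuine obstacle here: the only points requiring care are the bookkeeping in the radial integral so that the constant matches $\omega_N$, and the observation that the normalizing boost is volume-preserving.
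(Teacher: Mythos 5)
Your proof is correct and follows essentially the same route as the paper: reduce to the normalized diamond $J(0,(\tau,\vec 0))$ by a translation and Lorentz boost of unit Jacobian, then compute the volume by an elementary Fubini argument (the paper slices in $t$ and integrates $\mathcal{H}^{N-1}$ of the balls $D_{\min(t,T-t)}(0)$, while you integrate over $\vec x$ first and then radially --- the same computation in the opposite order, yielding the same constant $\omega_N$). The only cosmetic remark is that the normalizing linear map should be taken in the proper orthochronous group $SO^+(1,N-1)$ to preserve the future direction, though for the volume count only $|\det|=1$ matters.
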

\begin{pr}
For simplicity we drop the $\tilde{\,}$ notation for this proof as we only work in $N$-dimensional Minkowski spacetime $\R^{N}_1$.

In case $p\leq q$ but $p\nll q$ there is nothing to do as $J(p,q)$ is a point or the image of a null geodesic, as such it has zero $N$-dimensional Lebesgue measure.

Let $p,q\in\R^{N}_1$ with $p\ll q$  and set $T:=\tau(p,q)>0$, $v:=q-p$ which is by assumption future directed and timelike with length $T$. As the proper, orthochronous (or restricted) Lorentz group $SO^+(1,N-1)$ acts transitively on $\{w\in\R^{N}_1: w$ future directed timelike and $|w|=T\}$, there is an $A\in SO^+(1,N-1)$ such that $Av=(T,\vec{0})=:\bar q$. Then the affine linear transformation $\phi(z):= A(z-p)$ maps $p$ to $0$, $q$ to $(T,\vec{0})$ and (bijectively) $J(p,q)$ to $J(0,\bar q)$. Moreover, as the determinant of $A$ is $1$ we get
\begin{equation}
\vol(J(p,q))=\int_{J(p,q)} 1\,d \mathcal{H}^N = \int_{J(0,\bar q)} 1 |\det(A)|\, d\mathcal{H}^N=\vol(J(0,\bar q))\,. 
\end{equation}
Finally, we establish that $\vol(J(0,\bar q))=\omega_N\, T^{N}=\omega_N\, \tau(p,q)^{N}$. Denote by $D_r(z)\subseteq \R^N$ the closed Euclidean ball of radius $r$ around $z$. Let $v=(t,\bar x)\in \R^N_1$, then $v\in J(0,\bar q)$ if and only if $0\leq t\leq T$ and $\|\bar x\|_e\leq \min(t, T-t)$, where $\|.\|_e$ denotes the Euclidean norm on $\R^N$. Thus we get by using the known volume of (closed) Euclidean balls in $\R^N$ that
\begin{align}
 \mathcal{H}^N(J(0,\bar q))&=\int_0^T \mathcal{H}^{N-1}(D_{\min(t,T-t)}(0))\, \mathrm{d} t\\
 &= \int_{0}^{\frac{T}{2}}  \mathcal{H}^{N-1}(D_{t}(0))\, \mathrm{d} t + \int_{\frac{T}{2}}^{T}  \mathcal{H}^{N-1}(D_{T-t}(0))\, \mathrm{d} t\\
 &= 2 \frac{\pi^{\frac{N-1}{2}}}{\Gamma({\frac{N+1}{2}})} \int_0^{\frac{T}{2}} t^{N-1}\, \mathrm{d}t = \frac{\pi^{\frac{N-1}{2}}}{N \Gamma({\frac{N+1}{2}}) 2^{N-1}}\, T^N\,. \\
\end{align}
\end{pr}

\begin{defi}[Lorentzian measures from coverings by closed diamonds]
\label{defi-can-mea}
 Let $(X,\leq,\tau)$ be as above, $d$ a metric on $X$ and set $\mathcal{J}:=\{J(p,q): p,q\in X$ with $p<q\}\cup \{\emptyset\}$. Extending \eqref{eq-def-rho} by setting $\rho_N(\emptyset):=0$ and $\rho_N(J(p,q)):=\infty$ if $\tau(p,q)=\infty$
 yields a map $\rho_N\colon\mathcal{J}\rightarrow [0,\infty]$. For $\delta>0$ and $A\subseteq X$ set 
 \begin{equation}
  \V^N_\delta(A):=\inf\{\sum_{i=0}^\infty \rho_N(A_i): A_i\in\mathcal{J}, \diam(A_i)\leq \delta\ \forall i\in\N\text{ and } A\subseteq \bigcup_{i=0}^\infty A_i\}\,,
 \end{equation}
with the convention that $\inf\emptyset = \infty$. Moreover, for $A\subseteq X$ we set
\begin{equation}
 \V^N(A):=\sup_{\delta>0}\V^N_\delta(A)\,.
\end{equation}
\end{defi}

%\marginpar{RM: don't you mean \red non-increasing?}
Classical results (cf.\ \cite[Thm.\ 9.3]{Els:18}, \cite[2.10.1]{Fed:69}) give:

\begin{prop}[Outer measures and measure of each given dimension]
\label{pro-con-mea}
 For $N\in[0,\infty)$ and $A\subseteq X$ the map $(0,\infty) \ni \delta \mapsto\V^N_\delta(A)$ is monotonically 
nonincreasing, $\V^N_\delta$ is an outer measure and $\V^N(A)=\lim_{\delta\searrow0}\V^N(A)$ defines a Borel measure, 
i.e. a measure $\V^N$ on the Borel subsets of $X$. 
\end{prop}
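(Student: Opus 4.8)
The plan is to recognize this as the classical Carathéodory ``Method II'' construction of a metric outer measure from a size function $\rho_N$ on the covering class $\J$, and then to verify the abstract hypotheses so that the cited results \cite[Thm.\ 9.3]{Els:18}, \cite[2.10.1]{Fed:69} apply. First I would dispose of the monotonicity in $\delta$: as $\delta$ decreases the family of admissible covers (those $\{A_i\}\subseteq\J$ with $\diam(A_i)\le\delta$ covering $A$) only shrinks, so the infimum defining $\V^N_\delta(A)$ cannot decrease. Hence $\delta\mapsto\V^N_\delta(A)$ is nonincreasing and $\V^N(A)=\sup_{\delta>0}\V^N_\delta(A)=\lim_{\delta\searrow0}\V^N_\delta(A)$ exists in $[0,\infty]$, which is the first assertion.

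Next I would check that each $\V^N_\delta$ is an outer measure by verifying the three axioms directly from the definition. One has $\V^N_\delta(\emptyset)=0$ because $\emptyset\in\J$ with $\rho_N(\emptyset)=0$ and $\diam(\emptyset)=0\le\delta$; monotonicity $A\subseteq B\Rightarrow\V^N_\delta(A)\le\V^N_\delta(B)$ holds since every admissible cover of $B$ is also one of $A$; and countable subadditivity follows by the standard argument of concatenating, for a given $\eps>0$, near-optimal $\delta$-covers $\{A_{j,i}\}_i$ of each $A_j$ (with $\sum_i\rho_N(A_{j,i})\le\V^N_\delta(A_j)+\eps2^{-j}$) into a single countable cover of $\bigcup_j A_j$ whose pieces still all have diameter $\le\delta$. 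Passing to the supremum over $\delta$, the outer measure $\V^N$ inherits all three properties; for subadditivity one uses that $\V^N_\delta(\bigcup_j A_j)\le\sum_j\V^N_\delta(A_j)\le\sum_j\V^N(A_j)$ holds for every $\delta$, and then takes the supremum on the left.

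The crucial step, and the one I expect to be the main obstacle, is establishing that $\V^N$ is a \emph{metric} outer measure, which is exactly what upgrades it to a Borel measure. Given $A,B\subseteq X$ with $\eta:=\dist(A,B)>0$, I would fix $\delta<\eta$ and observe that any piece $C\in\J$ of a $\delta$-cover of $A\cup B$ has $\diam(C)\le\delta<\eta$, so $C$ cannot meet both $A$ and $B$. Splitting any such cover into the pieces meeting $A$ and those meeting $B$ then yields $\sum_i\rho_N(C_i)\ge\V^N_\delta(A)+\V^N_\delta(B)$, hence $\V^N_\delta(A\cup B)\ge\V^N_\delta(A)+\V^N_\delta(B)$; letting $\delta\searrow0$ gives $\V^N(A\cup B)\ge\V^N(A)+\V^N(B)$, and the reverse inequality is subadditivity. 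The subtlety worth flagging is that this is the single place where passing to the limit $\delta\searrow0$ is indispensable: for fixed $\delta$ the set function $\V^N_\delta$ need not be additive on positively separated sets, and it is precisely the restriction $\delta<\dist(A,B)$ that forces each covering diamond to lie on one side. Finally, by Carathéodory's theorem on metric outer measures every Borel set is $\V^N$-measurable, so the restriction of $\V^N$ to the Borel $\sigma$-algebra is a measure, completing the proof.
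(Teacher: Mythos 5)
Your proposal is correct and is precisely the classical Carath\'eodory ``Method II'' argument that the paper itself invokes by citation (Elstrodt Thm.~9.3, Federer~2.10.1) without writing out any details: monotonicity in $\delta$ from shrinking cover families, the three outer-measure axioms, and the metric outer-measure property for positively separated sets (using $\delta<\dist(A,B)$) to obtain Borel measurability via Carath\'eodory's criterion. Nothing is missing; you have simply supplied the standard proof of the cited classical results.
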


\begin{rem}[Finite versus countable coverings]
 In the definition of the Lorentzian outer measures $\V^N_\delta$ one can also restrict to finite coverings of the set to be measured. A precise argument is given in the proof of Proposition \ref{prop-1d-mea-len}.
\end{rem}

Thus, in particular, we have constructed a %  \emph{natural}
dimensionally parameterized family of measures on a \LpLS $\Xll$. In the following section \ref{sec-dim} we show that for each $N\in[0,\infty)$, analogous to Hausdorff measure in metric spaces, this can be used to define a notion of dimension for such spaces. Moreover, this measure (with $N$ equal the spacetime dimension) coincides with the volume measure in a spacetime, as will be established in Subsection \ref{subsec-com-st}.

\begin{rem}[Dependence of $\V^N$ on the metric $d$]
 A priori the measure $\V^N$ depends on the metric $d$ on $X$. However, if two metrics $d,\tilde d$ are strongly equivalent, i.e., there are constants $c_\pm>0$ such that for all $x,y\in X$: $c_-\, d(x,y)\leq \tilde d(x,y)\leq c_+\, d(x,y)$, then the measure $\V^N$, and the measure $\tilde\V^N$ (constructed with respect to $\tilde d$), agree. This can be weakened if $X$ is a \emph{Radon space}, i.e., a topological space such that every Borel measure is (inner) regular (or tight), hence for example if $X$ is a Polish topological space. In this case it suffices to have for every $x\in X$ constants $c_\pm^x>0$ such that for all $y\in X$ one has $c_-^x\, d(x,y)\leq \tilde d(x,y)\leq c_+^x\, d(x,y)$. This condition is 
 %slightly stronger than just 
 sufficient (but not necessary) for the topological equivalence of $d$ with $\tilde d$.
 %(but it is not necessary for topological equivalence).
  Furthermore, we will see in Theorem \ref{thm-mea-vol-equ} that in the case of spacetimes it does not depend at all on the chosen background Riemannian metric as it coincides with the volume measure of the Lorentzian metric.
\end{rem}

\section{A geometric dimension for \LpLSn s}\label{sec-dim}

At this point we introduce a geometric dimension of \LpLSn s and arbitrary subsets, analogously to the Hausdorff dimension defined for metric spaces using the Hausdorff measures. This is also useful for defining hypersurfaces and appropriate measures on them.
\begin{defi}[Geometric dimension]
 If \Xll is a \LpLSn, $(\V^N)_{N\in[0,\infty)}$ the family of Borel measures constructed in Theorem \ref{pro-con-mea}, and $B\subseteq X$, 
 then the \emph{geometric dimension} $\dims(B)$ of $B$ is defined as
 \begin{equation}
  \dims(B):=\inf\{N\geq 0: \V^N(B)<\infty\}\,,
 \end{equation}
with the convention that $\inf\emptyset = +\infty$.
\end{defi}

To give a characterization of the geometric dimension we need the following notion.
\begin{defi}[Local $d$-uniformity]
 A \LpLS \Xll is called \emph{locally $d$-uniform} if every point has a neighborhood $A$ such that $\tau(x,y) = o(1)$ on $A$ as $d(x,y)\searrow 0$.
\end{defi}
% \begin{defi}
%  A \LpLS \Xll is called \emph{locally $d$-uniform} if every point has a neighborhood $A$ such that there is a constant $C>0$ so that for all $x,y\in A$ we have $\tau(x,y)\leq C d(x,y)$.
% \end{defi}

% \todo{some elementary consequences of being $d$-uniform. Does $A$ need to be causally convex?}

\begin{lem}[Uniqueness of dimension with nontrivial measure]
\label{lem-dim-mM}
 Let \Xll be a \LpLSn, $A \subseteq B\subseteq X$ and $ 0 \le k < \dims(B) < K <\infty$. Then
 \begin{enumerate}
  \item  \label{lem-dim-mM-sub} $\dims(A)\leq\dims(B)$,
  \item  \label{lem-dim-mM-m} $\V^{k}(B)=\infty$, and
  \item  \label{lem-dim-mM-M} if \Xll is locally $d$-uniform then $\V^K(B)=0$.
 \end{enumerate}
\end{lem}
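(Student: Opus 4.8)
The plan is to treat the three assertions in increasing order of difficulty, the genuine work lying entirely in \ref{lem-dim-mM-M}. Assertion \ref{lem-dim-mM-sub} follows from monotonicity of the Borel measures $\V^N$: since $A\subseteq B$ gives $\V^N(A)\le\V^N(B)$ for every $N$, we have the inclusion of index sets $\{N\ge 0:\V^N(B)<\infty\}\subseteq\{N\ge 0:\V^N(A)<\infty\}$, and the infimum of a subset is no smaller than that of the superset, so $\dims(A)\le\dims(B)$. Assertion \ref{lem-dim-mM-m} is then immediate from the definition of $\dims$ as an infimum: if $k<\dims(B)=\inf\{N\ge 0:\V^N(B)<\infty\}$, then $k$ lies strictly below every element of that set, hence is not a member, which is precisely the statement $\V^k(B)=\infty$.

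For \ref{lem-dim-mM-M} I would first fix, using $\dims(B)<K$ together with the definition of the infimum, a value $N$ with $\dims(B)\le N<K$ and $\V^N(B)<\infty$; note $K-N>0$. The engine of the proof is the elementary scaling identity on diamonds: for every $A_i=J(p_i,q_i)\in\J$, formula \eqref{eq-def-rho} gives
\begin{equation}
 \rho_K(A_i)=\frac{\omega_K}{\omega_N}\,\tau(p_i,q_i)^{K-N}\,\rho_N(A_i).
\end{equation}
Thus, if the factor $\tau(p_i,q_i)^{K-N}$ could be made to tend to $0$ uniformly as the diameters of the covering diamonds shrink, this would force $\V^K_\delta(B)$ to vanish as $\delta\searrow 0$. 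Controlling $\tau$ by a modulus of $d$ is exactly what local $d$-uniformity provides, but only locally, which dictates the structure of the remainder.

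Concretely I would localize. For a point $p$, local $d$-uniformity furnishes an open neighborhood $U_p$ on which $\mu_p(\delta):=\sup\{\tau(x,y):x,y\in U_p,\ d(x,y)\le\delta\}\to 0$ as $\delta\searrow 0$; shrinking, take $U_p=B_{r_p}(p)$ and set $V_p:=B_{r_p/2}(p)$. If a diamond $A_i$ with $\diam(A_i)\le\delta<r_p/2$ meets $V_p$, then every point of $A_i$ lies within $\delta+r_p/2<r_p$ of $p$, so $A_i\subseteq U_p$; in particular its vertices satisfy $\tau(p_i,q_i)\le\mu_p(\delta)$ because $d(p_i,q_i)\le\diam(A_i)\le\delta$. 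Cover $B$ by countably many such $V_{p_j}$ (possible by second countability of $(X,d)$, which makes it hereditarily Lindel\"of). For each $j$ and each $\delta<r_{p_j}/2$, choose a covering of $B\cap V_{p_j}$ by diamonds of diameter $\le\delta$ that is nearly optimal for $\V^N_\delta$, discard those not meeting $V_{p_j}$, and combine the scaling identity with $\tau(p_i,q_i)^{K-N}\le\mu_{p_j}(\delta)^{K-N}$ to obtain
\begin{equation}
 \V^K_\delta(B\cap V_{p_j})\le\frac{\omega_K}{\omega_N}\,\mu_{p_j}(\delta)^{K-N}\bigl(\V^N(B)+1\bigr).
\end{equation}
Letting $\delta\searrow 0$ yields $\V^K(B\cap V_{p_j})=0$, and countable subadditivity of the measure $\V^K$ gives $\V^K(B)\le\sum_j\V^K(B\cap V_{p_j})=0$.

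The main obstacle, and the only point beyond bookkeeping, is precisely this localization: since local $d$-uniformity bounds $\tau$ by a modulus of $d$ only inside a neighborhood, one must ensure that the covering diamonds genuinely lie in such a neighborhood. The device of passing from $U_p=B_{r_p}(p)$ to the half-ball $V_p=B_{r_p/2}(p)$ and restricting to $\delta<r_p/2$ resolves this, at the cost of needing a countable subcover of $B$, which is where countability of the metric topology enters.
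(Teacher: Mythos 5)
Your proof is correct and takes essentially the same route as the paper's: (i) and (ii) by monotonicity and the definition of the infimum, and (iii) by combining the scaling identity $\rho_K=\frac{\omega_K}{\omega_N}\tau^{K-N}\rho_N$ with the modulus supplied by local $d$-uniformity, applied to near-optimal $\V^N_\delta$-covers on a countable collection of neighborhoods that trap the small covering diamonds. The only cosmetic difference is your half-ball device $V_p=B_{r_p/2}(p)$ with $\delta<r_p/2$ in place of the paper's choice of $U_i$ with $\eta_i=\dist(U_i,X\backslash A_i)>0$ and $\delta<\eta_i/3$.
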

\begin{pr}
\begin{enumerate}
 \item Let $\eps>0$ then there is an $N\in [\dims(B),\dims(B)+\eps)$ such that $\V^N(B)<\infty$, hence $\V^N(A)\leq\V^N(B)<\infty$ and so $\dims(A)\leq N<\dims(B)+\eps$. As $\eps>0$ was arbitrary we get the claim.
 \item This holds by the definition of $\dims(B)$.
 \item Let $B\subseteq X$, and cover it by a countable family of neighborhoods $U_i \subset A_i$, where $\tau = o(1)$ on $A_i$ (as $d \searrow 0$) and 
 %$U_i\subseteq A_i$ with
  $\eta_i:=\dist(U_i,X\backslash A_i)>0$. Then $\V^K(B)\leq \sum_i\V^K(B\cap U_i)$, thus it suffices to show $\V^K(B\cap U_i)=0$.
 
 Let $\dims(B)<K'<K$ such that $\V^{K'}(B)=:G<\infty$, so $\V^{K'}(B\cap U_i)\leq G$.
 Let $0<\delta<\frac{\eta_i}{3}$ and let $(F_j)_j$ be a covering of $B\cap U_i$ by causal diamonds with $\diam(F_j)<\delta$ for all $j\in\N$ and $\sum_j\rho^{K'}(F_j)<2G$. Moreover, without loss of generality we can assume that $F_j\cap (B\cap U_i)\neq \emptyset$ for all $j\in\N$ and thus $F_j\subseteq A_i$. Therefore, $\tau=o(1)$ holds on $F_j=J(p_j,q_j)$. Finally, we estimate
 \begin{align}
  \V^K_\delta(B\cap U_i)&\leq \sum_j \rho^K(F_j)\\
  &=\frac{\omega_K}{\omega_{K'}}\sum_j\rho^{K'}(F_j)\tau(p_j,q_j)^{K-K'}\leq \frac{\omega_K}{\omega_{K'}}2 G\ o(1)^{K-K'}\to 0\,,
 \end{align}
as $\delta\searrow 0$ (as then $\diam(F_j)\searrow 0$). Thus $\V^K(B\cap U_i)=0$ and hence $\V^K(B)=0$ as claimed.
\end{enumerate}
\end{pr}

The above results allow a characterization of the geometric dimension analogous to the one for the Hausdorff dimension of a metric space, cf.\ \cite[Thm.\ 1.7.16]{BBI:01}.

\begin{cor}[Equivalent characterizations of dimension]\label{C:dimension}
  Let \Xll be a locally $d$-uniform \LpLSn. Then $N=\dims(X)$ if and only if $\V^k(X)=\infty$ and $\V^K(X)=0$ 
    for all $0 \le k<N<K <\infty$. % we have that $\V^k(X)=\infty$ and $\V^K(X)=0$. 
  Moreover, 
  %we have
  \begin{equation}
  \dims(X)= {\sup} \{L\geq 0: \V^L(X)= \infty\}\,.
  \end{equation}
\end{cor}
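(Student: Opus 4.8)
The plan is to derive both assertions directly from Lemma~\ref{lem-dim-mM} applied with $B=X$, together with the defining formula $\dims(X)=\inf\{M\geq 0:\V^M(X)<\infty\}$. I would first dispatch the forward implication of the equivalence: assuming $N=\dims(X)$, part~\eqref{lem-dim-mM-m} of the Lemma gives $\V^k(X)=\infty$ for every $0\le k<N$, while the local $d$-uniformity hypothesis lets me invoke part~\eqref{lem-dim-mM-M} to conclude $\V^K(X)=0$ for every $N<K<\infty$. This is exactly the stated property, so nothing further is needed here.

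For the reverse implication I would use the infimum characterization of $\dims$. Suppose $\V^k(X)=\infty$ and $\V^K(X)=0$ for all $0\le k<N<K<\infty$, and set $S:=\{M\geq 0:\V^M(X)<\infty\}$. Since $\V^K(X)=0<\infty$ for every $K>N$, each such $K$ lies in $S$, so $\dims(X)=\inf S\le K$; taking the infimum over $K>N$ yields $\dims(X)\le N$. Conversely, $\V^k(X)=\infty$ for every $k<N$ means $S\cap[0,N)=\emptyset$, so every element of $S$ is at least $N$ (and if $S=\emptyset$ then $\inf S=+\infty$ anyway), giving $\dims(X)\ge N$. Combining the two bounds produces $\dims(X)=N$.

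For the ``Moreover'' identity I would write $T:=\sup\{L\ge 0:\V^L(X)=\infty\}$ and $N:=\dims(X)$ and squeeze $T$ from both sides using the same two parts of the Lemma. Part~\eqref{lem-dim-mM-m} shows $[0,N)\subseteq\{L:\V^L(X)=\infty\}$, whence $T\ge N$. Part~\eqref{lem-dim-mM-M} (again invoking local $d$-uniformity) gives $\V^L(X)=0\ne\infty$ for every $L>N$, so $\{L:\V^L(X)=\infty\}\subseteq[0,N]$ and hence $T\le N$; thus $T=N$.

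Finally, I would treat the degenerate values separately, since this is where the only real bookkeeping lies. When $N=\infty$ the conditions $N<K<\infty$ are vacuous and $\V^k(X)=\infty$ holds for all $k\ge 0$, consistently with $\inf\emptyset=+\infty$; when $N=0$ the range $0\le k<N$ is empty and the supremum set is contained in $\{0\}$, so $T=0=N$ under the convention $\sup\emptyset=0$. I expect no genuine obstacle beyond keeping these boundary conventions straight and noting that the strict inequalities $k<N<K$ in the characterization deliberately say nothing about the borderline measure $\V^N(X)$ itself --- all substantive content has already been absorbed into Lemma~\ref{lem-dim-mM}.
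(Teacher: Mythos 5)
Your proposal is correct and follows exactly the route the paper intends: the corollary is stated as an immediate consequence of Lemma~\ref{lem-dim-mM} applied with $B=X$ together with the infimum definition of $\dims$, which is precisely how you argue both the equivalence and the ``Moreover'' identity. Your explicit handling of the boundary cases $N=0$ and $N=\infty$ is a welcome addition that the paper leaves implicit.
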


\begin{lem}\label{lem-dim-cup}
 Let \Xll be a locally $d$-uniform \LpLS and $X=\bigcup_{i\in\N} U_i$. Then
 \begin{equation}
  \dims(X)=\sup_{i\in\N}\dims(U_i)\,.
 \end{equation}
\end{lem}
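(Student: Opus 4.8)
The plan is to prove the two inequalities $\dims(X) \ge \sup_{i\in\N} \dims(U_i)$ and $\dims(X) \le \sup_{i\in\N} \dims(U_i)$ separately, transcribing to the Lorentzian setting the standard proof of countable stability of Hausdorff dimension. Throughout set $s := \sup_{i\in\N}\dims(U_i)$.

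First I would establish the lower bound. Since $U_i \subseteq X$ for every $i$, the monotonicity of the geometric dimension (Lemma \ref{lem-dim-mM}\eqref{lem-dim-mM-sub}) gives $\dims(U_i) \le \dims(X)$, and taking the supremum over $i$ yields $s \le \dims(X)$. In particular, this already settles the case $s = \infty$: it forces $\dims(X) = \infty = s$, so henceforth I may assume $s < \infty$.

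For the upper bound I would fix any $K$ with $s < K < \infty$. Then for each $i$ we have $\dims(U_i) \le s < K < \infty$, so---$X$ being locally $d$-uniform---Lemma \ref{lem-dim-mM}\eqref{lem-dim-mM-M} applied to the subset $B = U_i$ yields $\V^K(U_i) = 0$. Because $\V^K$ is an outer measure (Proposition \ref{pro-con-mea}), it is countably subadditive on arbitrary subsets, so
\begin{equation}
 \V^K(X) = \V^K\Big(\bigcup_{i\in\N} U_i\Big) \le \sum_{i\in\N} \V^K(U_i) = 0 < \infty\,.
\end{equation}
By the definition of the geometric dimension this forces $\dims(X) \le K$, and letting $K \searrow s$ gives $\dims(X) \le s$. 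Combined with the lower bound, this proves $\dims(X) = s$.

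The only points requiring care are bookkeeping rather than genuine obstacles. One must work with a $K$ strictly above the supremum rather than at the supremum itself, precisely so that the strict inequality $\dims(U_i) < K$ needed to invoke Lemma \ref{lem-dim-mM}\eqref{lem-dim-mM-M} holds for \emph{every} $i$ simultaneously; and one must note that the countable subadditivity being used is that of the outer measure $\V^K$ on all subsets of $X$ (the $U_i$ need not be Borel), which is exactly what Proposition \ref{pro-con-mea} provides before passing to the Borel measure. Beyond this there is no substantive difficulty: the statement is the Lorentzian analog of the countable stability of Hausdorff dimension, and the work has already been done in Lemma \ref{lem-dim-mM}.
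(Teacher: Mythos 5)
Your proof is correct and follows essentially the same route as the paper: monotonicity (Lemma \ref{lem-dim-mM}\eqref{lem-dim-mM-sub}) for the lower bound, and for the upper bound the vanishing of $\V^K$ on each $U_i$ via Lemma \ref{lem-dim-mM}\eqref{lem-dim-mM-M} together with countable subadditivity, for every $K$ strictly above the supremum. The extra remarks on the $s=\infty$ case and on subadditivity holding at the level of the outer measure are fine but not points the paper felt the need to spell out.
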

\begin{pr}
 By Lemma \ref{lem-dim-mM},\ref{lem-dim-mM-sub} we get that $\dims(U_i)\leq\dims(X)$ for all $i\in\N$. For the converse inequality, let $K>\sup_i \dims(U_i)$, then by Lemma \ref{lem-dim-mM},\ref{lem-dim-mM-M} we know that $\V^K(U_i)=0$ for all $i\in\N$. Consequently, $\V^K(X)\leq \sum_i\V^K(U_i)=0$ and thus $\dims(X)\leq K$. As this holds for all $K>\sup_i \dims(U_i)$ we conclude that $\dims(X)=\sup_i\dims(U_i)$. 
\end{pr}

% \marginpar{\blue RM: not yet defined}

We will see in Proposition \ref{prop-con-st-d-uni-dim} that a strongly causal, causally plain
continuous spacetime is locally $d$-uniform and its geometric dimension agrees with the manifold dimension.

\subsection{One-dimensional measure versus length}\label{subsec-len}
In this subsection we investigate the relationship of $\V^1(\gamma([a,b]))$ and $L_\tau(\gamma)$, where $\gamma\colon[a,b]\rightarrow X$ is a causal curve. Note that the normalization constant in that case is $\omega_1=1$, cf. Definition \ref{def-vol-cd}.
\bigskip

The following lemma is the Lorentzian analog of \cite[Lem.\ 4.4.1]{AT:04} (note the reversal of the inequality as compared to the metric case).
\begin{lem}[Simple upper bound by the time separation]\label{lem-Nd-mea-tau}
 Let \Xll be a strongly causal \LpLSn. Let $\gamma\colon[a,b]\rightarrow X$ be a future directed causal curve and $N\in[1,\infty)$. Then
 \begin{equation}\label{eq-Nd-mea-tau}
  \V^N(\gamma([a,b]))\leq \omega_N\, \tau(\gamma(a),\gamma(b))^N\,.
 \end{equation}
\end{lem}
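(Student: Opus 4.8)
The plan is to exhibit, for each scale $\delta>0$, an explicit covering of $\gamma([a,b])$ by finitely many causal diamonds of $d$-diameter at most $\delta$ whose $\rho_N$-weights sum to at most $\omega_N\,\tau(\gamma(a),\gamma(b))^N$, and then to let $\delta\searrow 0$. We may assume $T:=\tau(\gamma(a),\gamma(b))<\infty$, since otherwise \eqref{eq-Nd-mea-tau} is trivial. The natural candidate cover comes from a partition $a=t_0<t_1<\dots<t_n=b$: because $\gamma$ is future directed causal one has $\gamma(t_{j-1})\leq\gamma(s)\leq\gamma(t_j)$ for $s\in[t_{j-1},t_j]$, so each subarc satisfies $\gamma([t_{j-1},t_j])\subseteq J(\gamma(t_{j-1}),\gamma(t_j))$; in particular $\gamma([a,b])\subseteq J(\gamma(a),\gamma(b))$, which already explains the shape of the bound, and the diamonds $J(\gamma(t_{j-1}),\gamma(t_j))$ cover the image.

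The crux is to arrange that these diamonds have $d$-diameter at most $\delta$; this is where strong causality enters, through the existence of arbitrarily small \emph{causally convex} neighborhoods. First I would observe that every chronological diamond $I(x^-,x^+)$ is causally convex: if $a'\leq b'$ both lie in $I(x^-,x^+)$ and $c\in J(a',b')$, then $x^-\ll a'\leq c$ and $c\leq b'\ll x^+$, so push-up yields $x^-\ll c\ll x^+$, i.e.\ $c\in I(x^-,x^+)$. Finite intersections of causally convex sets are again causally convex, and by strong causality (Alexandrov topology $=$ metric topology) such finite intersections of chronological diamonds form a basis of the metric topology. Hence for each $t\in[a,b]$ I can pick a causally convex open set $W_t$ with $\gamma(t)\in W_t\subseteq B_{\delta/2}(\gamma(t))$, so that $\diam(W_t)\leq\delta$.

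Next I would cover $[a,b]$ by the open sets $\gamma^{-1}(W_t)$ and invoke compactness: a Lebesgue number argument produces a partition $a=t_0<\dots<t_n=b$ such that each subarc $\gamma([t_{j-1},t_j])$ lies in a single $W_{j}$. Causal convexity then gives $J(\gamma(t_{j-1}),\gamma(t_j))\subseteq W_{j}$, so $\diam(J(\gamma(t_{j-1}),\gamma(t_j)))\leq\delta$, and these diamonds form an admissible (finite, hence countable) $\delta$-cover of $\gamma([a,b])$; degenerate subarcs with $\gamma(t_{j-1})=\gamma(t_j)$ can be merged away, as they contribute nothing. Applying $\rho_N$, the reverse triangle inequality $\sum_j\tau(\gamma(t_{j-1}),\gamma(t_j))\leq T$, and the elementary bound $\sum_j a_j^N\leq(\sum_j a_j)^N$ valid for $a_j\geq 0$ and $N\geq 1$, I obtain
\[
\V^N_\delta(\gamma([a,b]))\leq\omega_N\sum_{j=1}^n\tau(\gamma(t_{j-1}),\gamma(t_j))^N\leq\omega_N\Big(\sum_{j=1}^n\tau(\gamma(t_{j-1}),\gamma(t_j))\Big)^N\leq\omega_N\,T^N.
\]
As this holds for every $\delta>0$, taking the supremum over $\delta$ gives \eqref{eq-Nd-mea-tau}.

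The main obstacle is the content of the second paragraph: converting the abstract hypothesis ``Alexandrov topology $=$ metric topology'' into genuine metric control over the \emph{causal} (not merely chronological) diamonds spanned by nearby points. The resolving insight is that push-up renders chronological diamonds causally convex, so a basic Alexandrov neighborhood contained in a small metric ball automatically confines the entire causal diamond of any two of its points; the remaining steps are just the reverse triangle inequality and the comparison of the $\ell^N$- with the $\ell^1$-norm.
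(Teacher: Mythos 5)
Your proof is correct and follows essentially the same route as the paper's: strong causality supplies arbitrarily small causally convex neighborhoods, a compactness argument produces a partition whose consecutive causal diamonds lie in these neighborhoods and hence have diameter at most $\delta$, and the reverse triangle inequality together with $\sum_j a_j^N\leq(\sum_j a_j)^N$ for $N\geq 1$ gives the bound. The only (inessential) differences are that you obtain the partition via a Lebesgue number on $[a,b]$ where the paper chains along the connected image $\gamma([a,b])$, and that you spell out the push-up argument for causal convexity of Alexandrov basic neighborhoods, which the paper leaves implicit.
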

\begin{pr}
 Set $\Gamma:=\gamma([a,b])$, which is a compact subset of $X$. Let $\delta>0$ and set $B_t=B^d_{\delta/2}(\gamma(t))$, for $t\in [a,b]$. By the definition of strong causality there is for every $t\in[a,b]$ a causally convex open neighborhood $U_t$ of $\gamma(t)$ that is contained in $B_t$ (use a neighborhood $I(x_1,y_1)\cap\ldots\cap I(x_k,y_k)$).
 Consequently, we can cover $\Gamma$ by finitely many of them, i.e., $\Gamma\subseteq \bigcup_{j=0}^K U_j$, where $U_j:=U_{t_j}$. By connectedness of $\Gamma$ there is a partition $a= s_0 < s_1 < \ldots = s_L = b$ such that $\gamma(s_i),\gamma(s_{i+1})\in U_{j_i}$ for $i=0,\ldots, L-1$ and $j_i\in\{0,\ldots, K\}$ (cf.\ step 2 of the proof of \cite[Lem.\ 2.6.1]{BBI:01}). Setting $J_i:=J(\gamma(s_i),\gamma(s_{i+1}))$ yields that $J_i\subseteq U_{j_i}$ and so $\diam(J_i)\leq \diam(U_{j_i})\leq \delta$. Clearly, $\Gamma\subseteq\bigcup_{i=0}^{L-1} J_i$ and therefore, this gives that
 \begin{align}
  \V^N_\delta(\Gamma)&\leq \omega_N \sum_{i=0}^{L-1} \tau(\gamma(s_i),\gamma(s_{i+1}))^N 
  \\&\leq \omega_N \Bigl(\sum_{i=0}^{L-1} \tau(\gamma(s_i),\gamma(s_{i+1}))\Bigr)^N\\
  &\leq \omega_N\, \tau(\gamma(a),\gamma(b))^N\,,
 \end{align}
where we used $N\geq 1$ in the second inequality and the reverse triangle inequality in the last one. This holds for all $\delta>0$, so the claim follows.
\end{pr}

\begin{cor}[Null curves are zero-dimensional]\label{cor-nul-0-dim}
 Let \Xll be a strongly causal \LpLSn. Let $\gamma\colon[a,b]\rightarrow X$ be a future directed null curve. Then $\dims(\gamma([a,b]))=0$.
\end{cor}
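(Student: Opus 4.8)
The plan is to show that $\V^N(\gamma([a,b]))=0$ for \emph{every} $N>0$, which forces $\dims(\gamma([a,b]))=\inf\{N\ge 0:\V^N(\gamma([a,b]))<\infty\}=0$. The starting observation is that for a null curve no two of its points are $\ll$-related, so by the defining property $\tau(x,y)>0\Leftrightarrow x\ll y$ of a \LpLSn, one has $\tau(\gamma(s),\gamma(t))=0$ for all $s,t\in[a,b]$.

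The naive approach would be to simply quote Lemma \ref{lem-Nd-mea-tau}, which gives $\V^N(\gamma([a,b]))\le\omega_N\,\tau(\gamma(a),\gamma(b))^N=0$; however this only applies for $N\ge 1$ and therefore would only yield $\dims(\gamma([a,b]))\le 1$. To reach $0$ I would instead re-run the covering argument from the proof of Lemma \ref{lem-Nd-mea-tau} directly. Fix $\delta>0$; using strong causality, cover $\Gamma:=\gamma([a,b])$ by finitely many causally convex neighborhoods $U_t$ with $U_t\subseteq B^d_{\delta/2}(\gamma(t))$, and by connectedness extract a partition $a=s_0<\dots<s_L=b$ with $\gamma(s_i),\gamma(s_{i+1})$ lying in a common $U_{j_i}$. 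The diamonds $J_i:=J(\gamma(s_i),\gamma(s_{i+1}))\subseteq U_{j_i}$ then satisfy $\diam(J_i)\le\delta$ and cover $\Gamma$. Evaluating the outer measure on this cover gives $\V^N_\delta(\Gamma)\le\sum_i\rho_N(J_i)=\omega_N\sum_i\tau(\gamma(s_i),\gamma(s_{i+1}))^N$, and by the opening observation every summand vanishes, so $\V^N_\delta(\Gamma)=0$. Since this holds for each $\delta>0$ we get $\V^N(\Gamma)=\sup_{\delta>0}\V^N_\delta(\Gamma)=0$, valid for every $N>0$, whence the claim.

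The point I would flag as the only real subtlety is precisely why the $N\ge 1$ restriction of Lemma \ref{lem-Nd-mea-tau} can be dropped here: that restriction entered only through the inequality $\sum_i\tau^N\le(\sum_i\tau)^N$, which fails for $N<1$. For a null curve all the terms $\tau(\gamma(s_i),\gamma(s_{i+1}))$ are identically $0$, so that step is never invoked and the estimate $\V^N_\delta(\Gamma)=0$ survives for all $N\in(0,\infty)$. A minor bookkeeping remark is that one should allow the degenerate case $\gamma(s_i)=\gamma(s_{i+1})$, where $J_i$ reduces to a point and still contributes $\rho_N(J_i)=\omega_N\cdot 0=0$.

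Alternatively, once $\V^N(\Gamma)=0$ for all $N>0$ is established, one may conclude $\dims(\Gamma)=0$ by contradiction via Lemma \ref{lem-dim-mM},\ref{lem-dim-mM-m}: if $\dims(\Gamma)$ were positive, then any $0\le k<\dims(\Gamma)$ would force $\V^k(\Gamma)=\infty$, contradicting $\V^k(\Gamma)=0$.
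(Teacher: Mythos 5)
Your proposal is correct and follows essentially the same route as the paper: both re-run the covering construction from the proof of Lemma \ref{lem-Nd-mea-tau} and observe that, since no two points of a null curve are $\ll$-related, every term $\tau(\gamma(s_i),\gamma(s_{i+1}))^N$ vanishes, giving $\V^N_\delta(\Gamma)=0$ for all $N\in(0,\infty)$ and hence $\dims(\Gamma)=0$. Your explicit remark that the $N\geq 1$ restriction of the lemma is irrelevant here (because the subadditivity step $\sum_i\tau^N\leq(\sum_i\tau)^N$ is never invoked) is exactly the point the paper leaves implicit.
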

\begin{pr}
 Set $\Gamma:=\gamma([a,b])$. Let $N\in(0,\infty)$ and let $\delta>0$. As in the proof of Lemma \ref{lem-Nd-mea-tau} above we construct a covering $J_i:=J(\gamma(t_i),\gamma(t_{i+1}))$ of $\Gamma$, where $a \leq t_0 < t_1 < \ldots< t_K\leq b$ is a partition of $[a,b]$ with $\diam(J_i)\leq \delta$ for all $i=0,\ldots, K-1$. Thus, as $\gamma$ null, we obtain
  \begin{align}
  \V^N_\delta(\Gamma)\leq \omega_N \sum_{i=0}^{K-1} \tau(\gamma(t_i),\gamma(t_{i+1}))^N = 0\,.
  \end{align}
  This gives that $\V^N(\Gamma)=0$ and so $\dims(\Gamma)\leq N$ for all $N\in (0,\infty)$, hence $\dims(\Gamma)=0$.
\end{pr}

So null curves have zero geometric dimension whereas their Hausdorff dimension is one (as they are Lipschitz and injective, if suitably parametrized).
\medskip

Furthermore, Lemma \ref{lem-Nd-mea-tau} above allows us to establish a Lorentzian analog of \cite[Thm.\ 4.4.2]{AT:04}.

% \marginpar{}
%   \todo{However, it is unclear at the moment under what conditions the measures actually agree on compact sets.
%   RM: one can also define $\bar \V^N_\delta \ge V^N_\delta$ using covers by open diamonds;  in that case the finite and countable versions agree on compact sets.}
 
%  \marginpar{RM: More natural to deal only with $\tilde V$ in this proposition?}
 
 \begin{prop}[One-dimensional measure versus length]\label{prop-1d-mea-len}
  Let \Xll be a strongly causal \LpLSn. Let $\gamma\colon[a,b]\rightarrow X$ be a future directed causal curve. Then
 \begin{equation}
  \V^1(\gamma([a,b]))\leq L_\tau(\gamma)\,.
 \end{equation}
 Furthermore, if all causal diamonds $J(x,y)$ in $X$ are closed (as e.g.\ if $X$ is globally hyperbolic), then the length of a causal curve agrees with the one-dimensional measure $\V^1$ of its image, i.e., $\V^1(\gamma([a,b]))= L_\tau(\gamma)$.
\end{prop}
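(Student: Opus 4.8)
The first inequality $\V^1(\gamma([a,b]))\leq L_\tau(\gamma)$ is essentially a direct consequence of Lemma \ref{lem-Nd-mea-tau}. The plan is to apply the covering argument from that lemma, but instead of estimating the single top-level time separation $\tau(\gamma(a),\gamma(b))$, to keep the partition sum. Concretely, for a fine partition $a=s_0<\cdots<s_L=b$ with each consecutive pair landing in a causally convex neighborhood of diameter at most $\delta$, one covers $\Gamma:=\gamma([a,b])$ by the diamonds $J_i:=J(\gamma(s_i),\gamma(s_{i+1}))$ and obtains $\V^1_\delta(\Gamma)\leq\sum_{i=0}^{L-1}\tau(\gamma(s_i),\gamma(s_{i+1}))$ (recall $\omega_1=1$). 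Since this partition sum is an upper bound that one is free to make finer, and since $L_\tau(\gamma)$ is defined as the \emph{infimum} of exactly such partition sums, one must be slightly careful: refining the partition can only decrease the sum by the reverse triangle inequality, so each admissible partition of the covering type already gives a valid upper bound on $\V^1_\delta(\Gamma)$, and passing to the infimum over all such partitions yields $\V^1_\delta(\Gamma)\leq L_\tau(\gamma)$. Taking $\delta\searrow 0$ gives the first claim.

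For the reverse inequality $L_\tau(\gamma)\leq\V^1(\gamma([a,b]))$ under the closedness hypothesis, the idea is to show that \emph{any} covering of $\Gamma$ by closed causal diamonds $(F_j)_j$ of small diameter has total $\rho^1$-weight bounding $L_\tau(\gamma)$ from above, i.e.\ $L_\tau(\gamma)\leq\sum_j\rho^1(F_j)=\sum_j\tau(p_j,q_j)$. Fix a partition $a=t_0<\cdots<t_M=b$ of $[a,b]$; I want to bound $\sum_i\tau(\gamma(t_i),\gamma(t_{i+1}))$ by the covering weight. Since the diamonds $F_j=J(p_j,q_j)$ are closed and cover the compact connected arc $\Gamma$, each point $\gamma(t)$ lies in some $F_{j(t)}$, and by a standard chaining/partition-refinement argument (of the type used in the proof of Lemma \ref{lem-Nd-mea-tau}, cf.\ \cite[Lem.\ 2.6.1]{BBI:01}) one can refine the partition so that each consecutive segment $\gamma([t_i,t_{i+1}])$ is entirely contained in a single diamond $F_{j_i}$. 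The closedness of $F_{j_i}=J(p_{j_i},q_{j_i})$ then forces $p_{j_i}\leq\gamma(t_i)\leq\gamma(t_{i+1})\leq q_{j_i}$, whence by the reverse triangle inequality $\tau(\gamma(t_i),\gamma(t_{i+1}))\leq\tau(p_{j_i},q_{j_i})=\rho^1(F_{j_i})$. Summing over $i$ and collecting the diamonds gives $\sum_i\tau(\gamma(t_i),\gamma(t_{i+1}))\leq\sum_j\rho^1(F_j)$, and taking the supremum over partitions yields $L_\tau(\gamma)\leq\sum_j\rho^1(F_j)$; taking the infimum over coverings gives $L_\tau(\gamma)\leq\V^1(\Gamma)$.

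The main obstacle is the refinement step in the reverse direction. A single diamond $F_{j_i}$ may be re-entered by the curve at separated parameter values, and the covering need not be ordered along $\gamma$, so one cannot naively index one diamond per segment. This is exactly where closedness is essential: it guarantees $\gamma(t_i),\gamma(t_{i+1})\in F_{j_i}$ implies the causal sandwiching $p_{j_i}\leq\gamma(t_i)$ and $\gamma(t_{i+1})\leq q_{j_i}$, so that $\rho^1(F_{j_i})$ legitimately controls the time separation across the segment. Making the chaining argument precise --- producing, from an arbitrary small-diameter closed-diamond covering, a partition each of whose segments sits inside one covering diamond, while respecting the causal order --- is the technical heart, and I expect it to reuse the connectedness/Lebesgue-number mechanism already invoked in Lemma \ref{lem-Nd-mea-tau}. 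The remark after Proposition \ref{pro-con-mea} (that one may restrict to finite coverings) is presumably what lets one treat the covering as finite and index it cleanly here.
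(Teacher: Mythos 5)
Your first inequality is fine and is essentially the paper's argument: the paper applies Lemma \ref{lem-Nd-mea-tau} (with $N=1$) to each subcurve $\gamma\rvert_{[t_i,t_{i+1}]}$ of an \emph{arbitrary} partition, uses subadditivity of $\V^1$, and then takes the infimum over partitions; your refinement remark achieves the same thing.

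For the reverse inequality your skeleton matches the paper's, but the step you defer --- producing, from a finite closed-diamond cover, a partition whose consecutive segments each sit in one covering diamond --- is precisely the step that cannot be carried out by the ``connectedness/Lebesgue-number mechanism'' of Lemma \ref{lem-Nd-mea-tau}: the Lebesgue number lemma holds for \emph{open} covers, and a finite cover by closed sets has no Lebesgue number (cover $[0,1]$ by $[0,\tfrac12]$ and $[\tfrac12,1]$), so arbitrarily short segments of $\gamma$ need not lie in any single $F_j$. What the paper does instead is: (1) reduce to finite covers via \cite[Thm.\ 2.10.17(i)]{Fed:69} --- your guess about the remark after Proposition \ref{pro-con-mea} points to the right place, but this reduction itself requires an argument; (2) reparametrize $\gamma$ to be never-locally-constant, so that causality (Lemma \ref{lem-str-cau-cau}) makes it injective --- without this the partition points extracted below need not be strictly ordered; (3) use connectedness of $\Gamma$ together with closedness of the finitely many $J_i$ to extract a chain $J_{i_0},\dots,J_{i_l}$ from $\gamma(a)$ to $\gamma(b)$ with $\Gamma\cap J_{i_j}\cap J_{i_{j+1}}\neq\emptyset$, and to choose the partition points in these overlaps. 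Note also that only the two endpoints $\gamma(t_j),\gamma(t_{j+1})$ need to lie in $J(x_{i_j},y_{i_j})$: membership alone, not closedness, yields $x_{i_j}\leq\gamma(t_j)\leq\gamma(t_{j+1})\leq y_{i_j}$ and hence the reverse-triangle bound, so you misattribute the role of closedness there. Finally, your worry about re-entry is moot: since $\gamma$ is causal and $\leq$ is transitive, $\gamma^{-1}(J(p,q))$ is order-convex, hence a closed subinterval of $[a,b]$; this observation would in fact give you an alternative, clean route to the desired partition by covering $[a,b]$ with finitely many closed intervals.
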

\begin{pr}
First, we establish that $\V^1(\Gamma)\leq L_\tau(\gamma)$, where $\Gamma:=\gamma([a,b])$.
 Let $a=t_0<t_1<\ldots < t_N=b$ be a partition of $[a,b]$, then by Equation \eqref{eq-Nd-mea-tau} (with $N=1$) we get that
 \begin{equation}
  \V^1(\gamma([a,b]) \leq \sum_{i=0}^{N-1} \V^1(\gamma([t_i,t_{i+1}])) \leq \sum_{i=0}^{N-1} \tau(\gamma(t_i),\gamma(t_{i+1}))\,.
 \end{equation}
Now taking the infimum over all partitions of $[a,b]$ the claim follows.

Second, we show the reverse inequality under the assumption that all causal diamonds are closed. To this end we establish that for the Lorentzian (outer) measures $\V^N_\delta$, $\V^N$ we can restrict to finite coverings of the set in question (rather than countable ones). To be precise: For $N\in[0,\infty)$, $\delta>0$ and $A\subseteq X$ set
 \begin{align}
  \tilde\V^N_\delta(A)&:=\inf\{\sum_{i=0}^k \rho_N(J_i): k\in\N, J(x_i,y_i)=J_i\in\mathcal{J}, \diam(J_i)\leq \delta\ \forall i\\
  &\qquad\quad\text{ and } A\subseteq \bigcup_{i=0}^k J_i\}\,,\\
   \tilde\V^N(A)&:=\sup_{\delta>0}\tilde\V^N_\delta(A)\,.
 \end{align}
Clearly, one has that $\V^N_\delta\leq \tilde\V^N_\delta$ and $\V^N\leq \tilde\V^N$ for all $N\in[0,\infty)$, $\delta>0$. For $N\in[0,\infty)$, $\delta>0$, and $A\subseteq X$ let $J\in\mathcal{J}$ with $\diam(J)\leq \delta$. As $J$ covers $A\cap J$ we have that $\tilde\V^N_\delta(A\cap J)\leq \rho_N(J)$. Then \cite[Thm.\ 2.10.17,(i)]{Fed:69} gives that $\tilde\V^N_\delta(A)\leq\V^N_\delta(A)\leq \V^N(A)$. Taking the supremum over all $\delta>0$ yields that $\tilde\V^N\leq\V^N$ and so $\tilde\V^N=\V^N$, hence we can restrict to finite coverings from now on.

Without loss of generality we can assume that $\gamma$ is parametrized such that it is never-locally-constant, i.e., there is no non-trivial interval $[a',b']$ in $[a,b]$ such that $\gamma\rvert_{[a',b']}$ is constant (cf.\ \cite[Exc.\ 2.5.3]{BBI:01}), as a reparametrization does not change the length, cf.\ \cite[Lem.\ 2.28]{KS:18}. Given such a parametrization of $\gamma$, the curve $\gamma$ is injective: Note that \Xll is causal by Lemma \ref{lem-str-cau-cau}. Thus if there were $a\leq s<t\leq b$ such that $\gamma(s)=\gamma(t)$, then for all $s\leq r \leq t$ one has that $\gamma(s)\leq \gamma(r)\leq \gamma(t)=\gamma(s)$, hence $\gamma(r)=\gamma(s)$ and $\gamma$ is constant on $[s,t]$ --- a contradiction.

Let $\delta>0$, and for every $\eps>0$ there is a finite covering $(J_i)_{i=0}^k$ of $\Gamma$ with causal diamonds $J_i=J(x_i,y_i)$ of diameter less or equal than $\delta$ such that
\begin{equation}\label{eq-cov-1d-mea}
 \sum_{i=0}^k \tau(x_i,y_i) < \V^1_\delta(\Gamma) + \eps\,.
\end{equation}
Moreover, we can assume that $J_i\cap \Gamma\neq\emptyset$ for all $i$. As $\Gamma$ is connected and all $J_i$s are closed we can find a finite chain $J_{i_0},\ldots, J_{i_l}$ such that $\gamma(a)\in J_{i_0}$, $\gamma(b)\in J_{i_l}$ and $\Gamma \cap J_{i_j}\cap J_{i_{j+1}}\neq\emptyset$ for all $i=0,\ldots,l-1$. For $i=1,\ldots,l-1$ choose $\gamma(t_i)\in J_{i_j}\cap J_{i_{j+1}}$. Then $a=:t_0 < t_1 < \ldots < t_{l-1} < t_l:=b$ is a partition of $[a,b]$ as $\gamma$ is injective.

Then for all $j=0,\ldots, l-1$ one has that $\tau(\gamma(t_j),\gamma(t_{j+1})) \leq \tau(x_{i_j},\gamma(t_j)) + \tau(\gamma(t_j),\gamma(t_{j+1})) + \tau(\gamma(t_{j+1}),y_{i_j})\leq \tau(x_{i_j},y_{i_j})$ as $\gamma(t_j),\gamma(t_{j+1})\in J_{i_j}=J(x_{i_j},y_{i_j})$ and by the reverse triangle inequality. Finally, by using Equation \eqref{eq-cov-1d-mea} this gives that
\begin{align}
 L_\tau(\gamma)\leq \sum_{j=0}^{l-1} \tau(\gamma(t_j),\gamma(t_{j+1})) \leq \sum_{j=0}^{l-1} \tau(x_{i_j},y_{i_j}) \leq \sum_{i=0}^k \tau(x_i,y_i) < \V^1_\delta(\Gamma)+\eps\,.
\end{align}
As this holds for all $\eps>0$ we can let $\delta\searrow 0$ to obtain that $L_\tau(\gamma)\leq \V^1(\Gamma)$, which concludes the proof.
\end{pr}

\begin{prop}[Countable sets are zero dimensional and measured by their cardinality%have measure zero for $N>0$, cardinality for $N=0$
]\label{prop-cou-mea}
 Let \Xll be a strongly causal \LpLSn. Let $N\in[0,\infty)$ and assume additionally that in case $N>0$ we have that for all $x\in X$ and all neighborhoods $U$ of $x$ there are $x^\pm\in U$ such that $x^- < x < x^+$ and $x^-\not\ll x \not\ll x^+$. Let $A\subseteq X$ countable, then for $N>0$ we have $\V^N(A)=0$. For $A\subseteq X$ arbitrary we have $\V^0(A)=|A|$, the cardinality of $A$.
\end{prop}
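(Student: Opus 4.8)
The plan is to prove the two assertions of Proposition \ref{prop-cou-mea} separately, both by direct covering arguments exploiting that $\rho_N$ scales like $\tau^N$.

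For the case $N>0$ with $A=\{a_j\}_{j\in\N}$ countable, the strategy is to cover each point $a_j$ individually by a causal diamond of arbitrarily small $\rho_N$-content and arbitrarily small diameter, so that the total sum can be made smaller than any $\eps>0$. Concretely, I would first fix $\eps>0$ and assign to each $a_j$ a target weight $\eps_j:=\eps\,2^{-(j+1)}$. For each $j$, the additional hypothesis supplies points $a_j^\pm$ in any prescribed neighborhood $U$ of $a_j$ with $a_j^-<a_j<a_j^+$ but $a_j^-\not\ll a_j\not\ll a_j^+$. By the push-up principle (or rather its failure here: since $a_j^-\not\ll a_j^+$ would follow if push-up applied, and indeed $\tau(a_j^-,a_j^+)=0$ because $\tau>0\Leftrightarrow\ll$), the diamond $J_j:=J(a_j^-,a_j^+)$ satisfies $\tau(a_j^-,a_j^+)=0$, whence $\rho_N(J_j)=\omega_N\,\tau(a_j^-,a_j^+)^N=0$. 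Choosing $U$ small (via strong causality, inside a metric ball of radius $\delta/2$) makes $\diam(J_j)\le\delta$. Then $(J_j)_j$ is an admissible covering of $A$ with $\sum_j\rho_N(J_j)=0$, so $\V^N_\delta(A)=0$ for every $\delta>0$, giving $\V^N(A)=0$.

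For the case $N=0$, I would compute $\V^0(A)=|A|$ directly. Note $\omega_0=1$ and $\rho_0(J(p,q))=\tau(p,q)^0=1$ for any nonempty diamond (with the convention $\rho_0(\emptyset)=0$), so $\rho_0$ simply counts nonempty diamonds used. If $A$ is infinite then any covering uses infinitely many nonempty diamonds and $\V^0(A)=\infty=|A|$. If $A=\{a_1,\dots,a_m\}$ is finite, the upper bound $\V^0(A)\le m$ follows by covering each $a_j$ by a single small diamond $J(a_j^-,a_j^+)\ni a_j$ (using strong causality to control the diameter, and the relation $a_j^-<a_j<a_j^+$ to ensure nonemptiness), each contributing $\rho_0=1$. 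For the lower bound $\V^0(A)\ge m$, the key point is that for $\delta$ small enough each diamond in a $\delta$-covering can meet at most one point of $A$: since $A$ is finite, pick $\delta<\min_{i\ne j}d(a_i,a_j)$, so any set of diameter $\le\delta$ contains at most one $a_j$, forcing at least $m$ distinct nonempty diamonds in any covering, hence $\V^0_\delta(A)\ge m$.

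The main obstacle is the $N>0$ upper bound, and specifically justifying that the hypothesized points $a_j^\pm$ really yield a diamond containing $a_j$ with vanishing $\rho_N$. The subtlety is that $a_j^-<a_j<a_j^+$ guarantees $a_j\in J(a_j^-,a_j^+)$ (so it is a genuine cover), while $a_j^-\not\ll a_j^+$ combined with the $\tau>0\Leftrightarrow\ll$ equivalence forces $\tau(a_j^-,a_j^+)=0$; one must verify $a_j^-\not\ll a_j^+$ from $a_j^-\not\ll a_j\not\ll a_j^+$, which is exactly why the hypothesis is phrased with the non-timelike relations at $a_j$ rather than merely $a_j^-\ne a_j^+$ --- if push-up held and either $a_j^-\ll a_j$ or $a_j\ll a_j^+$ we could get $a_j^-\ll a_j^+$, so the hypothesis deliberately excludes timelike relatedness to keep the diamond null-like and of zero $\rho_N$-content. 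Once this is in place the covering argument is routine, and the $N=0$ computation is elementary.
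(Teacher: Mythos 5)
Your strategy is essentially the paper's: for $N>0$, cover each point of $A$ by an arbitrarily small causal diamond $J(a^-,a^+)$ with $\tau(a^-,a^+)=0$, obtained from the additional hypothesis plus strong causality (causally convex neighborhoods inside small metric balls); for $N=0$, count diamonds. The paper organizes the bookkeeping slightly differently --- it writes $\V^N(A)=\sum_{a\in A}\V^N(\{a\})$ by countable additivity of the Borel measure and then computes each singleton's measure --- whereas you cover $A$ directly; your separation argument $\delta<\min_{i\neq j}d(a_i,a_j)$ for the finite lower bound is a fine substitute, and your weights $\eps\,2^{-(j+1)}$ are never actually needed since each diamond contributes exactly $0$ to the sum.

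Two steps need repair. First, for $N=0$ and $A$ countably infinite, your claim that ``any covering uses infinitely many nonempty diamonds'' is false at a fixed scale $\delta$: if $A$ has an accumulation point, a single diamond of diameter at most $\delta$ can contain infinitely many points of $A$, so $\V^0_\delta(A)$ may be finite for every $\delta>0$. The conclusion $\V^0(A)=\infty$ still holds, but you must get it from monotonicity, $\V^0(A)\geq \V^0(F)=|F|$ for every finite $F\subseteq A$, and then take the supremum over $F$ (the paper's countable additivity over singletons sidesteps this). Second, in the $N=0$ upper bound you use points $a^\pm$ with $a^-<a<a^+$, but the hypothesis supplying such points is only assumed for $N>0$; since $\mathcal{J}$ contains only diamonds $J(p,q)$ with $p<q$, you cannot use $J(a,a)$, and you need the paper's observation that strong causality (via the Alexandrov topology) yields $a^\pm\in U_a$ with $a^-\ll a\ll a^+$.

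Finally, you correctly isolate the need to deduce $a^-\not\ll a^+$, hence $\tau(a^-,a^+)=0$, from $a^-\not\ll a\not\ll a^+$, but your justification only explains why push-up cannot be used to \emph{derive} $a^-\ll a^+$, not why that relation is actually excluded. The paper's own proof makes the same silent step (it is harmless in the motivating examples where $a^\pm$ lie on a null curve through $a$), so this does not count against you relative to the paper, but it is not a completed argument.
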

\begin{pr}
Let $N\in[0,\infty)$ and assume first that $A$ is countable. Thus we can write $A$ as the countable disjoint union of its singletons, so $\V^N(A)=\sum_{a\in A} \V^N(\{a\})$. Thus it remains to show that $\V^N(\{a\})=0$ for $N>0$ and $\V^0(\{a\})=1$. Let $a\in A$ and for $\delta>0$ set $B_a:=B^d_{\delta/2}(a)$. Then by by strong causality there is a causally convex neighborhood $U_a$ of $a$ contained in $B_a$. Moreover, for $N>0$ we have by assumption that there are $a^\pm\in U_a$ such that $a^- < a < a^+$ and $a^-\not\ll a \not\ll a^+$. In case $N=0$, we can find (by strong causality) %\marginpar{RM: ?}}
that there are $a^\pm\in U_a$ with $a^-\ll a \ll a^+$. Thus, in both cases, $J_a:=J(a^-,a^+)\subseteq U_a$ and $\diam(J_a)\leq \delta$. Consequently,
 \begin{equation}\label{eq-mea-sin}
  \V^N_\delta(\{a\})\leq \omega_N\, \tau(a^-,a^+)^N = \begin{cases}
                                                     0 \qquad N>0\,,\\
                                                     1 \qquad N=0\,.
                                                    \end{cases}
 \end{equation}
Note that actually equality holds in \eqref{eq-mea-sin} and as this holds for all $\delta>0$ we obtain $\V^N(A)=0$ for $N>0$ and $\V^0(A)=|A|$ as claimed. Finally, if $A\subseteq X$ is uncountable, the above shows that $\V^0(A)\geq k$ for all $k\in\N$ and so $\V^0(A)=\infty = |A|$.
\end{pr}

\begin{rem}
 The assumption in case $N>0$ of the Proposition \ref{prop-cou-mea} above is satisfied if, e.g., there is a null curve through every point in the \LpLSn. 
\end{rem}

\subsection{Dimension and measure of Minkowski subspaces}\label{subsec-dim-min}

To give some intuition for the geometric notions of dimension and measure that we have introduced, we examine them on linear subspaces of Minkowski spacetime.  
From Proposition \ref{prop-1d-mea-len} it is already clear that nontrivial subspaces on which the Lorentzian metric has negative definite restriction
have geometric dimension $1$ and that $\V^1$ agrees 
with the Lebesgue measure given by proper time on them.  The next lemma shows that the geometric dimension of a spacelike subspace also agrees with
its topological and algebraic dimensions,  and that the corresponding nontrivial Lorentzian measure is a positive multiple of Lebesgue measure on this subspace.
It is followed by an example which shows that the geometric dimension of a null subspace is one less than its topological and algebraic dimension. Here {\em null} subspace refers to a subspace on which the metric has nonnegative semidefinite,  but not positive definite, restriction.

%The next lemma implies it has dimension at least $n-2$.
%As a first step towards computing the dimension of a spacelike $(n-2)$-plane. 
Note that Hausdorff measure on Minkowski spacetime is not canonical,
unless one specifies a choice of Euclidean metric.  However,  
%on a given subspace the Hausdorff measures associated to two different Euclidean metrics are positive multiples of each other.  
Hausdorff dimension %of a subspace 
is canonical,   since the Hausdorff measures of a given dimension associated to 
different Euclidean metrics are mutually
absolutely continuous with respect to each other.

\begin{lem}[Lorentzian and Hausdorff measures on spacelike subspaces]\label{L:spacelike subspace}
The restriction of $\V^k$ to a %$k$-dimensional 
spacelike subspace of the Minkowski spacetime $\R^n_1$ having algebraic dimension $k$ %coincides with a
is a positive multiple of the Hausdorff measure on the same subspace.
%Or at least:  any $k$-dimensional spacelike subspace of the Minkowski spacetime $\R^n_1$ has Lorentzian dimension $k$.
\end{lem}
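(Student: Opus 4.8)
The plan is to show that on a spacelike subspace $V \subseteq \R^n_1$ of algebraic dimension $k$, the Lorentzian measure $\V^k$ is comparable to Hausdorff measure $\mathcal{H}^k$ from above and below by positive constants, which (together with their common scaling behaviour and translation invariance) forces them to be proportional. The geometric input is that on a spacelike subspace the ambient Lorentzian metric $\eta$ restricts to a \emph{negative} definite form --- equivalently, $-\eta|_V$ is a positive definite (Riemannian/Euclidean) inner product on $V$. Thus causally related points $p \le q$ in $V$ are in fact timelike related with $\tau(p,q) = \sqrt{-\eta(q-p,q-p)}$ equal to the $(-\eta|_V)$-length of the segment $[p,q]$. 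In particular, restricted to $V$, the time-separation $\tau$ is literally a Euclidean metric $\rho$ (coming from $-\eta|_V$), and causal diamonds $J(p,q) \cap V$ are just the $\rho$-segments from $p$ to $q$.

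First I would fix an auxiliary Euclidean metric on $\R^n_1$ (say the standard one underlying $d$) so that all diameters and Hausdorff measures have a concrete meaning, and record that on $V$ the two Euclidean structures --- the restriction of the ambient Euclidean metric and the intrinsic $\rho = \sqrt{-\eta|_V}$ --- are strongly equivalent, since any two inner products on a finite-dimensional space are. This is where the constant $c>0$ will come from. Next I would unwind Definition \ref{defi-can-mea} for $N=k$: a covering of a set $A\subseteq V$ by causal diamonds $J(p_i,q_i)$ with $\rho_k(J(p_i,q_i)) = \omega_k\,\tau(p_i,q_i)^k = \omega_k\,\rho(p_i,q_i)^k$. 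Since $\diam$ is measured in $d$ and $\tau = \rho$ on $V$, each diamond is a $\rho$-segment whose $d$-diameter and whose value $\tau(p_i,q_i)$ are comparable by the equivalence constants. The key point is that covering $A$ by these Lorentzian diamonds is essentially covering $A$ by the Euclidean segments $[p_i,q_i]$, and the quantity $\sum_i \omega_k\,\rho(p_i,q_i)^k$ is exactly a $k$-dimensional Hausdorff-type premeasure for the metric $\rho$ built on coverings by \emph{segments} rather than arbitrary sets.

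The main obstacle --- and the crux of the argument --- is comparing coverings by segments (which is what the $\mathcal{J}$-diamonds give) with coverings by arbitrary sets (which is what defines $\mathcal{H}^k$). The lower bound $\V^k \ge c\,\mathcal{H}^k$ is the easy direction: each admissible Lorentzian diamond $J(p_i,q_i)\cap V$ is a set of $d$-diameter $\le \delta$, and $\omega_k\,\tau(p_i,q_i)^k$ dominates a fixed multiple of $(\diam_d)^k$, so any Lorentzian covering is in particular a Hausdorff covering whose associated sum controls $\mathcal{H}^k_\delta$ up to a constant. The reverse bound $\V^k \le C\,\mathcal{H}^k$ is harder because an arbitrary small-diameter set need not be a segment; here I would pass to the comparison with $\mathcal{H}^k$ using the standard fact that $\mathcal{H}^k$ on a $k$-dimensional Euclidean space equals (a normalizing constant times) Lebesgue measure $\mathcal{L}^k$, and that Lebesgue measure on $V$ can be efficiently covered by thin ``diamond'' segments only after first subdividing along the remaining $k-1$ directions. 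The clean way around this is to invoke the equivalence of $\V^k$ and $\mathcal{L}^k$ directly on the model: one computes that for $V = \R^k$ with $\rho$ the Euclidean metric, coverings by the diamonds in $\mathcal{J}$ reproduce, up to the constant $\omega_k$ absorbed in the volume formula from the Minkowski-diamond lemma, exactly the Carathéodory construction of Lebesgue measure, so $\V^k = c\,\mathcal{L}^k$ on $V$.

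Concretely I would argue as follows. By the preceding paragraph, restricted to $V$ the data $(\tau,d)$ is isometric to Euclidean $\R^k$ with an inner-product metric, and causal diamonds correspond to segments; since the construction of $\V^k$ is, by the Remark on Dependence of $\V^N$ on the metric $d$, invariant under strong equivalence of metrics, I may replace $d|_V$ by $\rho$ with no change in $\V^k$. The measure $\V^k$ on $(\R^k,\rho)$ is then a translation-invariant, rotation-invariant (by the $SO^+$-invariance already exploited in the Minkowski-diamond lemma) Borel measure that is finite and positive on bounded sets with nonempty interior, hence by uniqueness of Haar measure it is a constant multiple $c\,\mathcal{L}^k$ of Lebesgue measure, with $c>0$ determined by evaluating both sides on a single diamond. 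Finally, $\mathcal{H}^k$ on $V$ is itself a positive multiple of $\mathcal{L}^k$ (the Hausdorff–Lebesgue comparison in a $k$-dimensional normed space, cf.\ the normalization recalled before Definition \ref{def-vol-cd}), so $\V^k|_V = c'\,\mathcal{H}^k|_V$ for some $c'>0$, which is the claim. The one step requiring genuine care is verifying positivity and finiteness of $\V^k$ on a reference diamond (so that the Haar-uniqueness constant is neither $0$ nor $\infty$); this follows because $\dims(V)=k$ gives $0<\V^k(V)$-local-values via Lemma \ref{lem-dim-mM}, while the simple upper bound Lemma \ref{lem-Nd-mea-tau} applied along a spanning family of spacelike segments yields finiteness on bounded pieces.
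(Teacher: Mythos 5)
Your proposal rests on a misreading of the geometry that invalidates the whole argument. With the paper's signature convention $(-,+,\dots,+)$, a \emph{spacelike} subspace $S$ is one on which $\eta$ restricts to a \emph{positive} definite form, not a negative definite one (a subspace with negative definite restriction is necessarily one-dimensional and is the case already dispatched via Proposition \ref{prop-1d-mea-len}). Consequently no two distinct points of $S$ are causally related, $\tau\equiv 0$ on $S\times S$, and there are no nontrivial causal diamonds with both vertices in $S$. Your central claims --- that $\tau$ restricted to $V$ is a Euclidean metric and that $J(p,q)\cap V$ is a segment --- are therefore false. The coverings that actually compute $\V^k(A)$ for $A\subseteq S$ use diamonds $J(A^-,A^+)$ whose vertices lie \emph{off} $S$; each such diamond meets $S$ in a $k$-dimensional lens $B^+\cap B^-$ (the intersection of two Euclidean $k$-balls), not a segment.

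This also inverts which inequality is hard. The upper bound $\V^k\lesssim \cH^k$ is the easy direction: tile a unit cube $Q\subset S$ into $j^k$ subcubes and cover each by a diamond with vertices at Euclidean height $\pm\sqrt{k}/j$ over its center, giving $\V^k_\delta(Q)\le 2^k\frac{\omega_k}{\alpha_k}\cH^k(B)$. The genuinely hard step is the lower bound: for an \emph{arbitrary} covering diamond one must prove $\cH^k\bigl(S\cap J(A^-,A^+)\bigr)\lesssim \tau(A^-,A^+)^k$. Your claimed shortcut, that $\omega_k\tau(A^-,A^+)^k$ dominates a multiple of $\bigl(\diam J(A^-,A^+)\bigr)^k$, is false: a diamond whose vertices are nearly null-separated has $\tau$ arbitrarily small compared with its Euclidean diameter, yet still cuts $S$ in a thin lens whose volume must be controlled. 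The paper does this by enclosing $B^+\cap B^-$ in a right circular cylinder of height $r_++r_--1$ and radius $r$ determined by Pythagorean relations, and comparing with $\tau(A^-,A^+)^2=(t_++t_-)^2-|b_+-b_-|^2$; nothing in your proposal substitutes for this estimate. The final step you propose (translation invariance plus two-sided bounds forces proportionality to Lebesgue, hence to Hausdorff measure) does match the paper's endgame, but it is contingent on bounds you have not established.
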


\begin{proof}
By Lorentz invariance,  it is enough to establish the lemma for the canonical $k$-dimensional subspace $S=(0,\ldots,0) \times \R^k \subset \R^n_1$.
Hausdorff measure $\cH^k$ is defined using the associated Euclidean product metric on $\R^n_1$.
Since the restriction of $\V^k$ to $S$ is translation invariant,  the result follows as soon as we bound $\V^k$ above and below by constant multiples $c_\pm$ of $\cH^k$ on $S$.

Let $I=[-1,1]$ and $B \subset \R^k$ denote the ball of radius $r_k=\sqrt{k}$ circumscribed around $I^k$, 
%so that the cube occupies a fraction $v_k = 2^k/\cH^k[B]$ of the ball.
and recall $\cH^k[B]=\alpha_k r_k^k$ where $\alpha_{k}:= \pi^{k/2}/\Gamma(\frac{k+2}{2})$ is the volume of the unit 
ball in $\R^k$.
Set $Q=(0,\ldots, 0) \times I^k \subset S$. 
  For each $\delta=2r_k/j>0$, to obtain the easy bound $\V^k_\delta[Q] \le 2^k \frac{\omega_k}{\alpha_k} \cH^k(B)$ 
%To obtain the easy inequality $\V^k_\delta(Q)\le 2$ 
just divide the cube $Q$ into $j^k$ subcubes of length $2/j$ with midpoints $(0,\ldots,0,m_i)$  with $m_i \in \R^k$ and cover
$Q$ by the $j^k$ diamonds 
$$
J((-r_k/j,0,\ldots,0,m_i),(r_k/j,0,\ldots,0,m_i))\,,
$$  
each of which contributes $\omega_k(2{r_k}/j)^k$ to an upper bound $\V^k_\delta(Q) \le {2^k \frac{\omega_k}{\alpha_k}} 
\cH^k(B)$.

To get a lower bound,   let $Q \subset \cup J_i$ denote any countable cover of $Q$ by
closed causal diamonds $J_i = J(A_i^-,A_i^+)$ of diameter at most $\delta$. The Euclidean areas $\cH^k(Q_i)$ of the intersections $Q_i=Q \cap J_i$ 
sum to at least $2^k$.  If we can find $c_->0$ such that $\omega_k\tau(A_i^-,A_i^+)^k \ge c_- {\mathcal H}^k(Q_i)$ for each $i$,  then %$\sum \tau(A_i^-,A_i^+)  \ge 2$; 
since the cover was arbitrary,  %$\V^k_\delta(Q) \ge 2^k$ will follow,  and implies 
$\V^k[Q] \in [c_-2^k,  2^k \frac{\omega_k}{\alpha_k} \cH^k(B)] \subset (0,\infty)$ will follow, so that $Q$ and hence $S$ have Lorentzian dimension $k$. 

Fixing $i$, we may assume the intersection of $J^+(A_i^-)$ with $S$ is a closed $k$-ball $B^-$ of positive radius,  since otherwise there is nothing to prove.
Notice the subgroup of $SO(n,1)$ which fixes $S$ (and %respectively 
$S \cup \{A_i^+-A_i^-\}$) has dimension 
$\frac{(n-k)(n-k-1)}{2}$ (and $\frac{(n-k-1)(n-k-2)}{2}$ respectively).
The difference $n-k-1$ indexes the dimension of the set of alternate locations $A^\pm$
of $A_i^\pm$ whose time separation from each point of $S$ is the same as that of $A_i^\pm$, so that e.g.~$J^\mp(A^\pm)\cap S=B^\pm$.  
One such choice yields $A^\pm=(\pm t_\pm,a_\pm,b_\pm)$ where $t_\pm>0$ and $a_+=a_-\in \R^{n-k-1}$ (hence if 
$k=n-1$, then there are no $a^\pm$), so that
$B^\pm$ is a Euclidean $k$-ball centered at $b_\pm$ with radius $r_\pm = (t_\pm^2 - |a_-|^2)^{1/2}$.  This choice gives
$\cH^k(B^\pm) \le \alpha_{k} r_\pm^k$  and
$$
\tau(A^-,A^+)^2 = (t_++t_-)^2 - |b_+ - b_-|^2\,.
$$
It remains to estimate $\cH^k(B^+ \cap B^-) \lesssim \tau(A^-,A^+)^k$. Observe that the desired estimate is straightforward when
$B^+ \cap B^-=B^+$ or $B^+ \cap B^-=B^-$.  
%Three cases now remain to consider: (a) $t_\pm \in (0,1]$ with $t_++t_- >1$, (b) $1<t_+ <1+t_-$ and (c) $1< t_- <1+t_+$.

In the remaining cases by dilating, we may assume $|b_+-b_-|=1$ and 
estimate $\cH^k(B^+ \cap B^-) \le  (r_++r_--1) \alpha_{k-1} r^{k-1}$ where $B^+ \cap B^-$ is contained in a right circular cylinder of height $r_+ + r_--1$ and radius $r$ 
satisfying the Pythagorean laws
\begin{align*}
r_+^2 &= r^2 + (1-s)^2\,,
\\r_-^2 &= r^2 + s^2\,,
\end{align*} 
see Figure \ref{fig-int-bal}.

Solving for $(r,s)$ yields
\begin{align*}
s &= \frac12 (1 + r_-^2 - r_+^2)\ {\rm and}
\\r^2 &= (r_--s)(r_-+s)\,.
\end{align*} %RMK:  If $\V^k \times \cH^{n-k} = \V^n$ that would force the constant to be 1.
From $-r_+ \le s -1 \le  r_+$ and $r_\pm \le t_\pm$ we find
\begin{align*}
\tau(A^+,A^-)^{2k} &=  (t_++t_--1)^k (t_++t_- +1)^k
\\ &\ge (t_++t_--1)^{k+1} (t_++t_- +1)^{k-1}
\\ & \ge (t_++t_--1)^2(t_- -s)^{k-1} (t_- +s)^{k-1}
\\ & \ge (r_++r_--1)^2r^{2k-2}
\\ &\ge \alpha_{k-1}^{-2} \cH^k(B^+ \cap B^-)^2\,,
\end{align*} 
as desired; i.e., we may take $c_- = {\omega_k}/{\alpha_{k-1}} = \frac{1}{k 2^{k-1}}$.
%INCOMPLETE: It remains to estimate case (b);  the case (c) is similar.
\end{proof}

\begin{figure}
 \begin{tikzpicture}

    \node at (-3,4) [label=above: $\R^k$] {};
    
    %B^+:
    \node[circle,draw, minimum size=200] (c1) at (0,0){};
    \draw [fill] (c1.center) circle [radius=0.06];
    \node at (c1.center) [label=left: $b^+$] {};
    \draw [dashed] (c1.center) -- (c1.135) {};
    \node at ($ (c1.center)!.75!(c1.135) $) [label=below: $r_+$]{};
    \node at (c1.225) [left=4pt, below=4pt]{$B^+$};
    
    %B^-:
    \node[circle,draw, minimum size=200] (c2) at (5,0){};
    \draw [fill] (c2.center) circle [radius=0.06];
    \node at (c2.center) [label=right: $b^-$] {};
    \draw [dashed] (c2.center) -- (c2.45) {};
    \node at ($ (c2.center)!.75!(c2.45) $) [label=below: $\ r_-$]{};
    \node at (c2.315) [right=4pt, below=4pt]{$B^-$};

    %cylinder
    \draw[red] (c1.center) -- (c2.center);
    \draw[blue] (c1.center) -- (c1.45) -- (c2.center);
    \draw[blue] (c1.45) -- (intersection of c1.45--c1.315 and c1.center--c2.center);
    \node at ($ (c1.45)!0.5!(intersection of c1.45--c1.315 and c1.center--c2.center) $) [text=blue, right=1pt]{$r$};
    % \draw[fill] (intersection of c1.45--c1.315 and c1.center--c2.center) circle [radius=0.06];
    \draw[decoration={brace,mirror,raise=5pt},decorate, red]  (c1.center) -- node[font=\small, text=red, below=6pt] 
{$1-s$\ \ \ \ } ($ (c1.center)!0.49!(c2.center) $);
%     \node at ($ (c1.center)!0.18!(c2.center) $) [label={[font=\small, text=red, below]$1-s$}]{};
\draw[decoration={brace,mirror,raise=5pt},decorate, red] ($ (c1.center)!0.51!(c2.center) $) -- node[font=\small, 
text=red, below=6pt] {$s$} (c2.center);

    \draw[green] ($ (c1.135)!0.795!(c1.45) $) -- ($ (c1.45)!0.205!(c2.45) $) -- ($ (c2.225)!0.205!(c2.315) $) -- ($ 
(c1.225)!0.795!(c1.315) $) -- ($ (c1.135)!0.795!(c1.45) $);

\end{tikzpicture}
\caption{A schematic drawing of the (green) cylinder of height $r_+ + r_- - 1$ and radius $r$ containing 
$B^+\cap B^-\subseteq\R^k$ from the proof of Lemma \ref{L:spacelike subspace}.}\label{fig-int-bal}
\end{figure}
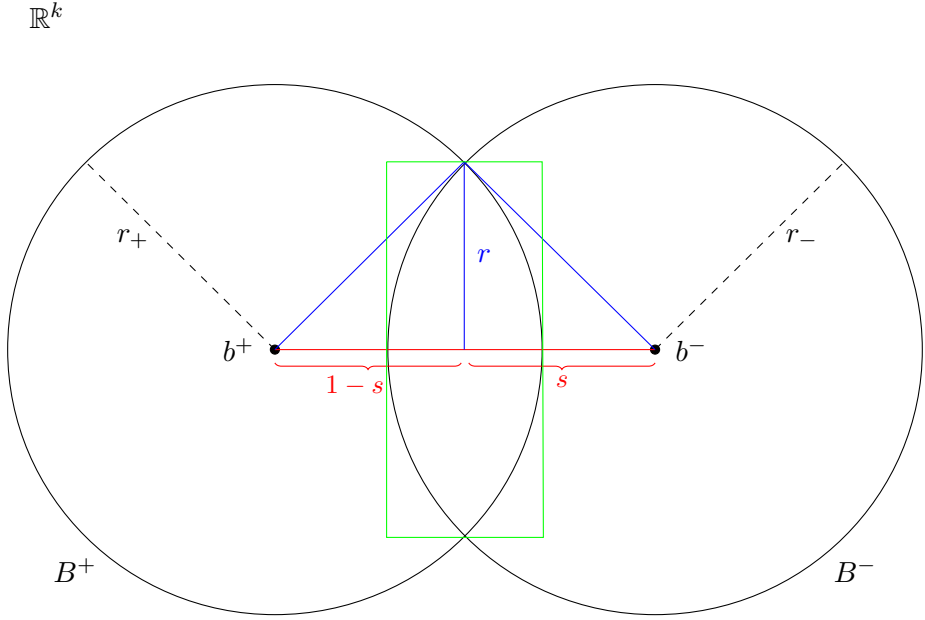

\begin{lem}[Linear null hypersurfaces have {geometric codimension two}]\label{L:null subspaces}
Let $n\geq 2$.  Let $S \subset \R^n_1$ be a null subspace of algebraic dimension $k \ne n$ in Minkowski spacetime. 
Then $\dims(S) =k-1$ and the Lorentzian measure splits as  $\V^{k-1}=c\,\mathcal{H}^{k-1}\times \mathcal{H}^0$ on 
$S=R \times \R \nu$, where $R$ is any spacelike subspace of $S$,  $\nu\in S$ is a null vector and $c=c(R,\nu)>0$.
\end{lem}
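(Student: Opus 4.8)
The plan is to reduce to a canonical model, isolate the single geometric fact that makes the null direction ``free,'' and then read off both the dimension and the measure splitting by comparison with the spacelike computation of Lemma~\ref{L:spacelike subspace}. For the normalisation: any $\Lambda\in O(n,1)$ preserves $\tau$ and carries the Euclidean background metric $d$ to a strongly equivalent one, so by the Remark on the dependence of $\V^N$ on $d$ the measure $\V^N$ is $\Lambda$-invariant; exactly as in Lemma~\ref{L:spacelike subspace} this lets me assume $S=\mathrm{span}(\nu,e_2,\dots,e_k)$ with $\nu=e_0+e_1$ null and $R=\mathrm{span}(e_2,\dots,e_k)$. Since $g|_S$ is positive semidefinite its kernel is exactly $\R\nu$, whence $g(\nu,\cdot)|_S=0$, every translate $R+s\nu$ is an affine spacelike $(k-1)$-plane, and $S=R\oplus\R\nu$.

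The heart of the matter is a covering estimate asserting that the proper time needed to enclose a null slab is controlled by its transverse radius alone. Concretely, for $T>0$ and any null length $L$ the slab $\{s\nu+y:\ s\in[0,L],\ y\in R,\ |y|\le T\}$ lies in a causal diamond $J(p,q)$ with $\tau(p,q)$ comparable to $T$ and independent of $L$: writing $p=(-\alpha,-\beta,\vec 0)$ and $q=(L+\alpha,L+\beta,\vec 0)$, containment forces $\alpha^2-\beta^2\ge T^2$ while $\tau(p,q)^2=4(\alpha-\beta)(L+\alpha+\beta)$, and minimising over admissible $(\alpha,\beta)$ drives $\tau(p,q)$ down towards $2T$ by pushing the apexes far out along $\nu$ (a large boost). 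Since this inflates the Euclidean diameter, under the constraint $\diam\le\delta$ I instead take $L\asymp\delta$ per diamond, obtaining $\tau(p,q)\asymp T$ with $\diam(J(p,q))\lesssim\delta$. This generalises the mechanism behind Corollary~\ref{cor-nul-0-dim} to a transverse $(k-1)$-dimensional profile, the limit $T\searrow0$ recovering the zero-$\tau$ diamonds used there.

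Tiling a bounded transverse box of size $a$ by $(k-1)$-cubes of side $T$ and the $\nu$-direction into segments of length $\asymp\delta$, and covering each resulting slab by such a diamond, the total cost is $\asymp a^{k-1}\ell\,\delta^{-1}T^{\,N-(k-1)}$. For $N>k-1$ this tends to $0$ as $T\searrow0$ at fixed $\delta$, so $\V^N$ vanishes on bounded pieces and Lemma~\ref{lem-dim-cup} gives $\dims(S)\le k-1$; the same covering at $N=k-1$ applied to $D\times[\text{bounded }\nu\text{-range}]$ with $\mathcal H^{k-1}(D)=0$ has cost $\lesssim\ell\,\delta^{-1}\sum_jT_j^{k-1}\to0$, so any family of $\mathcal H^{k-1}$-null slices is $\V^{k-1}$-null. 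For the reverse direction, each slice $E+s\nu$ sits in the spacelike plane $R+s\nu$, so translation invariance and Lemma~\ref{L:spacelike subspace} give $\V^{k-1}(E+s\nu)=c\,\mathcal H^{k-1}(E)$ with $c=c(R,\nu)>0$; disjointness and countable additivity of the Borel measure $\V^{k-1}$ (Proposition~\ref{pro-con-mea}) yield $\V^{k-1}(E\times\{s_1,\dots,s_M\})=Mc\,\mathcal H^{k-1}(E)$, so a genuine slab has $\V^{k-1}=\infty$ while a single slice forces $\V^{N}=\infty$ for $N<k-1$, giving $\dims(S)\ge k-1$ and hence $\dims(S)=k-1$. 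Finally, splitting an arbitrary Borel $A\subseteq S$ into its countably many slices $A_s=\{r:\ r+s\nu\in A\}$ of positive measure and the null-slice remainder $A_0$, additivity plus the slice formula give $\V^{k-1}(A\setminus A_0)=c\sum_s\mathcal H^{k-1}(A_s)$ and the null-slice estimate gives $\V^{k-1}(A_0)=0$; this is exactly $\V^{k-1}=c\,\mathcal H^{k-1}\times\mathcal H^0$ on $S=R\times\R\nu$.

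The decisive, least obvious step is the covering estimate: that enclosing a long null slab costs proper time of order its transverse radius rather than its null length, which is what drops the dimension by one. Everything after it is bookkeeping against the spacelike Lemma~\ref{L:spacelike subspace}. The one genuinely technical nuisance is the upper bound in the splitting, namely controlling the contribution of the possibly uncountable family of $\mathcal H^{k-1}$-null slices; this succeeds only because over a bounded $\nu$-range the diameter constraint forces merely $\asymp\delta^{-1}$ null layers, a finite factor that still annihilates the vanishing transverse cost.
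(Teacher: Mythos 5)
Your dimension argument is correct and is, at bottom, the same mechanism as the paper's, in different coordinates. The paper observes that $J_{\delta,t}\cap S$ contains a cylinder of height $\sim\delta$ along $\nu$ and transverse radius $\sim(\delta t)^{1/2}$, while the diamond's timelike diameter is also $\sim(\delta t)^{1/2}$; your boosted diamond enclosing the slab $[0,L]\nu+B_T^R$ with $\tau\asymp T$ and Euclidean diameter $\lesssim\delta$ (for $L\asymp\delta$, $T\le\delta$) is exactly this estimate with $T$ playing the role of $(\delta t)^{1/2}$, and your minimisation over $(\alpha,\beta)$ checks out. Where you genuinely diverge is the bookkeeping: the paper fixes $T=\delta^{1/\epsilon}$ and concludes $\V^{k-1+\epsilon}(Q)<\infty$ for each $\epsilon$, whereas you decouple $T$ from $\delta$ and send $T\searrow0$ at fixed $\delta$, obtaining the slightly stronger conclusion $\V^N(Q)=0$ for every $N>k-1$ in one stroke, without the $\epsilon$-indexed family of coverings. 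The lower bound ($\dims(S)\ge k-1$ via a spacelike $(k-1)$-slice and Lemma \ref{L:spacelike subspace}) and the computation of $\V^{k-1}$ on rectangles $R'\times N$ (countable additivity over slices for countable $N$, divergence for uncountable $N$, your $\cH^{k-1}$-null-projection estimate for the degenerate rectangles) coincide with the paper's treatment.

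There is, however, one false step in your final paragraph: the claim that $\V^{k-1}(A_0)=0$ whenever every $\nu$-slice of $A_0$ is $\cH^{k-1}$-null. Take the tilted spacelike subspace $R_\theta=\mathrm{span}(e_2+\nu,e_3,\dots,e_k)\subset S$: every $\nu$-slice of $R_\theta$ is a $(k-2)$-dimensional affine plane in $R$, hence $\cH^{k-1}$-null, yet Lemma \ref{L:spacelike subspace} gives $\V^{k-1}(R_\theta\cap B)>0$ for any bounded set $B$ with nonempty interior in $R_\theta$ (already for $k=2$ a tilted spacelike segment in the null $2$-plane has positive $\V^1$). Your covering estimate only controls \emph{product} sets $D\times I$ with $\cH^{k-1}(D)=0$, and cannot be upgraded: within a single $\nu$-layer of width $\delta$ the union of the slices may project onto a set of full transverse measure. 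Relatedly, the ``product'' $\cH^{k-1}\times\cH^0$ is not unambiguously defined on non-product Borel sets because $\cH^0$ is not $\sigma$-finite (the two iterated integrals disagree on $R_\theta$). The paper sidesteps this by asserting the splitting only on sets of the form $R'\times N$, and that is the scope in which the identity is true; your proof of the lemma survives once the last paragraph is restricted to rectangles, but as written the extension to arbitrary Borel $A$ asserts something false.
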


\begin{proof}
The case $k=1$ follows from Corollary \ref{cor-nul-0-dim} and Proposition \ref{prop-cou-mea}, 
thus let $k \ge 2$. Moreover, $S$ contains 
a spacelike subspace of algebraic dimension $k-1$ as well as 
a null-vector $\nu \ne 0$. Lemmas  \ref{lem-dim-mM}(i) and \ref{L:spacelike subspace} imply $\dims(S) \ge k-1$;  the 
remainder of the argument shows the opposite inequality.

Fixing a Euclidean metric on $\R^n_1$, we may suppose $\nu=(e_1-e_2)/\sqrt{2}$ and $e_1$ both have Euclidean length $1$ without loss of generality.  Now $S$ can be tiled by translates of a Euclidean unit cube $Q\subset S$ centered at the origin and having one of its faces orthogonal to $\nu$ in the Euclidean sense.  In view of Lemma \ref{lem-dim-cup}, it remains only to show $\dims(Q) \le k-1$.  
%Given $\delta>0$, divide $Q$ into strips of width less than $\delta$ whose boundaries are Euclidean orthogonal to $\nu$.  These strips have unit Euclidean cross-sectional area.

Now observe that the intersection of the causal cone $J^-(\frac12 (\delta \nu + t e_1))$ with $S$ is a paraboloid of revolution having Euclidean focal length $\sim t$.
Taking $t \ll \delta$, the intersection of the causal diamond $J_{\delta,t}:=J(-\frac12(\delta \nu + t e_1), \frac12 (\delta 
\nu + t e_1))$ with $S$ contains a right circular cylinder of Euclidean height $\sim \delta$ parallel to $\nu$ and 
radius $\sim (\delta t)^{1/2}$ in the Euclidean-orthogonal directions.
If $k=1$ then $S$ is a null line and $Q$ can be covered by a single causal diamond of timelike diameter as small as we please,  hence $\dims(S)=0$.
For $k>1$, fixing $0<\epsilon<1$ and choosing $t=\delta^{-1+2/\epsilon}$ implies that $J_{\delta,t}$ has timelike diameter $\sim \delta^{1/\epsilon}$ and Euclidean diameter $\sim \delta$ as $\delta \to 0$,
while $K:= J_{\delta,t} \cap S$ has Hausdorff measure 
$\cH^k(K) \gtrsim\delta (\delta t)^{\frac{k-1}{2}} = \delta^{1+(k-1)/\epsilon}$.
%and Euclidean diameter $\lesssim \delta$.  
It is possible to see that $Q$ can be covered by $\lesssim \delta ^{-1-(k-1)/\epsilon}$ translates of $K$,
thus $\V^{k-1+\epsilon}_\delta(Q) \lesssim 1$. Since $\delta>0$ was arbitrary, $\V^{k-1+\epsilon}(Q)<\infty$ for each 
$0<\epsilon<1$. Thus Corollary \ref{C:dimension} yields $\dims(Q) \le k-1$ as desired.

Finally, we show that $\V^{k-1}$ splits on $S=R\times \R \nu$ as claimed. First, note that by Lemma \ref{L:spacelike 
subspace} we have that $\V^{k-1}\rvert_R = c\mathcal{H}^{k-1}$. Second, it suffices to consider sets of the form $R'\times 
N$, where $N\subseteq \R\nu$ and $R'\subseteq R$. Let $N\subseteq\R\nu$ be countable and let 
$R'\subseteq R$. Then, denoting $N=(n_i)_{i\in\N}$, we have that
\begin{align}
\V^{k-1}(R'\times N) &= \sum_i \V^{k-1}(R'\times\{n_i\})\\
&= c\,\sum_i \mathcal{H}^{k-1}(R') = (c\,\mathcal{H}^{k-1}\times\mathcal{H}^0)(R'\times N)\,.
\end{align}
If $N$ is not countable the same argument shows that $V^{k-1}(R'\times N) \ge c\,\mathcal{H}^{k-1}(R')\cdot\infty = 
(c\,\mathcal{H}^{k-1}\times\mathcal{H}^0)(R'\times N)$.
\end{proof}

\begin{figure}[h!]
% \begin{tikzpicture}
% \begin{axis}[
%   domain=-5:5,
%   y domain=-5:5,
%   xlabel={$\partial_i$},
%   ylabel={$\partial_j$},
%   xticklabels={,,},
%   yticklabels={,,},
% %   zlabel{$\partial_0 = \partial_t$},
%   xlabel style={below right},
%   ylabel style={above left},
% %   zlabel style={above right},
%         ]
% % \addplot3[
% %   surf,
% %   colormap/blackwhite,
% %   shader=interp,
% %   domain=-1:1,
% %   domain y=-1.3:1.3,
% % ] 
% % {exp(-x^2-y^2)};
% \addplot3[fill=green, opacity=0.5]
% coordinates{
% (-1,-1,-1)
% (1,-1,-1)
% (1,1,1)
% (-1,1,1)
% };
% \addplot3 [only marks,mark=*] coordinates { (2,2,4) };
% \end{axis}
% 
% \end{tikzpicture}

\tdplotsetmaincoords{70}{30}
% \begin{tikzpicture}[x=0.5cm,y=0.5cm,z=0.3cm,>=stealth]
\begin{tikzpicture}[scale=0.5, tdplot_main_coords]
\coordinate (O) at (0,0,0);
% axes
\draw[dashed, ->] (xyz cs:x=-13.5) -- (xyz cs:x=13.5) node[above] {$\partial_i$};
\draw[dashed, ->] (xyz cs:y=-14.5) -- (xyz cs:y=14.5) node[right] {$\partial_0=\partial_t$};
\draw[dashed, ->] (xyz cs:z=-9.5) -- (xyz cs:z=9.5) node[above] {$\partial_j$};

% \draw[green] (-2,-1) -- (0,1);

% null subspace S
\fill[green,opacity=0.5] (-6,6,6) -- (6,6,6) -- (6,-4,-4) -- (-6,-4,-4) -- cycle;
\node[green] at (5.8,0,-2) {$S$};

% causal diamond+
\node[blue, fill,circle,inner sep=1.5pt,text=blue,label={[text=blue,left]:$\frac{1}{2}(\delta\nu + t e_1)$}] (Ap) at 
(-4,0,7) {};
\draw[blue] (Ap) -- (-7,5,-5);
% \draw[blue] (Ap) -- (-7,-5,-5);
\draw[blue] (Ap) -- (-1,10,-5);

% causal diamond-
\node[blue, fill,circle,inner sep=1.5pt,text=blue,label={[text=blue,below]:$-\frac{1}{2}(\delta\nu + t e_1)$}] (Am) at 
(5,1,-7) {};
\draw[blue] (Am) -- (-7,15,5);
\draw[blue] (Am) -- (-11,-1,5);

% intersection
% \fill[red, opacity=0.8] (-1,-1,-1) -- (0,0,0) -- (1,1,1) -- (1,-1,-1) -- cycle;
% \draw [red] plot [smooth cycle] coordinates {(-4,0,0) (-3.5,0,5.5)  (1,0.1,2.8) (3,0,0) (2,3.2,-6.4)};
% \draw [red] plot [smooth cycle] coordinates {(-3.4,0,0) (-3.3,-0.2,5.3)  (.8,-0.1,2.5) (3,0,0) (2,3.2,-6.2)};
\path [fill=red, opacity=0.7, draw=black] plot [smooth cycle] coordinates {(-3.4,0,0) (-2.3,-0.2,4.3)  (.8,-0.1,2.5) (3,0,0) 
(1,3.2,-5.2)};
% \fill [red, dashed] (-3.4,0,0) -- (-2.3,-0.2,4.3) -- (.8,-0.1,2.5) -- (3,0,0) -- (1,3.2,-5.2);
\draw[|-|] (-5.5,0.5,-2) -- (0,1.5,-6);
\node[red] at (-2.5,0,-4) {$\delta$};
\draw[|-|] (2,1,-4.7) -- (2.5,3,-4.3);
\node[red, rotate=35] at (3.2,0.8,-4.5) {\footnotesize $\sqrt{\delta t}$};

% \draw[semithick] (0,3) -- (4,2);% top line
% 	\draw[semithick] (0,0) arc (270:90:0.5 and 1.5);% left half of the left ellipse
% 	\draw[semithick] (4,1.5) ellipse (0.166 and 0.5);% right ellipse
% 	\draw (-1,1.5) node {$\varnothing d_1$};
% 	\draw (3.3,1.5) node {$\varnothing d_2$};
% 	\draw[|-,semithick] (0,-0.5) -- (4,-0.5);
% 	\draw[|->,semithick] (4,-0.5) -- (4.5,-0.5);
% 	\draw (0,-1) node {$x=0$};
% 	\draw (4,-1) node {$x=l$};
% 	\draw[dashed,color=blue] (0,0) arc (-90:90:0.5 and 1.5);% right half of the left ellipse

%   (xyz cs:z=-2) -- 
%   +(0,7) coordinate (u) -- 
%   (xyz cs:y=10) -- 
%   +(5,0) -- 
%   ++(xyz cs:x=5,z=5) coordinate (v) --
%   (0,-7) coordinate (w) --
%   cycle;

% null subspace S
% \draw[green] 
%   (xyz cs:z=-2) -- 
%   +(0,7) coordinate (u) -- 
%   (xyz cs:y=10) -- 
%   +(5,0) -- 
%   ++(xyz cs:x=5,z=5) coordinate (v) --
%   (0,-7) coordinate (w) --
%   cycle;
% \draw[green] (u) -- (v);
% \draw[green] (-5,7) -- (-5,0) -- (w);
% \draw[green] (3,0) |- (0,5);

% Dots and labels for P, Q
% \node[fill,circle,inner sep=1.5pt,label={left:$Q(-5,-5,7)$}] at (v) {};
% \node[fill,circle,inner sep=1.5pt,label={above:$P(3,0,5)$}] at (3,5) {};
% The origin
% \node[align=center] at (3,-3) (ori) {(0,0,0)\\\text{origin}};
% \draw[->,help lines,shorten >=3pt] (ori) .. controls (1,-2) and (1.2,-1.5) .. (0,0,0);
\end{tikzpicture}
\caption{The {\red intersection} (in red) of the causal cones {\blue $J^\pm(\mp(\delta\nu+t e_1))$} (in blue) with 
the null subspace~{\green $S$} (in green) from the proof of Lemma \ref{L:null subspaces}.}\label{fig-nul-sub}
\end{figure}
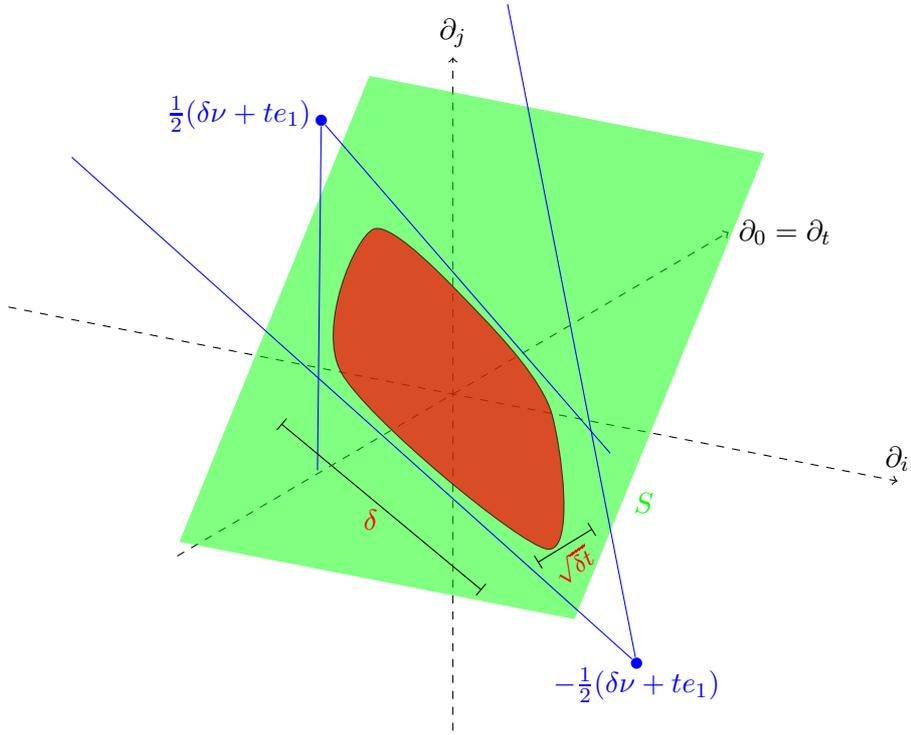

% 
% \todo{RM: 1) Is there an intuitive description of the nontrivial Lorentzian measure on a null subspace?  May dependent on whether or not restriction of $g$ to $S$ is nonnegative-definite.  If yes,\\ I'd guess its proportional to $\cH^{k-1}$ on each subspace of $S$\\ orthogonal to $\nu$; maybe $const \cH^{k-1} \otimes \cH^0$? 2) is it worthwhile to\\ extend the results above to smooth Lorentzian manifolds? 3) the proof of the last example would benefit from a figure
% 4) For timelike subspaces,  I think we can estimate the cylinder above and below as $hr^{k-1}$ which will not yield better than dimension $k$ unless $\delta = o(1)$ as $t\to 0$, which may violate timelikeness of $A^+-A^-$}

\section{Continuous spacetimes}\label{sec-con-st}
Throughout this section $(M,g)$ is a {continuous}, strongly causal and causally plain spacetime of dimension $n$. Here {\em continuous} means that although the manifold $M$ is smooth, the Lorentzian metric tensor $g$ various continuously (but not necessarily smoothly) from point to point.

\subsection{Doubling of causal diamonds}\label{subsec-dou-cd}
In the metric theory \emph{doubling measures} and \emph{doubling metric spaces} are a convenient notion that generalize 
finite dimensional spaces with Lebesgue or Hausdorff measures. Since these concepts are formulated in terms of balls we have 
to adapt it to causal diamonds for our purpose.
% Note that Federer in \cite[Subsec.\ 2.8]{Fed:69} developed the theory of 
% Carath\'eodory's construction in sufficient generality.
\medskip

A main technique in our work is to compare volumes of causal diamonds and suitable \emph{enlargements} of diamonds. Federer, 
in \cite[Subsec.\ 2.8]{Fed:69}, develops a general concept of enlargement of sets, which he then uses in the Carath\'eodory 
construction of measures. However, we need something stronger and so give an axiomatic description here and establish in the 
following subsection that continuous spacetimes are a class of examples where this construction is possible.

%\marginpar{RM: What is $\J$? Where was it defined?}

\begin{defi}[Enlargement of causal diamonds]
Adopting the setting and notation of Definition \ref{defi-can-mea}, let $\mathcal{F}\subseteq \J$ be a family of 
causal diamonds. Let 
$\varDelta\colon\mathcal{F}\rightarrow[0,\infty)$ be bounded and let $1<\xi<\infty$. Then the $\varDelta,\xi$-enlargement 
$\widehat{J}$ of a causal diamond $J\in\mathcal{F}$ is defined as
\begin{align}
 \widehat{J}:=\bigcup \{J'\in\mathcal{F}:\, J\cap J'\neq\emptyset,\, \varDelta(J')\leq \xi\,\varDelta(J)\}\,.
\end{align}
Let $\mathcal{F}'\subseteq\mathcal{F}$. We call $(\mathcal{F},\mathcal{F}',\varDelta, \xi)$ a \emph{reasonable enlargement} 
\begin{enumerate}
 \item if every point $p$ lies in one causal diamond $J$ of $\mathcal{F}'$, and
 \item if there exists a constant $\Xi\geq 1$ such that for all $J,J'\in \mathcal{F}'$ one has that $J\cap J\neq\emptyset$ 
and $\varDelta(J')\leq \xi\, \varDelta(J)$, then there is a causal diamond $\tilde J\in\mathcal{F}$ with $J,J'\subseteq 
\tilde J$,  $\widehat{J}\subseteq \tilde J$ and $\varDelta(\tilde J)\leq \Xi\, \varDelta(J)$.
\end{enumerate}
Finally, a \emph{doubling} of causal diamonds is any reasonable $(\mathcal{F},\mathcal{F}',\varDelta,2)$-en\-largement.
\end{defi}

\subsection{Cylindrical neighborhoods}\label{subsec-cyl-nhd}
A useful tool for studying continuous spacetimes are \emph{cylindrical neighborhoods} introduced by Chru\'sciel and Grant in \cite[Def.\ 1.8]{CG:12}. We now establish the existence of a refined version of such neighborhoods adapted to our purpose. Moreover, after we have constructed them we will use them throughout this section without always recalling all the details. See also Figures \ref{fig-cyl-nhd}, \ref{fig-cyl-nhd-zoom} for schematic drawings of this constructions.

For a point $p\in M$ and a neighborhood $W$ of $p$ we define $J^\pm(p,W)$ as the \emph{local causal future and past}, 
i.e., $J^\pm(p,W):=\{q\in W:$ there is a future/past directed causal curve connecting $p$ to $q$ that is contained in $W\}$. 
Moreover, for $p,q\in M$ we set $J(p,q,W):=J^+(p,W)\cap J^-(p,W)$.

\begin{lem}[Cylindrical neighborhoods and doubling]\label{lem-adv-cyl-nhd}
Given $C> 1$, every point $p_0\in M$ has a neighborhood $W\subseteq M$ that has the following properties:
%\marginpar{\red RM: These inequalities seems to go the wrong way, unless $C<1$}
 \begin{enumerate}
  \item The neighborhood $W$ is an open, connected, relatively compact coordinate chart such that $\eta_{C^{-1}} {\prec g \prec} \eta_C$ on $W$.
  \item It is cylindrical, i.e., $W=(0,B)\times Z$ and $p_0$ has coordinates $(\frac{B}{2},0)$, where $Z\subseteq \R^{n-1}$.
  \item The coordinate vector field $\partial_t=\partial_{x^0}$ is uniformly timelike on $W$.
  \item There is a smaller, open neighborhood $W'\subseteq (a,b)\times V\subseteq W$ of $p_0$ (with $V\subseteq Z$) that is causally convex in $W$ (i.e., $J(p,q,W)=J(p,q)$ for all $p,q\in W'$) and such that $b-a< \frac{B}{4\lambda}$,  $\forall p=(t,x),q=(s,x)\in W'$: $\hat p=(t- \lambda (s-t),x),\hat q=(s+ \lambda (s-t),x)\in W$, where $\lambda=3 C^2 + 2\geq 5$.
  \item\label{lem-adv-cyl-nhd-enl}
 Moreover, $\forall p=(t,x),q=(s,x), p'=(t',x'),q'=(s',x')\in W'$ with $p\ll q$, $p'\ll q'$, $s'-t'\leq 2 (s-t)$ and $J(p,q)\cap J(p',q')\neq \emptyset$ we have $J(p',q')\subseteq J(\hat p,\hat q,W)\subseteq W$.
  \item Finally, $W$ can be made arbitrarily small and globally hyperbolic.
 \end{enumerate}
 We will refer to $(W',W)$ as \emph{cylindrical neighborhood} of $p_0$.
\end{lem}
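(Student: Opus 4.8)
The plan is to construct $W$ as a Chru\'sciel--Grant cylindrical chart with an extra cone pinching, and then to reduce the doubling estimate \ref{lem-adv-cyl-nhd-enl} to an elementary containment between two Minkowski diamonds, exploiting the cone sandwich $\eta_{C^{-1}}\prec g\prec\eta_C$.

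First I would set up the chart and dispense with (i)--(iv). Since $g$ is continuous, at $p_0$ I choose linear coordinates in which $g(p_0)$ is the standard Minkowski metric $\eta_1$; as the light cones of $g$ vary continuously, on a sufficiently small relatively compact product neighborhood $(0,B)\times Z$ one has $\eta_{C^{-1}}\prec g\prec\eta_C$ and $\partial_t$ uniformly timelike, giving (i)--(iii) (centering $p_0$ at $(\tfrac B2,0)$ is a translation); this is the cylindrical neighborhood of \cite[Def.\ 1.8]{CG:12}. For (iv), strong causality supplies a causally convex $W'=(a,b)\times V\subseteq W$ around $p_0$ (intersect finitely many chronological diamonds inside $W$), and I then shrink $a,b$ symmetrically toward $\tfrac B2$ and shrink $V$ so that $b-a<\tfrac{B}{4\lambda}$. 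Because any $p=(t,x),q=(s,x)\in W'$ satisfy $0<s-t<b-a$, the shifted points $\hat p,\hat q$ acquire time coordinates in $\bigl(a-\lambda(b-a),\,b+\lambda(b-a)\bigr)\subseteq(0,B)$, while their common spatial coordinate stays in $V\subseteq Z$; hence $\hat p,\hat q\in W$.

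The heart is \ref{lem-adv-cyl-nhd-enl}. Fix $p,q,p',q'\in W'$ as in the hypothesis and write $\Delta:=s-t$, $\Delta':=s'-t'\le 2\Delta$. By causal convexity the diamonds coincide with their local versions in $W$, and inside $W$ every $g$-causal curve is $\eta_C$-timelike while every $\eta_{C^{-1}}$-causal curve is $g$-causal. Consequently, working in the chart, it suffices to prove the purely Minkowskian inclusion
\[
J^{\eta_C}(p',q')\subseteq J^{\eta_{C^{-1}}}(\hat p,\hat q),
\]
since the left-hand side contains $J(p',q')$ and the right-hand side, being convex and (after shrinking $V$) contained in $W$, is contained in $J(\hat p,\hat q,W)$; the final inclusion $J(\hat p,\hat q,W)\subseteq W$ holds by definition. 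The nonempty intersection $J(p,q)\cap J(p',q')\neq\emptyset$ forces the base points together: picking $r$ in the intersection and using the $\eta_C$-cone inequalities yields $|x-x'|\le\tfrac32 C\Delta$, together with $t-2\Delta\le t'\le t+\Delta$ and $t\le s'\le t+3\Delta$.

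To close the Minkowskian inclusion I would use convexity: $J^{\eta_C}(p',q')$ is a compact convex bicone, hence the convex hull of its two apexes $p',q'$ and its equatorial sphere, so the convex target contains it as soon as it contains these extreme points. Each check compares $|x-x'|$ (enlarged by at most $\tfrac12 C\Delta'\le C\Delta$ at the equator) against $\tfrac1C$ times a time gap of the form $(\lambda-O(1))\Delta$; the narrow slope $1/C$ is compensated by the large time extent $(1+2\lambda)\Delta$ of the enlarged diamond, and the inequalities close precisely when $\lambda\gtrsim C^2$. Tracking the worst case (the equatorial rim, which demands $\lambda\ge\tfrac52 C^2+1$) confirms that $\lambda=3C^2+2$ is comfortably sufficient. \emph{I expect this constant-chasing to be the main obstacle}: the geometric idea --- a long thin diamond swallows a short fat one --- is transparent, but one must verify the cone inequalities at every extreme point and confirm that $\lambda=3C^2+2$ dominates each threshold.

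Finally, for (vi) I would observe that $W$ may be shrunk freely (take $B$ and $\diam Z$ small). Global hyperbolicity then follows from the cone sandwich: a $g$-causal curve in $W$ is $\eta_C$-causal, hence has uniformly bounded $d^h$-length since $\partial_t$ is uniformly timelike and its coordinate time is confined to $(0,B)$, which gives non-total imprisonment; moreover each $J(p,q)$ lies inside a compact $\eta_C$-diamond in $\overline W$ and is closed, hence compact. This establishes $(W',W)$ as the desired cylindrical neighborhood.
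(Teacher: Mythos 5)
Your proposal is correct, but for the key step (v) it takes a genuinely different route from the paper. The paper never checks the whole diamond $J(p',q')$ against $J(\hat p,\hat q)$: it only verifies that the two \emph{vertices} satisfy $\hat p\leq p'$ and $q'\leq \hat q$, by computing that the difference vectors $v_-=p'-\hat p$ and $v_+=\hat q-q'$ are $\eta_{C^{-1}}$-causal (a one-line estimate $\eta_{C^{-1}}(v_\pm,v_\pm)\leq -\tfrac{1}{C^2}((\lambda-2)\Delta)^2+9C^2\Delta^2=0$ using the crude bound $|x-x'|\leq 3C\Delta$ from a point in the intersection), and then lets transitivity of $\leq$ do the rest. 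Your route instead proves the strictly stronger flat-diamond inclusion $J^{\eta_C}(p',q')\subseteq J^{\eta_{C^{-1}}}(\hat p,\hat q)$ via extreme points, which forces you to also handle the equatorial rim; there the paper's bound $|x-x'|\leq 3C\Delta$ would \emph{not} suffice (it gives $4C\Delta$ against a slab half-width of only $(3C+\tfrac1C)\Delta$), so your argument genuinely depends on the sharper estimate $|x-x'|\leq\tfrac32 C\Delta$ obtained by bounding the intersection point against \emph{both} diamonds' half-heights. With that estimate your thresholds ($\lambda\geq\tfrac52C^2+1$ at the equator, $\lambda\geq\tfrac32C^2+2$ at the apexes) do close under $\lambda=3C^2+2$, so the constant-chasing you flag as the main obstacle does work out --- but be aware it is tighter than you suggest, and the transitivity trick buys a much shorter proof that needs only the two apex checks. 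The remaining differences are cosmetic: the paper imports the chart from Chru\'sciel--Grant \cite[Prop.~1.10(i)]{CG:12} and global hyperbolicity from the comparison-metric argument of \cite[Thm.~2.2]{SS:18}, where you rebuild both by hand (your direct argument for compactness of $J(p,q)$ tacitly needs closedness of causal diamonds, i.e.\ a limit-curve argument, which the citation supplies).
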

\begin{pr}
 By \cite[Prop.\ 1.10(i)]{CG:12} (where $C$ is fixed to $C=2$, but could be an arbitrary constant greater than one) there is a cylindrical neighborhood $W=(0,B)\times Z$ such that $\eta_{C^{-1}}\prec g \prec \eta_C$ on $W$ and $\partial_t$ is timelike on $W$. We now will work in these coordinates in $\R^n$ and suppress for convenience the chart. By construction, $[\frac{B}{4},\frac{3B}{4}]\times Z$ is a neighborhood of $p_0$, so by strong causality there is an open neighborhood $W'$ of $p_0$ that is causally convex in $W$ and contained in $(a,b)\times V\subseteq [\frac{B}{4},\frac{3B}{4}]\times Z$. In particular, we have $\frac{B}{4}\leq a < b \leq \frac{3B}{4}$. Now set $\lambda:=3 C^2 + 2> 5$, and let $p=(t,x),q=(s,x), p'=(t',x'),q'=(s',x')\in W'$ with $p\ll q$, $p'\ll q'$, $s'-t'\leq 2 (s-t)$ and $z=(r,y)\in J(p,q)\cap J(p',q')\neq \emptyset$. Then $p< q'$ and $p'< q$, which implies that $t< s'$ and $t'<s$. Consequently, we also have that $p<_{\eta_C} q'$ and $p'<_{\eta_C} q$ and so
 $|x-x'|\leq C \min(s'-t,s-t')$. Furthermore, $|x-x'|\leq |x-y| + |y-x'|\leq C (r-t + r-t') \leq C (s-t + s'-t')\leq 3C (s-t)$. 
 
 At this point set $v_-:=p'-\hat p$ and $v^+:=\hat q - q'$, then
 \begin{align}
  \eta_{C^{-1}}(v_-,v_-)&\leq \frac{-1}{C^2} (t'-t+\lambda(s-t))^2 + 9 C^2 (s-t)^2\\
  &\leq C^2 (-9 + 9)(s-t)^2=0\,.
 \end{align}
 Similarly, we get $\eta_{C^{-1}}(v_+,v_+)\leq 0$ and thus $J(p',q')\subseteq J(\hat p,\hat q, W)$.
 
 Finally, note that every point has a neighborhood that is globally hyperbolic with respect to a smooth metric with wider light cones, hence it is also globally hyperbolic for $g$ (cf.\ the proof of \cite[Thm.\ 2.2]{SS:18}), so $W$ can be chosen to be globally hyperbolic as well.
\end{pr}

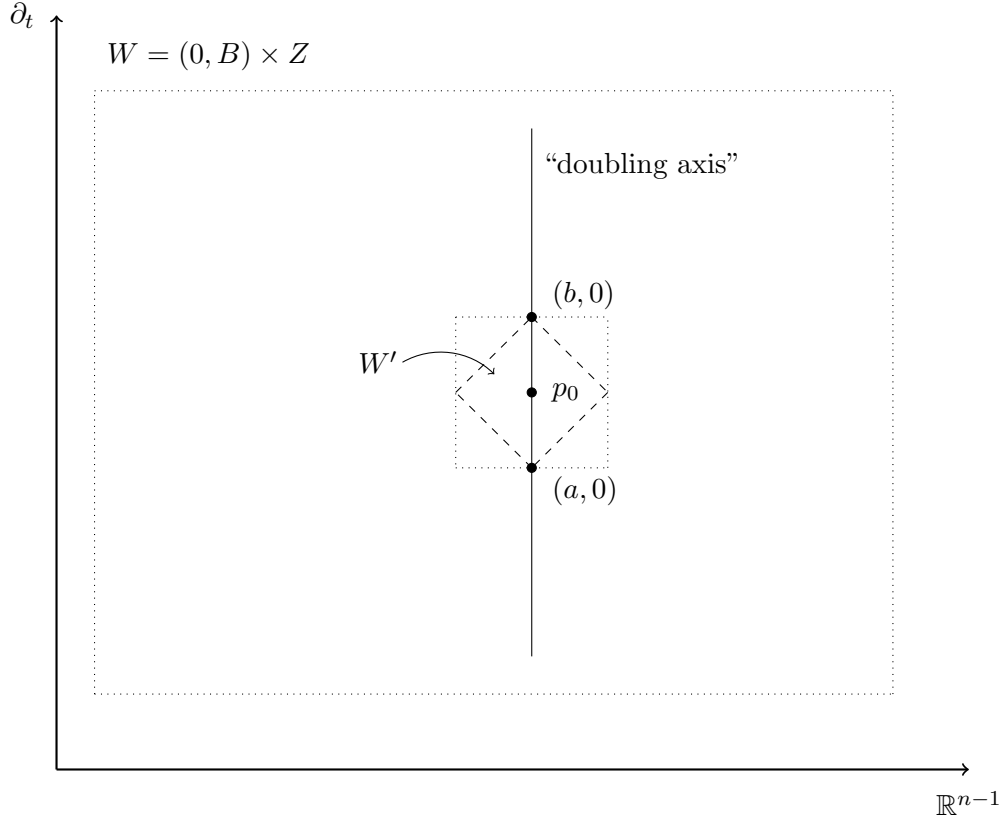
\begin{figure}[h!]
 \begin{tikzpicture}

     % Axis
    \draw[thick, ->] (-3,-5) -- (9, -5) node [label=below:$\R^{n-1}$] {};
    \draw[thick, ->] (-3,-5) -- (-3, 5) node [label=left:$\partial_t$] {};
    
    % W
    \draw[dotted, -] (-2.5,-4) -- (8,-4);
    \draw[dotted, -] (-2.5,4) -- (8,4);
    \draw[dotted, -] (-2.5,-4) -- (-2.5,4);
    \draw[dotted, -] (8,-4) -- (8,4);
    \node at (-1,4) [label=above: ${W=(0,B)}\times Z$] {};
    
    % p_0
    \draw [fill] (3.25,0) circle [radius=0.06] node [label=right: $p_0$]{};
    
    % axis through p_0
    \draw [-] (3.25,-3.5) -- (3.25,3.5);
    \node at (4.7,3) {``doubling axis''};
    \draw [fill] (3.25,1) circle [radius=0.06];
    \node at (3.25,1.3) [label=right: ${(b,0)}$]{};
    \draw [fill] (3.25,-1) circle [radius=0.06];
    \node at (3.25,-1.3) [label=right: ${(a,0)}$]{};
    
    % W'
    \draw[dotted, -] (2.25,-1) -- (4.25,-1);
    \draw[dotted, -] (2.25,1) -- (2.25,-1);
    \draw[dotted, -] (2.25,1) -- (4.25,1); %node [label=right: ${(a,b)}\times V$]{};
    \draw[dotted, -] (4.25,-1) -- (4.25,1);
    
    \draw[dashed, -] (2.25,0) -- (3.25,1);
    \draw[dashed, -] (3.25,1) -- (4.25,0);
    \draw[dashed, -] (4.25,0) -- (3.25,-1);
    \draw[dashed, -] (3.25,-1) -- (2.25,0);
    \node at (0.7,0.4) [label=right: $W'$] {};
%     \draw[rounded corners, ->] (1.55,0.4) -- (2.1,0.6) -- (2.6,0.45);
    \draw [->] (1.55,0.4) arc [radius=1, start angle=120, end angle= 45];

\end{tikzpicture}
\caption{A schematic drawing of a cylindrical neighborhood $(W,W')$ around~$p_0$ in Lemma \ref{lem-adv-cyl-nhd}
(note $W'$ has to be considerably smaller than $W$).}\label{fig-cyl-nhd}
\end{figure}

\begin{figure}[h!]
 \begin{tikzpicture}
 
%     % W'
%     \draw[dotted, -] (2.25,-1) -- (4.25,-1);
%     \draw[dotted, -] (2.25,1) -- (2.25,-1);
%     \draw[dotted, -] (2.25,1) -- (4.25,1); %node [label=right: ${(a,b)}\times V$]{};
%     \draw[dotted, -] (4.25,-1) -- (4.25,1);
%     
%     \draw[dashed, -] (2.25,0) -- (3.25,1);
%     \draw[dashed, -] (3.25,1) -- (4.25,0);
%     \draw[dashed, -] (4.25,0) -- (3.25,-1);
%     \draw[dashed, -] (3.25,-1) -- (2.25,0);
%     \node at (0.7,0.4) [label=right: $W'$] {};
% %     \draw[rounded corners, ->] (1.55,0.4) -- (2.1,0.6) -- (2.6,0.45);
%     \draw [->] (1.55,0.4) arc [radius=1, start angle=120, end angle= 45];
    
    \node at (-2,4.5) [label=right: $W$] {};
    \draw[dashed, -] (-2.5,0) -- (2.5,5);
    \draw[dashed, -] (2.5,5) -- (7.5,0);
    \draw[dashed, -] (7.5,0) -- (2.5,-5);
    \draw[dashed, -] (2.5,-5) -- (-2.5,0);
    \draw [->] (3.55,4.8) arc [radius=1, start angle=350, end angle= 300];
    \node at (3,5) [label=right: $W'$] {};
    
    %J(p,q)
    \draw[blue, -] (-1.5,0) -- (-0.5,1);
    \draw[blue, -] (-0.5,1) -- (0.5,0);
    \draw[blue, -] (0.5,0) -- (-0.5,-1);
    \draw[blue, -] (-0.5,-1) -- (-1.5,0);
    \draw [fill] (-0.5,1) circle [radius=0.06];
    \node at (-0.5,1) [label=right: $q$]{};
    \draw [fill] (-0.5,-1) circle [radius=0.06];
    \node at (-0.5,-1) [label=right: $p$]{};
    
    %J(u,v)
    \draw[red, -] (-0.5,0) -- (0.5,1);
    \draw[red, -] (0.5,1) -- (1.5,0);
    \draw[red, -] (1.5,0) -- (0.5,-1);
    \draw[red, -] (0.5,-1) -- (-0.5,0);
    \draw [fill] (0.5,1) circle [radius=0.06];
    \node at (0.5,1) [label=right: {$q'=(s',x')$}]{};
    \draw [fill] (0.5,-1) circle [radius=0.06];
    \node at (0.5,-1) [label=right: {$p'=(t',x')$}]{};

    \draw [-] (-0.5,-4) -- (-0.5,4);
    \node at (0.9,4) {``doubling axis''};
    \draw[dotted] (-0.5,-4) -- (-0.5,-4.8);
    \node at (-0.5,-4.8) [below=0.1pt] {$x$};
    
    \draw[blue, -, dashed] (-4,0) -- (-0.5,3.5);
    \draw[blue, -, dashed] (-0.5,3.5) -- (3,0);
    \draw[blue, -, dashed] (3,0) -- (-0.5,-3.5);
    \draw[blue, -, dashed] (-0.5,-3.5) -- (-4,0);
    \draw [fill] (-0.5,3.5) circle [radius=0.06];
    \node at (-0.5,3.5) [label=right: $\hat q$]{};
    \draw [fill] (-0.5,-3.5) circle [radius=0.06];
    \node at (-0.5,-3.5) [label=right: $\hat p$]{};
    
    %scale
    \draw[|-|] (-4,3.5) -- (-4,1);
    \node at (-4,2.25) [text=blue, left=1pt]{$\lambda\varDelta$};
    \draw[|-|] (-4,1) -- (-4,-1);
    \node at (-4,0) [left=3pt]{$\varDelta=s-t$};
    \draw[|-|] (-4,-1) -- (-4,-3.5);
    \node at (-4,-2.25) [text=blue, left=1pt]{$\lambda\varDelta$};
    
\end{tikzpicture}
\caption{Doubling inside a cylindrical neighborhood from Corollary \ref{cor-con-enl}.}
\label{fig-cyl-nhd-zoom}
\end{figure}
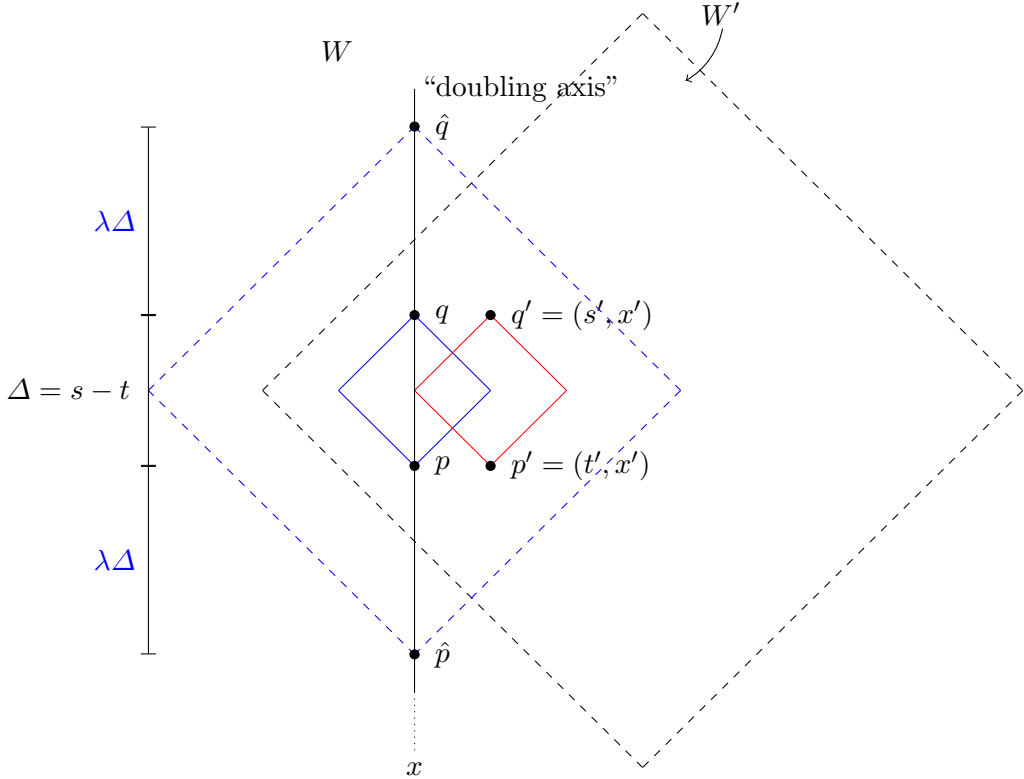

We will now work in such refined cylindrical neighborhoods and refer to $W', W$, $C,\lambda$, $\hat p,\hat q$ etc.\ without 
recalling their definition every time.

\begin{cor}[Cylindrical neighborhoods yield a doubling of causal diamonds]\label{cor-con-enl}
 Let $(W,W')$ be a cylindrical neighborhood as above. Setting $\mathcal{F}^{(')}:=\{J(p,q): p=(t,x),q=(s,x)\in W^{(')}$ with 
$p\ll q \}$ and $\varDelta(J((t,x),(s,x))):=s-t$, yields a reasonable $(\mathcal{F},\mathcal{F}',\varDelta,2)$-enlargement, 
hence a doubling of causal diamonds. Moreover,
\begin{equation}
  \widehat{J(p,q)}\subseteq W\cap J(\hat p,\hat q, W)\subseteq W\cap J(\hat p,\hat q)\,,
 \end{equation}
 for all $p=(t,x),q=(s,x)\in W'$.
\end{cor}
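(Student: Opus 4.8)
The plan is to take $\tilde J:=J(\hat p,\hat q)$ as the ambient diamond witnessing the doubling and to read off every requirement from Lemma \ref{lem-adv-cyl-nhd}. For condition (i) of a reasonable enlargement, I would note that since $\partial_t$ is uniformly timelike on $W$ and $W'$ is open, any $z=(r,y)\in W'$ lies on a short vertical timelike segment whose endpoints $(r-\epsilon,y),(r+\epsilon,y)$ remain in $W'$; hence $z\in J((r-\epsilon,y),(r+\epsilon,y))\in\mathcal{F}'$, so the diamonds of $\mathcal{F}'$ cover $W'$. For condition (ii), let $J=J(p,q),J'=J(p',q')\in\mathcal{F}'$ with $J\cap J'\neq\emptyset$ and $\varDelta(J')\leq 2\,\varDelta(J)$, and set $\Delta:=s-t=\varDelta(J)$. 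By part (iv) of Lemma \ref{lem-adv-cyl-nhd} the points $\hat p=(t-\lambda\Delta,x)$ and $\hat q=(s+\lambda\Delta,x)$ lie in $W$; they share the spatial coordinate $x$ and are joined by a vertical timelike segment, so $\hat p\ll\hat q$ and $\tilde J=J(\hat p,\hat q)\in\mathcal{F}$. Its time extent is $\varDelta(\tilde J)=(1+2\lambda)\Delta$, giving the doubling constant $\Xi=1+2\lambda=6C^2+5$. Since $\hat p\ll p\leq w\leq q\ll\hat q$ for every $w\in J(p,q)$, transitivity yields $J\subseteq\tilde J$; and $J'\subseteq J(\hat p,\hat q,W)\subseteq J(\hat p,\hat q)=\tilde J$ is exactly the conclusion of part (v) of Lemma \ref{lem-adv-cyl-nhd}, whose hypotheses $p\ll q$, $p'\ll q'$, $s'-t'\leq 2\Delta$, $J\cap J'\neq\emptyset$ all hold.

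It remains to prove the displayed inclusion $\widehat{J}\subseteq W\cap J(\hat p,\hat q,W)\subseteq W\cap J(\hat p,\hat q)$, which simultaneously supplies the last missing piece $\widehat{J}\subseteq\tilde J$ of condition (ii). The final inclusion is automatic, since local causal curves are global ones, so $J(\hat p,\hat q,W)\subseteq J(\hat p,\hat q)$ and $J(\hat p,\hat q,W)\subseteq W$. The real content is to show that every $J''=J(p'',q'')\in\mathcal{F}$ with $J\cap J''\neq\emptyset$ and $\varDelta(J'')\leq 2\Delta$ satisfies $J''\subseteq J(\hat p,\hat q,W)$. Here lies the main obstacle: the enlargement $\widehat{J}$ ranges over the coarse family $\mathcal{F}$, whose diamonds have vertices only in $W$, whereas part (v) of Lemma \ref{lem-adv-cyl-nhd} is phrased for vertices in the causally convex set $W'$.

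To bridge this I would first check that the two coordinate ingredients of the proof of part (v) survive verbatim when $p'',q''$ lie merely in $W$: choosing $z=(r,y)\in J\cap J''$ and using $g\prec\eta_C$ gives $|x-x''|\leq|x-y|+|y-x''|\leq 3C\Delta$ and $t-t''\leq s''-t''\leq 2\Delta$, whence $v_-:=p''-\hat p$ and $v_+:=\hat q-q''$ are $\eta_{C^{-1}}$-timelike (this is precisely where $\lambda=3C^2+2\geq 5$ is used), so $\hat p\ll p''$ and $q''\ll\hat q$ through straight segments inside $W$. The one step genuinely requiring $W'$ in part (v) is the causal convexity identity $J(p',q')=J(p',q',W)$; to recover its analogue for $J''$ I would apply the same $\lambda\geq 5$ estimate to $\eta_C$ in place of $\eta_{C^{-1}}$ to obtain $J^{\eta_C}(p'',q'')\subseteq J^{\eta_C}(\hat p,\hat q)\subseteq W$, and then note that since $\eta_C$-diamonds are causally convex in the chart and every $g$-causal curve is $\eta_C$-timelike, any $g$-causal curve realizing $p''\leq w\leq q''$ is trapped inside this flat bicone, hence inside $W$, giving $J(p'',q'')=J(p'',q'',W)$. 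Concatenating the corner segments with the local causal curves inside $J(p'',q'',W)$ and using transitivity of the local relation $\leq_W$ then yields $J''\subseteq J(\hat p,\hat q,W)$, and taking the union over all admissible $J''$ finishes the proof. I expect the containment $J^{\eta_C}(p'',q'')\subseteq W$ --- i.e.\ controlling the flat bicone quantitatively against the chart size via $b-a<\frac{B}{4\lambda}$ --- to be the most delicate of the routine steps.
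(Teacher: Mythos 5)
Your treatment of the core content coincides with the paper's: the paper's entire proof is the citation of Lemma \ref{lem-adv-cyl-nhd}\ref{lem-adv-cyl-nhd-enl}, which gives $J(p',q')\subseteq \widehat{J(p,q)}\subseteq W\cap J(\hat p,\hat q,W)\subseteq W\cap J(\hat p,\hat q)$, together with the computation $\varDelta(J(\hat p,\hat q))=(1+2\lambda)\varDelta(J(p,q))$, i.e.\ your $\tilde J=J(\hat p,\hat q)$ and $\Xi=1+2\lambda=6C^2+5$. Your explicit verification of condition (i) via short vertical segments, and of $J\subseteq\tilde J$ via $\hat p\ll p\leq w\leq q\ll\hat q$, fills in details the paper leaves implicit; all of that is fine and in the same spirit.

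Where you diverge is in insisting that the union defining $\widehat{J}$ ranges over $\mathcal{F}$ (vertices anywhere in $W$) rather than $\mathcal{F}'$. That is a legitimate literal reading of the enlargement definition which the paper's one-line proof does not address; in practice the paper only ever uses the enlargement relative to diamonds with vertices in $W'$ (e.g.\ Federer's covering theorem in Proposition \ref{prop-VCL-CD} is applied with the fine cover $\J'$), so Lemma \ref{lem-adv-cyl-nhd}\ref{lem-adv-cyl-nhd-enl} suffices for everything downstream. Your attempt to bridge the gap for general $p'',q''\in W$, however, does not close. First, the claim that the coordinate estimate $|x-x''|\leq 3C(s-t)$ ``survives verbatim'' presupposes that the $g$-causal curves joining $p''$ to $z$ and $z$ to $q''$ remain in $W$ --- the inequality $g\prec\eta_C$ holds only on $W$ --- which is exactly the causal convexity you are trying to recover; for $p',q'\in W'$ this is supplied by Lemma \ref{lem-adv-cyl-nhd}(iv), but for vertices merely in $W$ it is not available, so the argument is circular at that point. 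Second, the containment $J^{\eta_C}(p'',q'')\subseteq W$ that your trapping argument needs can fail outright: the flat bicone has Euclidean spatial radius of order $C(s''-t'')/2$ about $x''$, and nothing prevents $x''$ from lying within that distance of $\partial Z$; the bound $b-a<\frac{B}{4\lambda}$ controls only the time direction and only for points of $W'$. So either one reads the enlargement as taken within $\mathcal{F}'$ (the intended and actually used meaning), or one must shrink/choose the families so that all admissible $J''$ have vertices in a region causally convex in $W$ --- your proposed estimates alone do not achieve this.
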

\begin{pr}
 Lemma \ref{lem-adv-cyl-nhd}\ref{lem-adv-cyl-nhd-enl} gives that $J(p',q')\subseteq \widehat{J(p,q)}\subseteq W\cap J(\hat 
p,\hat q, W)\subseteq W\cap J(\hat p,\hat q)$ and $\varDelta(J(\hat p,\hat q)) = (1+2\lambda)\, (s-t) = (1+2\lambda)\, 
\varDelta(J(p,q))$ for all $p=(t,x),q=(s,x),p'=(t',x'), q'=(s',x')\in W'$.
\end{pr}

\subsection{Compatibility for spacetimes}\label{subsec-com-st}
The aim of this subsection is to establish that for a continuous, strongly causal and causally plain ${n}$-dimensional spacetime (viewed as a \LLSn) the measure $\V^n$ constructed above in Proposition \ref{pro-con-mea} agrees with the Lorentzian volume measure $\vol$ given by $\vol(A):=\int_A \sqrt{|\det(g)|}\, d x^0\wedge\ldots\wedge x^{n-1}$. In what follows we use the fact that for spacetimes with continuous metrics that are causally well-behaved (i.e., strongly causal and causally plain) the different notions of causal curves and causality conditions agree, cf.\ \cite{KS:18}.
\medskip

% \begin{defi}[Enlargement of diamonds]
% Denoting by $\J':=\{J(p,q): p=(t,x), q=(s,x)\in W'\text{ with }t<s\}$, we set $\varDelta\colon\J'\rightarrow [0,\infty)$, $\varDelta(J(p,q)):=s-t$ for $p=(t,x),q=(s,x)\in W'$. In line with the terminology of Federer (\cite[p.\ 144]{Fed:69}) we define the \emph{$\varDelta,2$-enlargement} of $J\in\J'$ as
% \begin{equation}
%  \widehat{J}:=\bigcup\{J'\in\J': J\cap J'\neq\emptyset,\,\varDelta(J')\leq 2\varDelta(J)\}\,.
% \end{equation}
% \end{defi}
% 
% By construction and Lemma \ref{lem-adv-cyl-nhd}\ref{lem-adv-cyl-nhd-enl} we conclude
% \begin{cor}[Estimating the enlargement]\label{cor-con-enl}
%  The enlargement of $J(p,q)\in\J'$ is contained in $J(\hat p,\hat q)$, i.e.,
%  \begin{equation}
%   \widehat{J(p,q)}\subseteq W\cap J(\hat p,\hat q, W)\subseteq W\cap J(\hat p,\hat q)\,,
%  \end{equation}
%  for all $p=(t,x),q=(s,x)\in W'$.
% \end{cor}

As a first application of the doubling of causal diamonds introduced in Subsection \ref{subsec-dou-cd} and cylindrical 
neighborhoods, we establish that the volume measure of a continuous spacetime is doubling in a suitable sense.

\begin{lem}[Doubling property of the volume measure]\label{lem-adv-cyl-nhd-dou}
Let $(W,W')$ be a cylindrical neighborhood that is so small that in these coordinates $|\det(g)|$ is bounded above and 
below on $W$. Then there is a constant $L\geq 1$ (depending on on the dimension $n$, the constant $C$ and on the minimum and 
maximum of $|\det(g)|$ on $\overline W$) such that
 \begin{equation}
  \vol^g(\widehat{J(p,q)})\leq L \,\vol^g(J(p,q))\qquad  \forall p=(t,x),q=(s,x) \in W'\,.
 \end{equation}
\end{lem}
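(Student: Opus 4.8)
The plan is to push everything down to Lebesgue measure in the coordinate chart. Write $\mathcal{L}^n$ for Lebesgue measure in the coordinates of $W\subseteq\R^n$, and let $0<m\le|\det(g)|\le M<\infty$ on $\overline W$. Then for every Borel set $A\subseteq W$ the definition of $\vol^g$ gives
\[
\sqrt{m}\,\mathcal{L}^n(A)\le \vol^g(A)\le \sqrt{M}\,\mathcal{L}^n(A),
\]
so it suffices to bound $\mathcal{L}^n(\widehat{J(p,q)})$ by a fixed multiple of $\mathcal{L}^n(J(p,q))$; the final constant $L$ will differ from that multiple only by the Jacobian factor $\sqrt{M/m}$.

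Next I would exploit the cone bounds $\eta_{C^{-1}}\prec g\prec\eta_C$ on $W$. Any $g$-causal curve in $W$ is $\eta_C$-timelike, and any $\eta_{C^{-1}}$-causal curve in $W$ is $g$-timelike. Since $W'$ is causally convex in $W$ (property (iv) of Lemma \ref{lem-adv-cyl-nhd}), we have $J(p,q)=J(p,q,W)\subseteq W$ for $p,q\in W'$, and one obtains the sandwich
\[
J^{\eta_{C^{-1}}}(p,q)\subseteq J(p,q)\subseteq J^{\eta_C}(p,q),
\]
the Minkowski diamonds being taken in the chart. A direct integration of the cone cross-sections shows that for points differing only in time by $\Delta:=s-t$,
\[
\mathcal{L}^n\big(J^{\eta_C}((t,x),(s,x))\big)=\omega_n\,C^{\,n-1}\,\Delta^n,
\]
and likewise with $C$ replaced by $C^{-1}$, where $\omega_n=\alpha_{n-1}/(n\,2^{n-1})$ as in Definition \ref{def-vol-cd}.

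Finally I would invoke Corollary \ref{cor-con-enl}, which yields $\widehat{J(p,q)}\subseteq W\cap J(\hat p,\hat q,W)$ with $\hat p=(t-\lambda\Delta,x)$, $\hat q=(s+\lambda\Delta,x)$, so that the time span of $(\hat p,\hat q)$ is $(1+2\lambda)\Delta$. As $J(\hat p,\hat q,W)$ consists of endpoints of $g$-causal curves inside $W$, it is contained in $J^{\eta_C}(\hat p,\hat q)$, whence
\[
\mathcal{L}^n\big(\widehat{J(p,q)}\big)\le \mathcal{L}^n\big(J^{\eta_C}(\hat p,\hat q)\big)=\omega_n\,C^{\,n-1}\big((1+2\lambda)\Delta\big)^n,
\]
while the lower inclusion above gives $\mathcal{L}^n(J(p,q))\ge \omega_n\,C^{-(n-1)}\Delta^n$. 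Dividing and reinserting $\sqrt{M/m}$ produces
\[
\vol^g\big(\widehat{J(p,q)}\big)\le \sqrt{M/m}\;C^{\,2(n-1)}(1+2\lambda)^n\,\vol^g(J(p,q)),
\]
and with $\lambda=3C^2+2$ one may take $L:=\sqrt{M/m}\,C^{2(n-1)}(6C^2+5)^n$, which depends only on $n$, $C$, and the extrema of $|\det(g)|$ on $\overline W$, as claimed.

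The main obstacle is the bookkeeping of containments inside the chart: the $\prec$-comparison of $g$-diamonds with scaled-Minkowski diamonds is valid only for causal curves that remain in $W$. Thus one must combine the causal convexity of $W'$ in $W$ (ensuring $J(p,q)\subseteq W$) with the smallness of $W'$ to guarantee that the thin convex box carrying the narrow lower diamond $J^{\eta_{C^{-1}}}(p,q)$ also lies in $W$, so that its straight $\eta_{C^{-1}}$-causal segments are genuinely $g$-causal there. Once these inclusions are secured, the remainder is the explicit cone-volume integral and the uniform Jacobian bounds.
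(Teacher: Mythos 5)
Your proposal is correct and follows essentially the same route as the paper's own proof: both reduce $\vol^g$ to Lebesgue measure in the chart via the two-sided bound on $|\det(g)|$, sandwich $J(p,q)$ between the scaled-Minkowski diamonds $J^{\eta_{C^{-1}}}(p,q)$ and $J^{\eta_C}(p,q)$, compute their explicit cone volumes $\omega_n C^{\pm(n-1)}\Delta^n$, and apply Corollary \ref{cor-con-enl} to bound $\widehat{J(p,q)}$ by $J^{\eta_C}(\hat p,\hat q)$, arriving at the same constant $C^{2(n-1)}(1+2\lambda)^n$ up to the Jacobian ratio. Your closing remark about verifying that the lower diamond $J^{\eta_{C^{-1}}}(p,q)$ actually sits inside $W$ is a point the paper's proof passes over silently, so flagging it is a small improvement rather than a deviation.
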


%\marginpar{RM:  I have some trouble to follow the powers of $A$ and $C$ in this argument}

\begin{pr}
 In these coordinates the volume measure $\vol^g$ is just a $|\det(g)|d\mathcal{H}^n$, where $\mathcal{H}^n$ is the $n$-dimensional Hausdorff measure on $\R^n$. By assumption, $0<k\leq |\det(g)|\leq K$ on $W$. Let $A>0$ then $\mathcal{H}^n(J_{\eta_A}(p,q))= A^{n-1} \omega_n \tau_{\eta}(p,q)^n$. Thus we obtain for $p=(t,x),q=(s,x)\in W'$ and $\hat p=(t- \lambda (s-t),x),\hat q=(s+ \lambda (s-t),x)$ that
 \begin{align}
  \omega_n C^{1-n} (s-t)^n = \mathcal{H}^n(J_{\eta_{C^{-1}}}(p,q))\leq \mathcal{H}^n(J(p,q))\,, 
 \end{align}
 as $\eta_{C^{-1}}\prec g$. This yields that
 \begin{align}
  \mathcal{H}^n(J(\hat p,\hat q))&\leq \mathcal{H}^n(J_{\eta_C}(\hat p,\hat q)) 
  \\&= C^{n-1} \omega_n(2\lambda + 1)^n\,(s-t)^n\\
  &\leq (2\lambda+1)^n C^{2(n-1)}\mathcal{H}^n(J(p,q))\,,
 \end{align}
and so by Corollary \ref{cor-con-enl} we conclude that
 \begin{align}
 \vol^g(\widehat{J(p,q)}) &\leq \vol^g(J(\hat p,\hat q, W)) 
 \\&\leq K \mathcal{H}^n(J(\hat p,\hat q, W))\\
 &\leq K (2\lambda+1)^n C^{2(n-1)}\mathcal{H}^n(J(p,q))\\
 &\leq \frac{K}{k}(2\lambda+1)^n C^{2(n-1)}\vol^g(J(p,q))\,.
 \end{align}
\end{pr}

Following \cite[Subsec.\ 2.8, 2.10]{Fed:69} we set up the machinery needed to compare the volume measure with $\V^n$, which is constructed from the causal diamonds by Carath\'eodory's construction.
\begin{defi}[Fine cover]
 Let $(X,d)$ be a metric space, let $\F$ be a family of closed subsets of $X$ and let $A\subseteq X$. We say that \emph{$\F$ covers $A$ finely} if for all $a\in A$ and for all $\eps>0$ there is $F\in\F$ with $a\in F\subseteq B_\eps(a)$.
\end{defi}

\begin{lem}[Controlling the time separation locally]\label{lem-con-tau-loc}
 For every $\delta>0$, $p_0\in M$ and $C>1$ there is a cylindrical neighborhood $W=W(p_0,\delta,C)$ of $p_0$ such that $\eta_{C^{-1}}\prec g \prec \eta_C$ on $W$ and $\diam(W)\leq \delta$. For $p=(t,x),q=(s,x)\in W'$ with $p\leq q$ we have
 \begin{equation}\label{eq-con-tau-loc}
 \sqrt{1-\delta}\, (s-t) \leq \tau(p,q) \leq (\sqrt{(1+C^2)\delta+2C^2-1}\,(s-t)\,.
 \end{equation}
\end{lem}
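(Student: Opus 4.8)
The plan is to reduce everything to a single favorable chart and then bound $\tau(p,q)$ from below by one explicit competitor and from above by controlling \emph{every} causal curve. First I would invoke Lemma \ref{lem-adv-cyl-nhd} to produce, for the given $C>1$, a cylindrical neighborhood $(W',W)$ of $p_0$ on which $\eta_{C^{-1}}\prec g\prec\eta_C$ and on which $\partial_t$ is uniformly timelike; by the last item of that lemma $W$ may be shrunk so that $\diam(W)\le\delta$. By a linear choice of the coordinates I may take $g(p_0)$ to be the standard Minkowski metric $\eta:=\eta_1$ (note $\eta_{C^{-1}}\prec\eta_1\prec\eta_C$ strictly for $C>1$, so the cone conditions survive on a neighborhood by continuity), and then, using continuity of $g$ and the smallness of $W$, arrange that the coordinate components satisfy $|g_{ab}-\eta_{ab}|$ uniformly small on $W$, with the size governed by $\delta$. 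Since $p\le q$ with $p\ne q$ and $g\prec\eta_C$ force $s>t$, I may assume $s>t$ throughout (the case $p=q$ being trivial).

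For the lower bound I would test $\tau$ against the vertical segment $\sigma(r):=(r,x)$, $r\in[t,s]$. As $\partial_t$ is future-directed timelike, $\sigma$ is an admissible future-directed causal curve from $p$ to $q$, so $\tau(p,q)\ge L^g(\sigma)=\int_t^s\sqrt{-g_{00}(r,x)}\,dr$. The normalization $-g_{00}(p_0)=1$ together with the uniform proximity of $g$ to $\eta$ on $W$ gives $-g_{00}\ge 1-\delta$ pointwise, whence $\tau(p,q)\ge\sqrt{1-\delta}\,(s-t)$.

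For the upper bound I would estimate the length of an arbitrary future-directed causal curve $\gamma$ from $p$ to $q$. Because $\partial_t$ is uniformly timelike, $\gamma$ is strictly monotone in the time coordinate and may be reparametrized by $x^0=t'\in[t,s]$, so $\dot\gamma=(1,\dot{\vec\gamma})$; the inclusion $g\prec\eta_C$ then forces $\eta_C(\dot\gamma,\dot\gamma)<0$, i.e. the Euclidean spatial speed obeys $|\dot{\vec\gamma}|\le C$. I would bound the integrand $-g(\dot\gamma,\dot\gamma)$ by replacing each component of $g$ by its flat value up to the $\delta$-error and inserting $|\dot{\vec\gamma}|\le C$; collecting the $00$-, cross- and spatial contributions yields a pointwise bound of the form $-g(\dot\gamma,\dot\gamma)\le(1+C^2)\delta+2C^2-1$. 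Integrating over $[t,s]$ and taking the supremum over all such $\gamma$ gives $\tau(p,q)\le\sqrt{(1+C^2)\delta+2C^2-1}\,(s-t)$, which is \eqref{eq-con-tau-loc}.

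The main obstacle is the upper bound: unlike the lower bound it must hold simultaneously for every causal curve joining $p$ to $q$, so the estimate has to be carried out at the level of the integrand $-g(\dot\gamma,\dot\gamma)$ rather than for a single competitor. The two ingredients that make this work are the cone inclusion $g\prec\eta_C$, which caps the spatial speed by $C$ and thereby prevents the indefinite quadratic form from blowing up along nearly null directions, and the uniform closeness of $g$ to the flat metric on the shrunken chart, which controls the remaining error; the bookkeeping that combines these into the explicit constant $(1+C^2)\delta+2C^2-1$ is the only delicate point. A secondary technical matter to verify is that the first assertion of the lemma — existence of a chart carrying both $\eta_{C^{-1}}\prec g\prec\eta_C$ and $\diam(W)\le\delta$ — is indeed furnished by Lemma \ref{lem-adv-cyl-nhd}, and that one may simultaneously normalize $g(p_0)=\eta$ and retain the cone conditions by choosing $W$ small enough.
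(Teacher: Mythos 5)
Your proposal is correct and follows essentially the same route as the paper: the lower bound via the vertical coordinate segment $r\mapsto(r,x)$ using $|g-\eta|<\delta$, and the upper bound by controlling the integrand $-g(\dot\gamma,\dot\gamma)$ for causal curves parametrized by $x^0$ with spatial speed capped by $C$ from $g\prec\eta_C$ (the paper applies this to a maximal curve in the globally hyperbolic $W$ rather than taking a supremum over all competitors, but the estimate is identical, and causal convexity of $W'$ in $W$ keeps all such curves inside the chart in either version).
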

\begin{pr}
By Lemma \ref{lem-adv-cyl-nhd} we know that for every $\delta>0$, $p_0\in M$ and $C>1$ there is a cylindrical neighborhood $W=W(p_0,\delta,C)$ of $p_0$ such that $\eta_{C^{-1}}\prec g \prec \eta_C$ on $W$, $\diam(W)\leq \delta$ and $|g-\eta|<\delta$. Now let $p=(t,x),q=(s,x)\in W'$ with $p< q$ (in case $p=q$ there is nothing to show) and consider the curve $\lambda\colon[t,s]\rightarrow W$, $\lambda(r):=(r,x)$. Then $\lambda$ is $\eta$- and $g$-future directed timelike from $p$ to $q$. Thus
 \begin{align}
  \tau(p,q)\geq L^g(\lambda) = \int_t^s\sqrt{-(g_{00}+1)+1} > \sqrt{1-\delta}\,(s-t)\,,
 \end{align}
where we used that $|g-\eta|<\delta$ on $W$ and $g_{00}=g(\partial_t,\partial_t)$.
  
Denoting by $|\cdot|_e$ the euclidean norm in this chart we have for any $g$-causal curve $\gamma(r)=(r,\vec\gamma(r))$ in $W$ that $|\dot{\vec{\gamma}}|_e^2\leq C^2$ as such a curve is $\eta_C$-causal. Thus $|\dot\gamma|_e$ is uniformly bounded by $\sqrt{1+C^2}$ for all such curves. Let $\gamma\colon[t,s]\rightarrow W$ be a future directed maximal causal curve in $W$ from $p$ to $q$ (exists by \cite[Thm.\ 2.2]{SS:18}), parametrized as $\gamma(r)=(r,\vec\gamma(r))$. Then
\begin{align}
 \tau(p,q)=L^g(\gamma) &\leq \int_t^s\sqrt{\delta (1+C^2)+(C^2-1) - \eta_C(\dot\gamma,\dot\gamma})\\
 &\leq \sqrt{(1+C^2)\delta +2C^2-1}\,(s-t). 
\end{align}
\end{pr}

For smooth spacetimes we have the following expansion using normal coordinates of the volumes of small causal diamonds (cf.\ e.g.\ \cite[Eq.\ (74)]{GS:07})
\begin{equation}\label{eq-vol-exp}
 \vol^g(J(p,q))=\omega_n \tau(p,q)^n (1 + O(\tau(p,q)^2))\,,
\end{equation}
from which   
\begin{equation}
   \lim_{\tau(p,q)\to 0}\frac{\vol^g(J(p,q))}{\rho_n(J(p,q))}=1\,,
\end{equation}
follows. The latter holds also for continuous spacetimes as the following Lemma establishes.
\medskip

\begin{lem}[Local metric vs geometric volume]
\label{lem-vol-den}
 For every $p_0\in M$
  \begin{equation}
   \lim_{\substack{\diam(J(p,q))\to 0\\p_0\in J(p,q)\in\J'}}\,\frac{\vol^g(J(p,q))}{\rho_n(J(p,q))}=1\,.
  \end{equation}
\end{lem}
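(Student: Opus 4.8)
The plan is to fix a cylindrical coordinate chart around $p_0$ in which $g(p_0)=\eta$ (the standard Minkowski metric $\eta_1$, so that $|\det g(p_0)|=1$ and, crucially, $\eta_{C^{-1}}\prec g\prec\eta_C$ can be arranged for $C$ arbitrarily close to $1$), and to show that for a vertical diamond $J(p,q)\in\J'$ with $p=(t,x)$, $q=(s,x)$ the ratio $\vol^g(J(p,q))/\rho_n(J(p,q))$ is pinched between two expressions both tending to $1$ as the cone-width $C\searrow1$ and the localization parameter $\delta\searrow0$. Since $\rho_n(J(p,q))=\omega_n\tau(p,q)^n$, the whole statement reduces to comparing the three quantities $\vol^g(J(p,q))$, $\mathcal{H}^n(J(p,q))$ and $\tau(p,q)^n$ on small vertical diamonds containing $p_0$.

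First I would collect three localized estimates, valid on a small cylindrical neighborhood $W=W(p_0,\delta,C)$ as produced by Lemma \ref{lem-con-tau-loc}. \emph{(i) Cone comparison:} since $\eta_{C^{-1}}\prec g\prec\eta_C$ on $W$ and $W'$ is causally convex in $W$ (so that global and local diamonds coincide on $W'$), every $\eta_{C^{-1}}$-causal curve is $g$-timelike and every $g$-causal curve is $\eta_C$-timelike, whence $J_{\eta_{C^{-1}}}(p,q)\subseteq J(p,q)\subseteq J_{\eta_C}(p,q)$ as subsets of the chart; using the Minkowski diamond volume $\mathcal{H}^n(J_{\eta_A}(p,q))=A^{n-1}\omega_n(s-t)^n$ computed in the proof of Lemma \ref{lem-adv-cyl-nhd-dou}, this gives
\[
C^{1-n}\,\omega_n(s-t)^n \le \mathcal{H}^n(J(p,q)) \le C^{n-1}\,\omega_n(s-t)^n.
\]
\emph{(ii) Time-separation comparison:} Lemma \ref{lem-con-tau-loc} yields
\[
\sqrt{1-\delta}\,(s-t) \le \tau(p,q) \le \sqrt{(1+C^2)\delta + 2C^2 - 1}\,(s-t).
\]
\emph{(iii) Volume density:} $\vol^g(J(p,q))=\int_{J(p,q)}\sqrt{|\det g|}\,d\mathcal{H}^n$, and by continuity of $g$ together with $|\det g(p_0)|=1$, on a small enough $W$ one has $\big|\sqrt{|\det g|}-1\big|<\epsilon'$, hence $(1-\epsilon')\,\mathcal{H}^n(J(p,q))\le \vol^g(J(p,q))\le (1+\epsilon')\,\mathcal{H}^n(J(p,q))$.

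Combining (i)--(iii) sandwiches the ratio:
\[
\frac{(1-\epsilon')\,C^{1-n}}{\big((1+C^2)\delta + 2C^2 - 1\big)^{n/2}} \le \frac{\vol^g(J(p,q))}{\rho_n(J(p,q))} \le \frac{(1+\epsilon')\,C^{n-1}}{(1-\delta)^{n/2}}.
\]
Both outer expressions are continuous in $(C,\delta,\epsilon')$ and equal $1$ at $(1,0,0)$: the lower bound tends to $1/(2-1)^{n/2}=1$ and the upper bound to $1$. Thus, given $\epsilon>0$, I would first fix $C>1$ and $\delta>0$ small enough that both fractions lie in $(1-\epsilon,1+\epsilon)$ for the corresponding $\epsilon'$, then pass to the neighborhood $W=W(p_0,\delta,C)$ of Lemma \ref{lem-con-tau-loc} and shrink it further so that $\big|\sqrt{|\det g|}-1\big|<\epsilon'$ there. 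Every vertical diamond $J(p,q)\in\J'$ with $p_0\in J(p,q)$ and sufficiently small diameter lies in $W'$, so the displayed sandwich applies and gives $\big|\vol^g(J(p,q))/\rho_n(J(p,q))-1\big|<\epsilon$; letting $\diam(J(p,q))\to0$ proves the claim.

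The technical heart --- and the step most prone to error --- is the simultaneous limit in (i) and (ii): the dangerous-looking factor $C^{1-n}$ from the narrow inner cone and the term $2C^2-1$ inside the time-separation upper bound must \emph{both} collapse to $1$ as $C\searrow1$, and it is exactly this cancellation that makes the two outer bounds meet at $1$. This is what forces the normalization $g(p_0)=\eta$ at the outset: without it, the cone parameter $C$ could not be taken near $1$ and the determinant factor would not tend to $1$. A secondary point requiring care is the identification of local and global causal diamonds on $W'$, so that the set inclusions in (i) are literal inclusions of the diamonds $J(p,q)$; this is legitimate because $W'$ is causally convex in $W$ and, for strongly causal continuous spacetimes, the synthetic and smooth notions of causality agree.
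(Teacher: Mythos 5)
Your proposal is correct and follows essentially the same route as the paper's proof: the inclusion $J_{\eta_{C^{-1}}}(p,q)\subseteq J(p,q)\subseteq J_{\eta_C}(p,q)$ on a causally convex cylindrical neighborhood to pinch $\mathcal{H}^n(J(p,q))$ between $C^{\pm(n-1)}\omega_n(s-t)^n$, the two-sided estimate on $\tau(p,q)$ from Lemma \ref{lem-con-tau-loc}, and the continuity of $\sqrt{|\det g|}$ near $p_0$, combined by letting $\delta\searrow 0$ and $C\searrow 1$. The paper merely organizes the conclusion as separate $\limsup$ and $\liminf$ estimates rather than a single sandwich, which is an immaterial difference.
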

\begin{pr}
By Lemma \ref{lem-adv-cyl-nhd} we know that for every $\delta>0$, $p_0\in M$ and $C>1$ there is a cylindrical neighborhood $W=W(p_0,\delta,C)$ of $p_0$ such that $\eta_{C^{-1}}\prec g \prec \eta_C$ on $W$, $\diam(W)\leq \delta$ and $|g-\eta|<\delta$. Now let $p=(t,x),q=(s,x)\in W'$ with $p\leq p_0\leq q$. As $W'$ is causally convex in $W$ we have that $\diam(J(p,q))\leq \diam(W)\leq \delta$ and $J_{\eta_{C^{-1}}}(p,q)\subseteq J(p,q)\subseteq J_{\eta_C}(p,q)$. Consequently, we get that
\begin{align}
C^{1-n}\omega_n (s-t)^n
&= \mathcal{H}^n(J_{\eta_{C^{-1}}}(p,q))
\\&\leq \mathcal{H}^n(J(p,q))
\\&\leq \mathcal{H}^n(J_{\eta_C}(p,q))\\
&= C^{n-1} \omega_n (s-t)^n\,.
\end{align}
So letting the neighborhoods shrink and simultaneously $C\searrow 1$ yields that
  \begin{equation}
   \lim_{\substack{\diam(J(p,q))\to 0\\p_0\in J(p,q)\in\J'}}\,\frac{\mathcal{H}^n(J(p,q))}{\omega_n (s-t)^n}=1\,,
  \end{equation}
where $p=(t,x)$, $q=(s,x)\in W'$. From Equation \eqref{eq-con-tau-loc} we get that $\sqrt{1-\delta}\,(s-t)\leq \tau(p,q)$ and $|det(g)| = 1 + o(1)$ as $\delta\to 0$, thus $\vol^g(J(p,q))=\mathcal{H}^n(J(p,q))(1+o(1))$ and so 
\begin{equation}
\limsup_{\substack{\diam(J(p,q))\to 0\\p_0\in J(p,q)\in\J'}}\,\frac{\vol^g(J(p,q))}{\rho_n(J(p,q))}\leq 1\,.
  \end{equation}
Again by Equation \eqref{eq-con-tau-loc} we bound $\tau$ from above and hence similarly to the estimate for the limit superior we obtain
\begin{equation}
\frac{\vol^g(J(p,q))}{\rho_n(J(p,q))}\geq \frac{\mathcal{H}^n(J(p,q))(1 + o(1))}{\omega_n (s-t)^n (\sqrt{(1+C^2)\delta 
+2C^2-1}\bigr)^n}\to 1\,
  \end{equation}
  as $\delta \to 0$ and $C\to 1$.
Thus
\begin{equation}
\liminf_{\substack{\diam(J(p,q))\to 0\\p_0\in J(p,q)\in\J'}}\,\frac{\vol^g(J(p,q))}{\rho_n(J(p,q))}\geq 1\,,
  \end{equation}
  which finishes the proof.
\end{pr}

% \todo{do we need that $M$ is oriented or do we work with densities?}
\begin{thm}[Metric versus geometric volume]
\label{thm-mea-vol-equ}
 The volume measure $\vol^g$ agrees with the measure $\V^n$ constructed in Proposition \ref{pro-con-mea}, i.e., 
$\vol^g=\V^n$, where the $n$-dimensional spacetime $(M,g)$ is viewed as a \LLSn.
\end{thm}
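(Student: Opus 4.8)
The plan is to prove the two inequalities $\V^n\le\vol^g$ and $\V^n\ge\vol^g$ separately, after localizing. Since both $\vol^g$ and $\V^n$ are Borel measures and $M$ is covered by countably many relatively compact cylindrical neighborhoods $W'$ as in Lemma~\ref{lem-adv-cyl-nhd} (on which $\eta_{C^{-1}}\prec g\prec\eta_C$ and, after shrinking, $|\det g|$ is pinched between constants $0<k\le K$), it suffices to establish $\vol^g(A)=\V^n(A)$ for every Borel $A\subseteq W'$; the measures then agree on a generating family and hence globally. Throughout I would work in such a chart and exploit the doubling of causal diamonds furnished by Corollary~\ref{cor-con-enl}.

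For the upper bound $\V^n\le\vol^g$ I would cover $A$ \emph{only} by the axis diamonds $\mathcal{F}'$ of Corollary~\ref{cor-con-enl}. By that corollary together with the doubling property of the volume measure (Lemma~\ref{lem-adv-cyl-nhd-dou}), the family $\mathcal{F}'$ forms a Vitali relation for $\vol^g$ in Federer's sense \cite[Subsec.~2.8]{Fed:69}, so any fine cover of $A$ by members of $\mathcal{F}'$ admits an almost-disjoint, $\vol^g$-efficient refinement. The local density Lemma~\ref{lem-vol-den} gives $\rho_n(J)=(1+o(1))\vol^g(J)$ along $\mathcal{F}'$ as $\diam J\to0$. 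Feeding these two ingredients into the comparison theorems for Carath\'eodory's construction \cite[2.10.17, 2.10.18]{Fed:69} yields $\V^n_{\mathcal{F}'}\le\vol^g$, where $\V^n_{\mathcal{F}'}$ denotes the measure built from the subfamily $\mathcal{F}'$ alone. Since enlarging the admissible covering family can only decrease the covering infimum, $\V^n\le\V^n_{\mathcal{F}'}\le\vol^g$.

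For the lower bound $\V^n\ge\vol^g$ this subfamily device is useless, since $\V^n\le\V^n_{\mathcal{F}'}$ points the wrong way; here one must control \emph{arbitrary} causal diamonds. The key local estimate I would establish is
\begin{equation}
\vol^g(J(p,q))\le(1+\eps)\,\omega_n\,\tau(p,q)^n=(1+\eps)\,\rho_n(J(p,q))
\end{equation}
for all $J(p,q)\subseteq W'$ of sufficiently small diameter, with $\eps=\eps(\diam,C)\to0$. Granting this, any countable cover $A\subseteq\bigcup_i J(p_i,q_i)$ by diamonds of diameter $\le\delta$ satisfies $\sum_i\rho_n(J(p_i,q_i))\ge(1-\eps)\sum_i\vol^g(J(p_i,q_i))\ge(1-\eps)\vol^g(A)$ by countable subadditivity of $\vol^g$; taking the infimum over covers and letting $\delta\to0$ gives $\V^n(A)\ge\vol^g(A)$, which finishes the argument.

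The hard part is exactly this general-diamond volume bound, and within it the regime of \emph{nearly null} diamonds. The cone squeeze $\eta_{C^{-1}}\prec g\prec\eta_C$ is too lossy there: it only bounds $\vol^g(J(p,q))$ by $\mathcal{H}^n(J_{\eta_C}(p,q))\sim\tau_{\eta_C}(p,q)^n$, yet $\tau_{\eta_C}(p,q)$ can vastly exceed $\tau(p,q)$ when $p,q$ are close to $g$-null, so the per-diamond ratio $\rho_n/\vol^g$ is not controlled by this comparison. The geometric fact that rescues the estimate is that $J(p,q)$ is pinched between two nearly parallel null-geodesic boundaries and hence stays thin; equivalently, boosting to the rest frame of $q-p$ sends $J(p,q)$ (volume-preservingly) to a near-standard Minkowski diamond of height $\tau(p,q)$, \emph{provided} the boosted metric remains $C^0$-close to $\eta_1$. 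Making this uniform for a merely continuous $g$, where the boost amplifies the perturbation and geodesic deviation is delicate, is the crux. I would handle it either by the smooth approximation of the Appendix (squeezing $g$ between smooth metrics with slightly wider and narrower cones and invoking the smooth volume expansion~\eqref{eq-vol-exp}), or by a direct ball-intersection computation in the spirit of Lemma~\ref{L:spacelike subspace}.
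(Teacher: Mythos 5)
Your first half --- the bound $\V^n\le\vol^g$ on a cylindrical neighborhood via the axis subfamily $\J'$, the density Lemma \ref{lem-vol-den}, and a Vitali-type covering theorem powered by the doubling of Lemma \ref{lem-adv-cyl-nhd-dou} and Corollary \ref{cor-con-enl} --- is essentially the paper's argument (they invoke \cite[Thm.\ 2.10.18(1)]{Fed:69} together with Borel regularity), and your observation that restricting the covering family can only increase the Carath\'eodory infimum is exactly the right way to pass from $\V^n_{\mathcal{F}'}$ to $\V^n$.

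The second half, however, rests on a key estimate that is false. You propose to prove $\vol^g(J(p,q))\le(1+\eps)\,\omega_n\,\tau(p,q)^n$ with $\eps\to 0$ uniformly over \emph{all} causal diamonds of small diameter, and you correctly identify the nearly null regime as the crux --- but in that regime the estimate simply does not hold, even for smooth non-flat metrics in dimension $n\ge 3$. If $q-p$ has Euclidean size $\delta$ and $\tau(p,q)^2\sim\mu\delta^2$ with $\mu\lesssim\delta^2$, then both $\tau(p,q)^2$ and $\vol^g(J(p,q))$ receive curvature (or, for merely continuous $g$, modulus-of-continuity) corrections whose \emph{relative} size is of order $\|R\|\,\delta^2/\mu=O(1)$, and these corrections involve different contractions of the metric variation along the almost-null chord; hence the ratio $\vol^g(J)/\rho_n(J)$ stays bounded away from $1$ (and can even blow up as $\tau(p,q)\to0$ with $\delta$ fixed at a given scale). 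Neither of your proposed rescues repairs this: the boost to the rest frame of $q-p$ amplifies a $C^0$-perturbation of size $\eps$ by the square of the rapidity factor $\sim\delta/\tau(p,q)\to\infty$, and the expansion \eqref{eq-vol-exp} is derived for diamonds of bounded aspect ratio, so its error term is not uniform as the diamond degenerates (Lemma \ref{L:spacelike subspace} concerns exact Minkowski space, where the identity is a consequence of Lorentz invariance and says nothing about metric variation). This is precisely why the paper's Lemma \ref{lem-vol-den} is stated only for diamonds in $\J'$, whose coordinate-time extent is comparable to their diameter. The paper's own proof of $\vol^g(A)\le\V^n(A)$ instead controls the densities $\vol^g(A\cap J)/\rho_n(J)$ along the fine subfamily $\J'$ and appeals to \cite[Thm.\ 2.10.17(2)]{Fed:69}, rather than attempting a per-diamond bound over all of $\J$; to complete your route you would need a genuinely new argument showing that covers by nearly null diamonds cannot be collectively more $\rho_n$-efficient than covers by axis diamonds, which is a packing statement your sketch does not supply.
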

\begin{pr}
 First, we show equality locally, i.e., in a cylindrical neighborhood $(W',W)$ as in Lemma \ref{lem-adv-cyl-nhd-dou}. To this 
end let $O\subseteq W'$ be open and $A\subseteq O$ Borel measurable. 
 Note that $\J'$ covers $A$ finely, all $J(p,q)$ are closed and $\rho_n({J(\hat p,\hat q)})\leq \tilde L\, \rho_n(J(p,q))$ for all $p=(t,x), q=(s,x)\in W'$ by Equation \eqref{eq-con-tau-loc}. Letting $\eps>0$, for every $a\in A$ Lemma \ref{lem-vol-den} yields
 \begin{equation}\label{eq-mea-vol-equ-lb}
  \limsup_{\substack{\diam(J(p,q))\to 0\\a\in J(p,q)\in\J'}} \frac{\vol^g(J(p,q))}{\rho_n(J(p,q))}>1-\eps\,.
 \end{equation}
 At this point we can apply a slight modification of \cite[Thm.\ 2.10.18(1)]{Fed:69} to get that $\vol^g(O)\geq (1-\eps) \V^n(A)$ (the modification is needed as $\widehat{J(p,q)}\notin\J$ in general). As Equation \eqref{eq-mea-vol-equ-lb} holds for all $\eps>0$ and $\vol^g$ is Borel regular we get that $\vol^g(A)\geq \V^n(A)$ for all $A\subseteq U$ Borel measurable (any Borel measure on a manifold is regular, so also $\V^n$ is; cf.\ Ulam's Theorem \cite[VIII.\S 1 Thm.\ 1.16]{Els:18}).

For the other inequality we again apply Lemma \ref{lem-vol-den} to obtain that for $\eps>0$ we have that for every $a\in A$
 \begin{equation}
  \limsup_{\substack{\diam(J(p,q))\to 0\\a\in J(p,q)\in\J'}} \frac{\vol^g(A\cap J(p,q))}{\rho_n(J(p,q))}<1+\eps\,.
 \end{equation}
Then \cite[Thm.\ 2.10.17(2)]{Fed:69} gives $\vol^g(A)\leq \V^n(A)$ and we are done. Finally, as now $\V^n$ agrees with $\vol^g$ on all Borel measurable subsets of such cylindrical neighborhoods we have that $\V^n=\vol^g$: Let $A\subseteq M$ be Borel measurable and let $(W_l)_l$ be a covering of $M$ by such cylindrical neighborhoods as above. Set $V_0:=W_0$ and $V_l:=W_l\backslash \bigcup_{k=0}^{l-1}V_k$ for $l\geq 1$. Then $V_l\subseteq W_l$ for all $l\in\N$, $(V_l)_l$ is pairwise disjoint and $M=\bigcup_l V_l$. Thus
\begin{equation}
 \vol^g(A)=\sum_l \vol^g(A\cap V_l) = \sum_l \V^n(A\cap V_l) = \V^n(A)\,.
\end{equation}
\end{pr}

\begin{prop}[Topological vs geometric dimension for spacetimes]
\label{prop-con-st-d-uni-dim}
 Let $(M,g)$ be a strongly causal, causally plain continuous spacetime of dimension $n$ and $h$ a smooth Riemannian background metric on $M$. Then
 \begin{enumerate}
  \item the induced \LpLS is locally $d^h$-uniform.
  \item Moreover, the geometric dimension $\dims(M)$ agrees with the manifold dimension, i.e. $\dims(M)=n$.
 \end{enumerate}
\end{prop}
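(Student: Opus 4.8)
The plan is to handle the two claims in turn, using the cylindrical neighborhoods of Lemma~\ref{lem-adv-cyl-nhd} for both and feeding the identity $\vol^g=\V^n$ of Theorem~\ref{thm-mea-vol-equ} into the dimension count. For (i), fix $p_0\in M$ and some $C>1$, and choose a cylindrical neighborhood $(W',W)$ as in Lemma~\ref{lem-adv-cyl-nhd}, so that $g\prec\eta_C$ on $W$, the coordinate field $\partial_t$ is uniformly timelike, $\overline W$ is compact, and $W'$ is causally convex in $W$. I claim $A:=W'$ witnesses local $d^h$-uniformity. If $x=(t,\vec x),y=(s,\vec y)\in W'$ with $x\leq y$, then causal convexity gives $J(x,y)=J(x,y,W)\subseteq W$, so every future directed causal curve from $x$ to $y$ has image in $J(x,y)\subseteq W$; as $g\prec\eta_C$ such a curve is $\eta_C$-causal, and parametrizing it by the coordinate time $r$ (which is strictly increasing, since $\eta_C$-causal curves satisfy $|\dot{\vec\gamma}|_e\leq C$) one finds $\sqrt{-g(\dot\gamma,\dot\gamma)}\leq M$ for a constant $M=M(C,\overline W)$, by continuity of $g$ on the compact $\overline W$. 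Hence $\tau(x,y)=\sup_\gamma L^g(\gamma)\leq M\,(s-t)\leq M\,|x-y|_e$, while $\tau(x,y)=0$ when $x\not\leq y$; since $|\cdot|_e$ and $d^h$ are comparable on the relatively compact $W$, this yields $\tau(x,y)\lesssim d^h(x,y)$ on $W'$, so in particular $\tau(x,y)=o(1)$ as $d^h(x,y)\searrow0$. This is exactly the upper-bound computation in the proof of Lemma~\ref{lem-con-tau-loc}, run for endpoints that need not share a spatial coordinate.

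For (ii) I would exploit $\V^n=\vol^g$. Cover $M$ (second countable) by countably many relatively compact cylindrical neighborhoods $(W_l)_l$. On each, $|\det(g)|$ is bounded on the compact $\overline{W_l}$, so $\V^n(W_l)=\vol^g(W_l)<\infty$ and hence $\dims(W_l)\leq n$; since (i) makes $M$ locally $d^h$-uniform, Lemma~\ref{lem-dim-cup} gives $\dims(M)=\sup_l\dims(W_l)\leq n$. For the reverse bound, fix one nonempty $W=W_l$: nondegeneracy of the continuous metric gives $|\det(g)|>0$ on $\overline W$, so $\V^n(W)=\vol^g(W)>0$. If $\dims(W)<n$ held, then Lemma~\ref{lem-dim-mM},\ref{lem-dim-mM-M} (valid by local $d^h$-uniformity) applied with $K=n$ would force $\V^n(W)=0$, a contradiction; thus $\dims(W)\geq n$, and monotonicity (Lemma~\ref{lem-dim-mM},\ref{lem-dim-mM-sub}) gives $\dims(M)\geq\dims(W)\geq n$. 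Combining the two estimates yields $\dims(M)=n$; equivalently, one may read this off the characterization in Corollary~\ref{C:dimension}.

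Almost all of (ii) is bookkeeping layered on top of the already-proved equality $\V^n=\vol^g$ and the abstract dimension lemmas, so the only genuine work lies in (i). There the delicate step is the uniform length bound $L^g(\gamma)\leq M(s-t)$ for all connecting causal curves: it relies on causal convexity to confine these curves to $W$, on $g\prec\eta_C$ to cap their Euclidean spatial speed, and on continuity (hence local boundedness) of $g$ to bound $\sqrt{-g(\dot\gamma,\dot\gamma)}$ uniformly. I expect this to be the main --- though quite modest --- obstacle, and it essentially duplicates the estimate already carried out in the proof of Lemma~\ref{lem-con-tau-loc}.
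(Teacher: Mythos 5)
Your proposal is correct and follows essentially the same route as the paper: part (i) is the upper bound from the proof of Lemma \ref{lem-con-tau-loc} (which you rightly note must be run for endpoints with differing spatial coordinates, using causal convexity, $g\prec\eta_C$, and compactness of $\overline W$), and part (ii) combines $\V^n=\vol^g$ from Theorem \ref{thm-mea-vol-equ} with the finiteness and positivity of $\vol^g$ on cylindrical charts, Lemma \ref{lem-dim-mM}, and Lemma \ref{lem-dim-cup} exactly as the paper does.
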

\begin{pr}
% \begin{enumerate}\item 
  (i) This follows from the proof of Lemma \ref{lem-con-tau-loc}, in particular from Equation \eqref{eq-con-tau-loc}, and the fact that locally in a chart, one can always bound the  Euclidean distance by the induced Riemannian distance.
  
  (ii) Theorem \ref{thm-mea-vol-equ} gives that $\V^n=\vol^g$. Let $M=\bigcup_i U_i$ be a covering of $M$ by open charts such 
that in each chart $U_i$ the volume measure $\vol^g$ is absolutely continuous with respect to the $n$-dimensional Lebesgue 
measure (cf.\ Lemma \ref{lem-adv-cyl-nhd-dou}) and $\vol^g(U_i)<\infty$. Then, $\dims(U_i)\leq n$ for all $i\in\N$. Also, 
assuming $\dims(U_i)<n$, Lemma \ref{lem-dim-mM},\ref{lem-dim-mM-M} gives that $0=\V^n(U_i)=\vol^g(U_i)$, which contradicts 
the absolute continuity with respect to the $n$-dimensional Lebesgue measure. Finally, this yields $\dims(U_i)=n$ for all 
$i\in\N$ and so by Lemma \ref{lem-dim-cup}, $\dims(M)=n$.
% \end{enumerate}
\end{pr}

\subsection{Doubling measures on continuous spacetimes}\label{subsec-cou-dou-mea}
A (Borel) measure $\mu$ on a metric space $(X,d)$ is said to be \emph{doubling} if there exists a constant $C\geq 1$ (called the \emph{doubling constant}) such that for all $x\in X, r>0$ one has that
\begin{equation}\label{eq-met-dou}
 \mu(B_{2r}(x))\leq C\, \mu(B_r(x))\,.
\end{equation}
By restricting to small balls one arrives at an analogous notion of \emph{local doubling measure}. Of particular interest for 
us is the following classical result: If $(X,d)$ is a metric space with a doubling measure that has doubling constant $C$, 
then the Hausdorff dimension of $(X,d)$ is bounded above by $\log_2(C)$, i.e., $\dim^H(X)\leq \log_2(C)$, cf.\ e.g.\ 
\cite[Cor.\ 2.5]{Stu:06b}. Moreover, a synthetic (lower) bound on the Ricci curvature implies that the reference measure is 
locally doubling, hence gives a bound on the Hausdorff dimension, cf.\ e.g.\ \cite[Cor.\ 2.4]{Stu:06b}. Thus, a natural 
question in the Lorentzian setting is what is the right analog of the doubling property \eqref{eq-met-dou} and how does it 
relate to the geometric dimension introduced in Section \ref{sec-dim}? We will define local doubling for causal diamonds in 
the setting of continuous spacetimes below and relate the (causal) doubling constant to the geometric dimension in Theorem 
\ref{thm-dou-syn-dim}.
\bigskip

Lemma \ref{lem-adv-cyl-nhd-dou} indicates what the (local) doubling property of the volume measure $\vol^g$ is and thus we are lead to the following:

\begin{defi}[Causal doubling property]\label{def-cau-dou}
Let $\Xll$ be a \LpLSn, let $\mathcal{F}\subseteq\J$ be a family of causal diamonds and let 
$(\mathcal{F},\mathcal{F}',\varDelta,2)$ be a doubling of causal diamonds. A Borel measure $m$ on $(X,d)$ is 
\emph{causally doubling} if there exists a constant $L\geq 1$ such that
\begin{enumerate}
%  \item $m(\widehat{J(p,q)})\leq m(J(\hat p,\hat q))\leq L\, m(J(p,q))$ for all $p=(t,x),q=(s,x)\in W'$, and
 \item for all $x,y\in X$ one has
 \begin{equation}\label{eq-cau-dou}
  m(\widehat{J(x,y)})\leq L\, m(J(x,y))\,, \text{and}
 \end{equation}
 \item $0<m(J(x,y))<\infty$ for $x,y\in X$ with $x\ll y$.
\end{enumerate}
The constant $L$ is called the \emph{doubling constant of $m$}.
\end{defi}

One can easily also introduce a local version of the causal doubling property \ref{def-cau-dou}. However for our purposes we 
will focus on the case of continuous spacetimes and therefore give a definition adapted to this setting. In essence it 
says that if we restrict the measure to the doubling of causal diamonds in a cylindrical neighborhood, cf.\ Corollary 
\ref{cor-con-enl}, then this restriction is a causally doubling.

\begin{defi}[Local doubling property]
A Borel measure $m$ on a continuous spacetime $(M,g)$ is \emph{locally causally doubling} if for every cylindrical 
neighborhood $(W',W)$ there exists a constant $L\geq 1$ (depending on $n,\overline{W},C)$ such that
\begin{enumerate}
%  \item $m(\widehat{J(p,q)})\leq m(J(\hat p,\hat q))\leq L\, m(J(p,q))$ for all $p=(t,x),q=(s,x)\in W'$, and
 \item for all $p=(t,x),q=(s,x)\in W'$ one has
 \begin{equation}\label{eq-loc-cau-dou}
  m(J(\hat p,\hat q, W))\leq L\, m(J(p,q))\,,
 \end{equation}
 \item $0<m(J(p,q,W))<\infty$ for $p,q\in W$ with $p\ll q$,
 \item $m(\overline{W})<\infty$.
\end{enumerate}
The constant $L$ is called the \emph{(local) doubling constant of $m$} (with respect to $(W',W)$).
\end{defi}

% For example by Lemma \ref{lem-adv-cyl-nhd-dou} the volume measure $\vol^g$ of a continuous spacetime is doubling in this sense.

\begin{thm}[Ratio of measures via local doubling]\label{thm-dou-rat}
 Let $m$ be locally causally doubling. Then for any cylindrical neighborhood $(W',W)$ and for $\tilde W=(\tilde a, \tilde b)\times \tilde V\subseteq W'$ open, non-empty, there are constants $K,\kappa>0$ (depending only on $C,\delta$ and $L$) such that the following holds: if $p_0=(t_0,x_0),q_0=(s_0,x_0),p=(t,x),q=(s,x)\in \tilde W$ with $J(p,q)\cap J(p_0,q_0)\neq\emptyset$ and $0<s_0-t_0 < 2 \frac{\tilde b-\tilde a}{1+2\lambda}$ with
 \begin{equation}\label{eq-thm-dou-rat-ass}
  s+t\in \bigl(2\tilde a+\frac{s_0-t_0}{2}(1+2\lambda),2\tilde b-\frac{s_0-t_0}{2}(1+2\lambda)\bigr)\,,
 \end{equation}
then
 \begin{equation}
  \frac{m(J(p,q))}{m(J(p_0,q_0))} \geq \frac{1}{K} \Bigl(\frac{\tau(p,q)}{\tau(p_0,q_0)}\Bigr)^{\kappa}\,.
 \end{equation}
\end{thm}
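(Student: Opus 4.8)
The plan is to compare $m(J(p,q))$ and $m(J(p_0,q_0))$ through a chain of on-axis enlargements, each step multiplying the coordinate time extent by the fixed factor $1+2\lambda$ and the measure by a factor governed by the local doubling constant $L$; the exponent $\kappa$ will then emerge as $\log_{1+2\lambda}L$.

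First I would pass from the time separation to the coordinate time extent. Writing $\varDelta=s-t$ and $\varDelta_0=s_0-t_0$, the two-sided estimate \eqref{eq-con-tau-loc} of Lemma \ref{lem-con-tau-loc} shows that on the cylindrical neighborhood $\tau(p,q)$ is comparable to $\varDelta$ and $\tau(p_0,q_0)$ to $\varDelta_0$, with constants depending only on $C$ and $\delta$. Hence it suffices to produce $c,\kappa>0$ with $m(J(p,q))/m(J(p_0,q_0))\ge c\,(\varDelta/\varDelta_0)^{\kappa}$, since reinstating $\tau$ only alters $c$ (and thus $K$). I record the basic comparison step used throughout: by Corollary \ref{cor-con-enl} together with the local doubling inequality \eqref{eq-loc-cau-dou}, if two on-axis diamonds $J,J'\subseteq W'$ meet and the extent of $J'$ is at most twice that of $J$, then $J'\subseteq\widehat{J}\subseteq J(\hat p,\hat q,W)$, so $m(J')\le L\,m(J)$. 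The hypotheses $0<\varDelta_0<2\tfrac{\tilde b-\tilde a}{1+2\lambda}$ and \eqref{eq-thm-dou-rat-ass} are exactly what ensure that every enlarged diamond constructed below still has its vertices in $W'$, so that \eqref{eq-loc-cau-dou} applies at each step.

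Next, when $\varDelta\le\varDelta_0$ I would grow $J(p,q)$ concentrically along its axis: set $J^{(0)}=J(p,q)$ and let $J^{(i+1)}$ be the on-axis enlargement of $J^{(i)}$, so that $J^{(i)}$ has extent $(1+2\lambda)^{i}\varDelta$ and $m(J^{(i+1)})\le L\,m(J^{(i)})$ by \eqref{eq-loc-cau-dou}. Stopping at the first $j$ with $(1+2\lambda)^{j}\varDelta\ge\varDelta_0$ gives $j\le 1+\log_{1+2\lambda}(\varDelta_0/\varDelta)$, the diamond $J^{(j)}$ still meets $J(p_0,q_0)$ (it contains $J(p,q)$), and the two extents are now comparable, so the basic comparison step yields $m(J(p_0,q_0))\le L\,m(J^{(j)})\le L^{\,j+1}m(J(p,q))$. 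Rearranging,
\[
\frac{m(J(p,q))}{m(J(p_0,q_0))}\ \ge\ L^{-(j+1)}\ \ge\ \frac{1}{L(1+2\lambda)^{\kappa}}\Bigl(\frac{\varDelta}{\varDelta_0}\Bigr)^{\kappa},\qquad \kappa=\log_{1+2\lambda}L,
\]
which is the desired bound after folding the constants from \eqref{eq-con-tau-loc} into $K$.

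The complementary regime $\varDelta>\varDelta_0$ is where the real work lies, since forward doubling \eqref{eq-loc-cau-dou} by itself only bounds $m(J(p,q))$ from above by $m(J(p_0,q_0))$; to bound it from below I need a reverse doubling estimate, which I would derive from the strict positivity $m(J)>0$ of every causal diamond. Concretely, inside $\widehat{J}\setminus J$ I would place an on-axis sub-diamond $J^{\sharp}$ of extent comparable to that of $J$, disjoint from $J$ but positioned so that its own enlargement again contains $\widehat{J}$; then \eqref{eq-loc-cau-dou} applied to $J^{\sharp}$ gives $m(\widehat{J})\le L\,m(J^{\sharp})$, while disjointness gives $m(\widehat{J})\ge m(J)+m(J^{\sharp})\ge m(J)+\tfrac{1}{L}m(\widehat{J})$, hence $m(\widehat{J})\ge\tfrac{L}{L-1}m(J)$. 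Iterating this reverse estimate along the concentric chain anchored at $J(p_0,q_0)$ until the extent reaches $\varDelta$, and closing with one basic comparison step onto the axis of $J(p,q)$, produces a lower bound of the same polynomial shape. I expect the main obstacle to be precisely this reverse step --- constructing $J^{\sharp}$ with the three simultaneous properties (disjointness, comparable extent, recapturing enlargement) while keeping it, and all intermediate diamonds, admissible inside $W'$; the conditions on $\varDelta_0$ and on $s+t$ are tailored to provide exactly the vertical room this construction consumes. Collecting the constants from all steps (the $L^{O(1)}$ factors from the finitely many comparison hops and the $C,\delta$-factors from \eqref{eq-con-tau-loc}) into $K$, and taking $\kappa$ as above, completes the proof.
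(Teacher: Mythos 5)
Your first case ($s-t\le s_0-t_0$) is, up to bookkeeping, exactly the paper's proof: one grows $J(p,q)$ concentrically along its axis by the factor $1+2\lambda$ per step, takes the first $k$ for which Lemma \ref{lem-adv-cyl-nhd}\ref{lem-adv-cyl-nhd-enl} forces $J(p_0,q_0)\subseteq J(\hat p^{k},\hat q^{k},W)$, applies \eqref{eq-loc-cau-dou} at each step to get $m(J(p_0,q_0))\le L^{k}m(J(p,q))$, uses minimality of $k$ to convert $L^{-k}$ into $\bigl(\tfrac{s-t}{s_0-t_0}\bigr)^{\kappa}$ with $\kappa=\log_{1+2\lambda}(L)$, and finishes with Lemma \ref{lem-con-tau-loc}. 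Your reading of the hypotheses on $s_0-t_0$ and on $s+t$ (room for the enlarged vertices) is also the intended one. This part is correct and constitutes the entirety of the paper's argument.

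The gap is in your second case. To conclude $m(\widehat{J})\le L\,m(J^{\sharp})$ from \eqref{eq-loc-cau-dou} you need $\widehat{J}\subseteq J(\hat p^{\sharp},\hat q^{\sharp},W)$. But $\widehat{J}$ fills (up to the causal structure) the coordinate-time slab of $J(\hat p,\hat q,W)$, of length $(1+2\lambda)(s-t)$, whereas $J(\hat p^{\sharp},\hat q^{\sharp},W)$ occupies a slab of length $(1+2\lambda)(s^{\sharp}-t^{\sharp})$ centred on the slab of $J^{\sharp}$; if $s^{\sharp}-t^{\sharp}\approx s-t$ the two slabs must essentially coincide, which forces $J^{\sharp}$ to sit in the same time slab as $J$ on a different spatial axis --- and then Lemma \ref{lem-adv-cyl-nhd}\ref{lem-adv-cyl-nhd-enl} only places diamonds \emph{meeting $J^{\sharp}$} inside $J(\hat p^{\sharp},\hat q^{\sharp},W)$, which does not capture $\widehat{J}$ (a union of diamonds meeting $J$). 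So the three properties you require of $J^{\sharp}$ are mutually incompatible. Moreover, even granting $m(\widehat{J})\ge\tfrac{L}{L-1}m(J)$, iteration yields the exponent $\log_{1+2\lambda}\tfrac{L}{L-1}$, which for the relevant (large) doubling constants is strictly smaller than $\kappa=\log_{1+2\lambda}L$; since in this regime $\tau(p,q)/\tau(p_0,q_0)>1$, the resulting lower bound is asymptotically weaker than $\tfrac1K\bigl(\tau(p,q)/\tau(p_0,q_0)\bigr)^{\kappa}$ and no uniform $K$ repairs it. Indeed a doubling weight such as $|x|^{\alpha}\,d\cH^{n}$ with $-n<\alpha<0$ concentrated near the axis shows that forward doubling alone cannot force growth at the rate $\kappa$.

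You did correctly sense that the literal statement also covers the regime $s-t>s_0-t_0$; the honest resolution, however, is not a reverse-doubling argument (which does not follow from Definition \ref{def-cau-dou}) but the observation that the paper's own proof silently restricts to $2(s-t)<s_0-t_0$ (the minimal $k\ge1$ in its chain exists only then, the boundary case being absorbed into $K$), and that the sole application, Theorem \ref{thm-dou-syn-dim}, invokes the result exclusively with $s-t=T_\xi$ small and $J(p_0,q_0)$ fixed.
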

\begin{pr}
For $k\in\N$ we denote by $(\hat p^k, \hat q^k)$ the $k$-times ``enlarged'' points, i.e., $\hat p^k=(\hat t^k, x)$, $\hat q^k=(\hat s^k,x)$ where $\hat t^k = t+\frac{s-t}{2}(1-(1+2\lambda)^k)=\frac{s+t}{2}-\frac{s-t}{2}(1+2\lambda)^k$ and $\hat s^k = s-\frac{s-t}{2}(1-(1+2\lambda)^k)=\frac{s+t}{2}+\frac{s-t}{2}(1+2\lambda)^k$. We want to find $k\in\N$ minimal such that $J(p_0,q_0)\subseteq J(\hat p^k,\hat q^k, W)$. By Lemma \ref{lem-adv-cyl-nhd}\ref{lem-adv-cyl-nhd-enl} it suffices to find a $k$ such that
\begin{equation}\label{eq-k-enl-pts}
 2 (\hat s^{k-1}-\hat t^{k-1})< s_0-t_0 \leq 2 (\hat s^k - \hat t^k)\,,
\end{equation}
and this is possible as by assumption \eqref{eq-thm-dou-rat-ass} since $J(p,q)\cap J(p_0,q_0)\neq\emptyset$. Thus let 
$k\in\N$ satisfy Equation \eqref{eq-k-enl-pts}. By construction we have that $\hat p^{k},\hat q^{k}\in \tilde W$, so
%\marginpar{{\red RM: rather $W$?} {No, this follows from (8),(9)}} 
\begin{equation}
 J(p_0,q_0)\subseteq \reallywidehat{J(\hat p^{k-1},\hat q^{k-1})}\subseteq J(\hat p^k,\hat q^k, W)\subseteq W\,.
\end{equation}
The doubling property (on $W$, with local doubling constant $L$) implies that
\begin{equation}\label{eq-k-dou-est}
 m(J(p_0,q_0))\leq m(J(\hat p^k,\hat q^k)) \leq L^k m(J(p,q))\,.
\end{equation}
Since $2(1+2\lambda)^{k-1}(s-t) = 2 (\hat s^{k-1} - \hat t^{k-1}) < s_0-t_0$ and setting $\kappa:=\log_{1+2\lambda}(L)$ we conclude that
\begin{equation}
 \Bigl(\frac{s-t}{s_0-t_0}\Bigr)^\kappa < \underbrace{\Bigl(\frac{1+2\lambda}{2}\Bigr)^\kappa}_{=:K}(1+2\lambda)^{-k\kappa} = K L^{-k}\,.
\end{equation}
This gives by Equation \eqref{eq-k-dou-est} that
\begin{equation}
  \frac{m(J(p,q))}{m(J(p_0,q_0))} \geq \frac{1}{K} \Bigl(\frac{s-t}{s_0-t_0}\Bigr)^{\kappa}\,.
\end{equation}
Finally, we apply Lemma \ref{lem-con-tau-loc} to estimate $\frac{s-t}{s_0-t_0}$ by $\frac{\tau(p,q)}{\tau(p_0,q_0)}$ and thereby only changing the constant $K$.
\end{pr}

Following \cite[Subsec.\ 2.8]{Fed:69} we set up the machinery needed to get a \emph{Vitali covering lemma} for causal diamonds and causally doubling measures. First we recall the following

\begin{defi}[Adequate]
 Let $(X,d)$ be a metric space and let $\nu$ be a Borel measure on $(X,d)$ such that any bounded set has finite 
$\nu$-measure. Let $\F$ be a family of closed subsets of $X$ and let $A\subseteq X$. We say that $\F$ is 
\emph{$\nu$-adequate for $A$} if for all open $V\subseteq X$ there is a countable subfamily $\G\subseteq\F$ of pairwise 
disjoint sets with $\bigcup_{G\in\G} G\subseteq V$ and
 \begin{equation}
  \nu((V\cap A)\backslash \bigcup_{G\in\G}G)=0\,.
 \end{equation}
\end{defi}

\begin{prop}[Doubling criterion for $m$-adequacy]
\label{prop-VCL-CD}
Let $m$ be a locally causally doubling measure on (a continuous, strongly causal, causally plain spacetime) $M$. Given a cylindrical neighborhood $(W',W)$ in $M$, if $A\subseteq W'$, then $\J'$ is $m$-adequate for $A$.
\end{prop}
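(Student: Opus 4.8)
The plan is to recognize this as an instance of the enlargement version of Vitali's covering theorem \cite[2.8.17]{Fed:69}, applied to the doubling of causal diamonds $(\F,\J',\varDelta,2)$ produced in Corollary \ref{cor-con-enl}. First I would record the finiteness and regularity prerequisites needed to make $m$ an admissible measure $\nu$ in the definition of adequacy: since $A\subseteq W'\subseteq W$ and $m(\overline{W})<\infty$ by the third clause of local doubling, every subset of $W$ that occurs has finite $m$-measure, and $m$ is Borel regular by Ulam's theorem \cite[VIII.\S 1 Thm.\ 1.16]{Els:18}.

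Next I would verify that $\J'$ finely covers $A$. Fix $a=(t_a,x_a)\in A$ and $\eps>0$. Since $\partial_t$ is uniformly timelike on $W$ (Lemma \ref{lem-adv-cyl-nhd}(iii)) and $W'$ is open, for $t<t_a<s$ with $s-t$ small the points $p=(t,x_a)$, $q=(s,x_a)$ lie in $W'$ and satisfy $p\ll a\ll q$, so $a\in J(p,q)\in\J'$; moreover $\eta_{C^{-1}}\prec g\prec\eta_C$ forces $\diam J(p,q)\to 0$ as $s-t\searrow 0$, so $J(p,q)\subseteq B_\eps(a)$ once $s-t$ is small enough. Hence the covering relation $\{(a,J):a\in A,\ J\in\J',\ a\in J\}$ is fine at every point of $A$.

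The measure-theoretic heart is the enlargement estimate $m(\widehat{J})\le L\,m(J)$ for all $J\in\J'$, which is where Corollary \ref{cor-con-enl} and the local doubling hypothesis combine. For $J=J(p,q)$ with $p=(t,x),q=(s,x)\in W'$, Corollary \ref{cor-con-enl} gives $\widehat{J(p,q)}\subseteq W\cap J(\hat p,\hat q)=J(\hat p,\hat q,W)$, and \eqref{eq-loc-cau-dou} gives $m(J(\hat p,\hat q,W))\le L\,m(J(p,q))$; together with clause (ii) of local doubling, namely $0<m(J(p,q))<\infty$, this is exactly the bounded-enlargement condition Federer requires. Since $(\F,\J',\varDelta,2)$ is already a reasonable enlargement (Corollary \ref{cor-con-enl}), the hypotheses of \cite[2.8.17]{Fed:69} are met and $\J'$ is an $m$-Vitali relation in the sense of \cite[2.8.16]{Fed:69}. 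Finally I would translate this into adequacy: given an open $V\subseteq X$, set $\J'_V:=\{J\in\J':J\subseteq V\}$; because $V$ is open and $\J'$ finely covers $A$, the restricted relation $\{(a,J):a\in A\cap V,\ J\in\J'_V,\ a\in J\}$ is still fine at every point of $A\cap V$, so the $m$-Vitali property yields a countable, pairwise disjoint $\G\subseteq\J'_V$ with $m\bigl((A\cap V)\setminus\bigcup_{G\in\G}G\bigr)=0$, while $\bigcup_{G\in\G}G\subseteq V$ holds by construction — precisely the assertion that $\J'$ is $m$-adequate for $A$.

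The main obstacle is the bookkeeping of matching the paper's axiomatization of a reasonable enlargement and of local doubling to the precise hypotheses of Federer's enlargement theorem: one must check that the gauge $\varDelta$ is bounded on $\J'$ (which holds since the time-heights $s-t$ are bounded on the relatively compact $W'$) and that the control constant $\Xi$ from the reasonable-enlargement definition supplies exactly the structural input Federer uses. Once these identifications are in place, the conclusion is immediate.
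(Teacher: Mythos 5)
Your proposal is correct and follows essentially the same route as the paper: establish that $\J'$ finely covers $A$ by closed diamonds, obtain the bounded-enlargement estimate $m(\widehat{J(p,q)})\le m(J(\hat p,\hat q,W))\le L\,m(J(p,q))$ from Corollary \ref{cor-con-enl} together with the local doubling hypothesis, and then invoke Federer's covering machinery. The only (cosmetic) difference is that you pass through the Vitali-relation theorem \cite[2.8.17]{Fed:69} and then translate back to adequacy, whereas the paper applies \cite[Thm.\ 2.8.7]{Fed:69} directly; also note that your fine-covering argument implicitly uses the causal convexity of $W'$ in $W$ (so that $J(p,q)=J(p,q,W)\subseteq J_{\eta_C}(p,q)$ controls the diameter), which is part of the cylindrical-neighborhood setup and should be said explicitly.
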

\begin{pr}
Every causal diamond in $\J'$ is closed and contained in $W$. Moreover, any $A\subseteq W'$ is finely covered by $\J'$: if $a=(r,z)\in A$ and $\eps>0$ then by strong causality there are $\bar p, \bar q$ such that $a\in I(\bar p,\bar q)\subseteq J(\bar p,\bar q)\subseteq B_\eps(a)$. Then let $\alpha>0$ be small enough that $a_-:=(r-\alpha,z)\ll a\ll a_+:=(r+\alpha,z)\in I(\bar p,\bar q)$ and $a_\pm\in W'$. Then $J(a_-,a_+)\in\J'$ and $a\in J(a_-,a_+)\subseteq J(\bar p,\bar q)\subseteq B_\eps(a)$.

The doubling property and Lemma \ref{lem-adv-cyl-nhd-dou} yield that
\begin{equation}
 m(\widehat{J(p,q)})\leq m(J(\hat p,\hat q, W))\leq L\, m(J(p,q))\,,
\end{equation}
where $L$ is the local doubling constant of $m$. Thus \cite[Thm.\ 2.8.7]{Fed:69} applies and yields that $\J'$ is $m$-adequate for $A$. 
\end{pr}

From \cite[Thm.\ 2.8.4, Cor. 2.8.5]{Fed:69} we deduce in the above setting the following useful Corollary, which could be of independent interest.

\begin{cor}[Covering by enlargements of disjoint diamonds]
\label{cor-dis-enl}
In the above setting let $A\subseteq W'$ and $G\subseteq\J'$ be $m$-adequate for $A$. Then there is a family of pairwise 
disjoint causal diamonds $\J''\subseteq G$ such that
 \begin{equation}
  \bigcup_{J\in J''} J\ \subseteq \bigcup_{J\in G}\widehat{J}\,,
 \end{equation}
 and for any finite subset $H\subseteq J''$ one has that
 \begin{equation}
  A\ \backslash\bigcup_{J\in H} J \subseteq \bigcup_{J\in \J''\backslash H} \widehat{J}\,.
 \end{equation}
\end{cor}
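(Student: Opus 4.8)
The plan is to deduce the statement directly from Federer's abstract Vitali-type covering theorems \cite[Thm.\ 2.8.4, Cor.\ 2.8.5]{Fed:69}, so that the actual work consists in checking that the objects at hand --- the family $\J'$, its $\varDelta,2$-enlargement $\widehat{\cdot}$, and the measure $m$ --- meet the hypotheses of those theorems. First I would record the three structural facts that Federer's framework requires. Every $J\in\J'$ is closed and contained in $W$, as already observed in the proof of Proposition \ref{prop-VCL-CD}. The enlargement matches Federer's: by the very definition of the $\varDelta,2$-enlargement, whenever $J,J'\in\J'$ satisfy $J\cap J'\neq\emptyset$ and $\varDelta(J')\leq 2\,\varDelta(J)$ one has $J'\subseteq\widehat{J}$. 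Finally, $m$ controls the enlargement quantitatively, namely $m(\widehat{J(p,q)})\leq m(J(\hat p,\hat q,W))\leq L\,m(J(p,q))$ for all $p=(t,x),q=(s,x)\in W'$; this is precisely the chain used in the proof of Proposition \ref{prop-VCL-CD}, combining Corollary \ref{cor-con-enl} with the local causal doubling of $m$.

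With these verified, together with the hypothesis that $G\subseteq\J'$ is $m$-adequate for $A$, I would invoke \cite[Thm.\ 2.8.4]{Fed:69} to extract a countable, pairwise disjoint subfamily $\J''\subseteq G$. The first containment is then immediate: since $\J''\subseteq G$ and every diamond lies in its own enlargement (apply the enlargement property with $J'=J$, using $J\cap J\neq\emptyset$ and $\varDelta(J)\leq 2\,\varDelta(J)$), we obtain $\bigcup_{J\in\J''}J\subseteq\bigcup_{J\in\J''}\widehat{J}\subseteq\bigcup_{J\in G}\widehat{J}$. For the second containment I would apply \cite[Cor.\ 2.8.5]{Fed:69}, whose conclusion is exactly the difference-covering statement: for every finite subfamily $H\subseteq\J''$,
\[
 A\setminus\bigcup_{J\in H}J\subseteq\bigcup_{J\in\J''\setminus H}\widehat{J}\,.
\]
The underlying mechanism is the standard one: a point of $A$ lying outside the closed finite union $\bigcup_{J\in H}J$ is finely covered by some diamond of $G$ disjoint from that union, which by the greedy size-selection built into Federer's construction meets a member of $\J''\setminus H$ of comparable $\varDelta$-size and is therefore absorbed into its enlargement.

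The step I expect to be the main obstacle is the bookkeeping needed to reconcile our axiomatic enlargement with Federer's: unlike in his original setting, here $\widehat{J}\notin\J$ in general, so $\widehat{J}$ is merely an auxiliary set and not another diamond. The point to make carefully is that Federer's theorems never use membership of the enlargement in the family --- they use only the combinatorial containment property and the measure bound recorded above, both of which survive intact. A secondary point requiring care is confirming that $\varDelta$ is genuinely bounded on $\J'$ (as demanded of a reasonable enlargement) and that $m$ assigns finite mass to $\overline{W}$, so that Federer's finiteness requirements on the controlling measure are satisfied; both hold by the local causal doubling property of $m$.
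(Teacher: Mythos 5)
Your proposal is correct and follows exactly the route the paper takes: the authors give no separate argument for this corollary but deduce it directly from \cite[Thm.\ 2.8.4, Cor.\ 2.8.5]{Fed:69}, with the hypotheses (closedness of the diamonds, the enlargement's combinatorial containment property, and the bound $m(\widehat{J(p,q)})\leq L\,m(J(p,q))$) already verified in the proof of Proposition \ref{prop-VCL-CD}. Your additional care about $\widehat{J}\notin\J$ and the finiteness of $m(\overline{W})$ is consistent with the paper's own remarks and does not change the argument.
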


Analogously to the metric case (see e.g.\ \cite[Cor.\ 2.5]{Stu:06b}) a doubling property implies a bound on the (synthetic) dimension.
\begin{thm}[Doubling constant bounds the geometric dimension]\label{thm-dou-syn-dim}
 Let $(M,g)$ be a continuous, causally plain and strongly causal spacetime of dimension $n$. Let $m$ be a locally causally doubling measure on the induced \LLS of $(M,g)$, which has the same local doubling constant $L$ on all sufficiently small cylindrical neighborhoods.
 Then
 \begin{equation}
  n=\dims(M)\leq \log_{1+2\lambda}(L)\,.
 \end{equation}
\end{thm}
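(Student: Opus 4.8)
The plan is to prove the two parts of the display separately. The equality $n=\dims(M)$ is exactly Proposition~\ref{prop-con-st-d-uni-dim}(ii), so the real content is the upper bound $\dims(M)\le\kappa$, where $\kappa:=\log_{1+2\lambda}(L)$ is the exponent already produced by Theorem~\ref{thm-dou-rat}. Since $M$ is second countable it admits a countable cover by cylindrical neighborhoods, and inside each such neighborhood every point lies in a small chronological diamond; by Lemma~\ref{lem-dim-cup} it therefore suffices to show $\dims(A)\le\kappa$ for $A=I(p_0,q_0)$, where $(W',W)$ is a fixed cylindrical neighborhood with local doubling constant $L$ and $J_0:=J(p_0,q_0)\subseteq\tilde W\subseteq W'$ is a fixed reference diamond with $p_0\ll q_0$, small time-extent and centered in time, so that the size and position hypotheses $0<s_0-t_0<\tfrac2{1+2\lambda}(\tilde b-\tilde a)$ and~\eqref{eq-thm-dou-rat-ass} are met by all sufficiently small diamonds meeting $A$. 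Property~(ii) of local doubling guarantees $0<m(J_0)<\infty$.

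The key step is to convert the ratio estimate of Theorem~\ref{thm-dou-rat} into a uniform lower density bound for $m$ on a fine cover of $A$. Every diamond $J(p,q)\in\J'$ of sufficiently small diameter that meets $A$ contains a point of $A\subseteq J_0$, hence intersects $J_0$; Theorem~\ref{thm-dou-rat} (applied with reference $J_0$) then yields
\begin{equation}
 \rho_\kappa(J(p,q))=\omega_\kappa\,\tau(p,q)^\kappa\le\Lambda\,m(J(p,q)),\qquad
 \Lambda:=\frac{\omega_\kappa\,K\,\tau(p_0,q_0)^\kappa}{m(J_0)}<\infty,
\end{equation}
with $\Lambda$ independent of $J(p,q)$. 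This is the Lorentzian counterpart of the density estimate $m(B_r)\gtrsim r^{\log_2 C}$ underlying the metric result \cite[Cor.~2.5]{Stu:06b}.

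Next I would feed this into a Vitali-type covering. Restricting $\J'$ to diamonds of diameter $\le\varepsilon$ meeting $A$ gives a family that finely covers $A$ and, by Proposition~\ref{prop-VCL-CD}, is $m$-adequate for $A$; Corollary~\ref{cor-dis-enl} then produces a pairwise disjoint subfamily $\J''$ with $A\subseteq\bigcup_{J\in H}J\cup\bigcup_{J\in\J''\setminus H}\widehat J$ for every finite $H\subseteq\J''$. By Corollary~\ref{cor-con-enl} each enlargement satisfies $\widehat{J(p,q)}\subseteq J(\hat p,\hat q)\in\J$, and by Lemma~\ref{lem-con-tau-loc} the same-$x$ vertices give $\tau(\hat p,\hat q)\le\mu\,\tau(p,q)$ for a constant $\mu$ depending only on $C,\lambda,\delta$; hence $\rho_\kappa(J(\hat p,\hat q))\le\mu^\kappa\Lambda\,m(J)$. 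Covering $A$ by the disjoint diamonds of $H$ together with the genuine diamonds $J(\hat p,\hat q)$ for $J\in\J''\setminus H$, and using disjointness together with $\bigcup_{J\in\J''}J\subseteq J_0$, I obtain
\begin{equation}
 \V^\kappa_\varepsilon(A)\le\sum_{J\in H}\rho_\kappa(J)+\mu^\kappa\Lambda\!\!\sum_{J\in\J''\setminus H}\!\!m(J)
 \le\Lambda\,m(J_0)+\mu^\kappa\Lambda\!\!\sum_{J\in\J''\setminus H}\!\!m(J).
\end{equation}
Since $\sum_{J\in\J''}m(J)=m\bigl(\bigcup_{J\in\J''}J\bigr)\le m(J_0)<\infty$, the tail vanishes as $H\uparrow\J''$, so $\V^\kappa_\varepsilon(A)\le\Lambda\,m(J_0)$ for every $\varepsilon>0$, whence $\V^\kappa(A)\le\Lambda\,m(J_0)<\infty$ and $\dims(A)\le\kappa$. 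Lemma~\ref{lem-dim-cup} then upgrades this to $\dims(M)\le\kappa$, and combined with $n=\dims(M)$ this gives the claim.

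The main obstacle is the middle step: promoting the reference-dependent ratio bound of Theorem~\ref{thm-dou-rat} to the single uniform density inequality $\rho_\kappa(J)\le\Lambda\,m(J)$ valid for an entire fine cover. This is what forces the careful choice of one reference diamond $J_0$ and the verification that every small covering diamond both meets $J_0$ and satisfies the position constraint~\eqref{eq-thm-dou-rat-ass}; restricting to $A=I(p_0,q_0)\subseteq J_0$ is precisely what makes the intersection hypothesis automatic. A secondary technical point is that the enlargements $\widehat J$ are unions rather than members of $\J$, so they must be replaced by the honest diamonds $J(\hat p,\hat q)$ before $\rho_\kappa$ can be evaluated, which is where Corollary~\ref{cor-con-enl} and the comparability $\tau(\hat p,\hat q)\le\mu\,\tau(p,q)$ from Lemma~\ref{lem-con-tau-loc} are needed.
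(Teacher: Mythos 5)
Your argument is correct in substance and reaches the theorem by a genuinely different covering mechanism than the paper, although the key analytic ingredient is the same: both proofs hinge on Theorem~\ref{thm-dou-rat}, applied with a fixed reference diamond $J_0$ and an arbitrary small diamond meeting it, to produce the uniform density bound $\omega_\kappa\,\tau(p,q)^\kappa\le\Lambda\,m(J(p,q))$ with $\kappa=\log_{1+2\lambda}(L)$. Where you diverge is in how that bound is converted into finiteness of $\V^\kappa$: the paper constructs by hand a maximally $T_\xi$-separated, pairwise disjoint family of diamonds of fixed time-extent meeting $J_0$, uses disjointness and $m(\overline W)<\infty$ to bound the cardinality $|I_\xi|\lesssim\xi^{-\kappa}$, and then sums $\rho_\kappa$ over the enlargements $J(\hat p_i,\hat q_i)$, which cover $J_0$. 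You instead invoke the Vitali-type machinery the paper develops but never actually uses in this proof --- Proposition~\ref{prop-VCL-CD} and Corollary~\ref{cor-dis-enl} --- to extract a disjoint subfamily whose enlargements cover the residual set, and you control the $\rho_\kappa$-sum directly via the density bound, disjointness, and the finiteness of the total mass. Your route buys a cleaner, more modular argument (no explicit packing construction and no cardinality count), at the cost of having to verify that the diameter-restricted subfamily of $\J'$ is still $m$-adequate (it is, by rerunning the proof of Proposition~\ref{prop-VCL-CD}) and of replacing the enlargements $\widehat J$ by the honest diamonds $J(\hat p,\hat q)$, which you handle correctly via Corollary~\ref{cor-con-enl} and Lemma~\ref{lem-con-tau-loc}. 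The points you gloss over --- that small covering diamonds meeting $J_0$ satisfy the position constraint \eqref{eq-thm-dou-rat-ass}, and that $\tau(\hat p,\hat q)\lesssim\tau(p,q)$ even though $\hat p,\hat q$ lie only in $W$ rather than $W'$ --- are treated at exactly the same level of rigor in the paper's own proof.

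One small slip: the inclusion $\bigcup_{J\in\J''}J\subseteq J_0$ that you use to bound $\sum_{J\in H}m(J)$ is not justified --- the disjoint diamonds meet $A\subseteq J_0$ but need not be contained in it. This is harmless: they are all contained in $\overline W$ by causal convexity of $W'$ in $W$, and condition~(iii) of the local doubling property gives $m(\overline W)<\infty$, so the bound $\V^\kappa_\varepsilon(A)\le\Lambda\,m(\overline W)<\infty$ goes through verbatim (this is in fact the same finiteness the paper invokes). With that substitution your proof is complete.
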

\begin{pr}
 Set $\kappa:=\log_{1+2\lambda}(L)$.  By Lemma \ref{lem-dim-cup} it suffices to show that $\V^\kappa(J)$ is uniformly bounded (independent of $J$) for all causal diamonds in $M$, as $M$ can be written as a countable union of causal diamonds. Moreover, it suffices to only consider sufficiently small causal diamonds $J=J(p_0,q_0)$ in a cylindrical neighborhood $(W',W)$ (with the same fixed constants $C>1$ and $\lambda=3C^2+2$). To be precise, let $\tilde W=(\tilde a, \tilde b)\times \tilde V\subseteq W'$ open, non-empty and let $p_0=(t_0,x_0),q_0=(s_0,x_0)\in\tilde W$ with $0<s_0-t_0 < 2 \frac{\tilde b-\tilde a}{1+2\lambda}$ and $J\subseteq \tilde W$.
 
 Note that for any $\bar p=(\bar t,\bar x),\bar q=(\bar s,\bar x)\in W$ we have $\diam(J(\bar p,\bar q, W))\leq 2\sqrt{1+C^2}\,(\bar s - \bar t)$ by a similar calculation to the one in the proof of Lemma \ref{lem-con-tau-loc}.

  For $0<\xi<{4\sqrt{1+C}\,(1+2\lambda)}\min (t_0-\tilde a,\tilde b - s_0)$, set $T_\xi:=\frac{\xi}{2 \sqrt{1+C}\,(1+2\lambda)}$. Let $J_i:=J(p_i, q_i)$ ($i\in I_\xi$) be \emph{maximally $T_\xi$-separated}, i.e., the family satisfies
 \begin{enumerate}
  \item $p_i=(t_i,x_i)$, $q_i=(s_i,x_i)\in \tilde W$,
  \item $s_i-t_i = T_\xi$,
  \item for all $i,j\in I_\xi$, $i\neq j$ one has $p_i\not\leq q_j$ or $|s_j-t_i|>2 T_\xi$ or $p_j\not\leq q_i$ or $|s_i-t_j|>2 T_\xi$, and finally
  \item $J_i\cap J\neq\emptyset$ for all $i\in I_\xi$.
 \end{enumerate}
This implies that $\diam(J_i)\leq\xi$ and $\diam(J(\hat p_i,\hat q_i))\leq \xi$ for all $i\in I_\xi$. By Lemma \ref{lem-con-tau-loc} we deduce that $\tau(p_i,q_i)\geq C_1 \xi$ and $\tau(\hat p_i,\hat q_i)\leq C_2\xi$. Furthermore, the sets $(J_i)_{i\in I_\xi}$ are disjoint and one has $J\subseteq \bigcup_{i\in I_\xi}J(\hat p_i, \hat q_i)$. The latter follows from a similar calculation to the proof of Lemma \ref{lem-adv-cyl-nhd}.

 Finally, we choose $t_i\in \bigl(\tilde a-\frac{T_\xi}{2}+\frac{s_0-t_0}{4}(1+2\lambda),\tilde b - \frac{T_\xi}{2} -\frac{s_0-t_0}{4}(1+2\lambda)\bigr)$ ($i\in I_\xi$). Then by Theorem \ref{thm-dou-rat} there are constants $\kappa>0, K>0$ (only depending on $C,\delta, L$) such that for all $i\in I_\xi$ we have
\begin{equation}
 m(J_i)\geq \tilde K\, \tau(p_i,q_i)^\kappa\,,
\end{equation}
as $m(J)<\infty$. 

At this point we estimate, using the pairwise disjointness of the causal diamonds $(J_i)_{i\in I}$,
\begin{align}
 \infty > m(\overline{W}) \geq m(\bigcup_{i\in I_\xi} J_i) = \sum_{i\in I_\xi} m(J_i) \geq \tilde K \sum_{i\in I_\xi} \tau(p_i,q_i)^\kappa \geq \tilde K C_1^\kappa \xi^\kappa |I_\xi|\,.
\end{align}
This gives $|I_\xi|\leq C_3 \xi^{-\kappa}$, where we subsumed all the constants into $C_3$. Finally, we are in the position to estimate
\begin{align}
 \V^\kappa_\xi(J)\leq\sum_{i\in I_\xi} \rho_\kappa(J(\hat p_i,\hat q_i)) = \omega_\kappa \sum_{i\in I_\xi} \tau(\hat p_i,\hat q_i)^\kappa \leq \omega_\kappa |I_\xi| C_2^\kappa \xi^\kappa \leq \omega_\kappa C_3 C_2^\kappa<\infty\,,
\end{align}
where the constants on the left-hand-side do not depend on $\xi$ (and $J$). Letting $\xi\searrow 0$ yields that $\V^\kappa(J)<\omega_\kappa C_3 C_2^{\kappa}<\infty$, as claimed.
\end{pr}

Note that it always possible to choose the same constant $C$ (and $\delta$) for a family of cylindrical neighborhoods and 
so if we consider the volume measure $\vol^g$ on a continuous spacetime $(M,g$) one sees that letting the 
cylindrical neighborhoods shrink and $C\searrow 1$ yields $\log_{1+2\lambda}(L)\searrow n=\dim(M)$, as the local 
doubling constant $L$ converges monotone non-increasingly to $(1+2\lambda)^n$. In the following section we can relate 
the dimensional parameter $N$ of the \emph{timelike curvature-dimension}-condition ($\mathsf{TCD}$-condition) of Cavalletti 
and Mondino \cite{CM:20} to the geometric dimension.
\bigskip

Finally, let us remark that there seems to be no obvious way to define doubling in general \LpLSn s that has all the required properties. This will be a topic of further investigation, in particular in conjunction with the relation to synthetic timelike Ricci curvature bounds as discussed in the following Section \ref{sec-ric}.

\section{Timelike Ricci curvature bounds}\label{sec-ric}
Following \cite{McC:20, MS:21}, Cavalletti and Mondino introduced synthetic lower timelike Ricci curvature bounds in \cite{CM:20} in the form of a \emph{timelike-curvature dimension condition}. In this section we show that a continuous spacetime which satisfies such a condition with respect to its volume measure, then this dimensional parameter bounds the geometric dimension from above. First, we start by briefly reviewing the relevant notions and results of \cite{CM:20}. However, for details we refer to \cite{CM:20}, especially for the definition of the (weak) timelike curvature dimension condition $\mathsf{(w)TCD^e_p}(K,N)$ and timelike measure contradiction property $\mathsf{TMCP^e_p}(K,N)$, see Section 3 of \cite{CM:20}. A measured \LpLS \Xllm is a \LpLS equipped with a Radon measure $m$ with $\supp(m)=X$. Moreover, Cavalletti-Mondino additionally require that $(X,d)$ is proper (i.e., all closed and bounded subsets are compact), making $(X,d)$ a \emph{Polish} metric space. Then synthetic timelike Ricci curvature bounds are introduced via convexity properties of entropies along curves of (causal) probability measures, cf. \cite[Def.\ 3.2, Def.\ 3.7]{CM:20}. Cavalletti-Mondino derive a timelike Brunn-Minkowski inequality for weak $\mathsf{TCD_p^e}(K,N)$ spaces \cite[Prop.\ 3.4]{CM:20} and this is the basis for the timelike Bishop-Gromov inequality \cite[Prop.\ 3.5]{CM:20}, which we will describe in more detail below.
\medskip

For $K\in\R$ define $\mathfrak{s}_{K}\colon\R\rightarrow\R$ as follows
\begin{align}
 \mathfrak{s}_{K}(t):= \begin{cases}
                        \frac{1}{\sqrt{K}} \sin(\sqrt{K} t)\quad\qquad &(K>0)\,,\\
                        t &(K=0)\,,\\
                        \frac{1}{\sqrt{-K}}\sinh(\sqrt{-K} t) &(K<0)\,.
                       \end{cases}
\end{align}

Let \Xll be a \LpLS and let $x_0\in X$, $r>0$. The \emph{sub-level set of $\tau(x_0,.)$} of radius $r$ with base point $x_0$ (or \emph{$\tau$-ball} in the terminology of \cite{CM:20}) is defined as
%\marginpar{RM: Is $B^\tau_r(x_0)$ Cavalletti and Mondino's notation?  If not, I might rather write $B^+_r(x_0)$ instead}
\begin{equation}
 B^\tau_r(x_0):=\{x\in I^+(x_0)\cup \{x_0\}: \tau(x_0,x)<r\}\,.
\end{equation}
A subset $E\subseteq I^+(x_0)\cup\{x_0\}$ is called \emph{$\tau$-star-shaped} with respect to $x_0$ if every geodesic (i.e., maximal causal curve) from $x_0$ to $x$ is contained in $E$ (except possibly at $x_0$) for all $x\in E$. Moreover we set $E_r:=\overline{B^\tau_r(x_0)}\cap E$ for $r>0$.

Then, the timelike Bishop-Gromov inequality \cite[Prop.\ 3.5]{CM:20} gives for a locally causally closed, globally hyperbolic measured \LpLS \Xllm that
\begin{equation}\label{eq-tl-bg-ine}
 \frac{m(E_r)}{m(E_R)} \geq \frac{\int_0^r \mathfrak{s}_{K/N}^N}{\int_0^R \mathfrak{s}_{K/N}^N}\,,
\end{equation}
for all $x_0\in X$, for all compact subsets $E\subseteq I^+(x_0)\cup\{x_0\}$ that are $\tau$-star-shaped with respect to $x_0$ and for all $0<r< R\leq \pi\, \sqrt{\frac{N}{\max(K,0)}}$.

Finally, to have a finite upper bound on the maximal time separation in case $K<0$, we set $R_*=\infty$, if $K\geq 0$, and for $K<0$, let $R_*\in (0,\infty)$ arbitrary but fixed for this section.

\bigskip

The following result can be understood as a kind of a (local) doubling property of the reference measure analogous to the metric case, cf.\ \cite[Cor.\ 2.4]{Stu:06b} or \cite[Cor.\ 30.14]{Vil:09}. However, it does not seem to imply a doubling property for causal diamonds --- even when restricted to continuous spacetimes.

\begin{lem}[$\mathsf{wTCD}$ implies doubling of the sub-level sets of $\tau$]\label{L:Bishop-Gromov}
 Let \Xllm be a globally hyperbolic, locally causally closed measured \LLS satisfying $\mathsf{wTCD^e_p}(K,N)$ for some $K\in\R$, $N\in[1,\infty)$ and $\mathsf{p}\in(0,1)$. Then  for all $x_0\in X$, $E\subseteq I^+(x_0)\cup\{x_0\}$ compact and $\tau$-star-shaped with respect to $x_0$ and $0<r<\min(\pi\sqrt{\frac{N}{\max(0,K)}},R_*)$ one has
 \begin{equation}\label{eq-cm-dou-con}
  m(E_{2r})\leq L\, m(E_r)\,,
 \end{equation}\label{eq-tcd-dou-con}
 where
 \begin{equation}
  L=2^{N+1}\max(1,\sgn(-K)\cosh\Bigl(\sqrt{\frac{|K|}{N}}R_*\Bigr)^N)\,.
 \end{equation}
\end{lem}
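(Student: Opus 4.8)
The plan is to read off the doubling inequality directly from the timelike Bishop--Gromov inequality \eqref{eq-tl-bg-ine}, thereby reducing the statement to an elementary estimate on the one-dimensional model integrals $\int_0^R \mathfrak{s}_{K/N}^N$. Whenever $2r\le\pi\sqrt{N/\max(K,0)}$ the choice of outer radius $R=2r$ is admissible in \eqref{eq-tl-bg-ine}, and rearranging gives
\begin{equation}
 m(E_{2r})\le \frac{\int_0^{2r}\mathfrak{s}_{K/N}^N}{\int_0^{r}\mathfrak{s}_{K/N}^N}\,m(E_r)\,,
\end{equation}
so it suffices to bound the model ratio on the right-hand side by the constant $L$.

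To estimate that ratio I would substitute $t=2s$ in the numerator, rewriting it as $2\int_0^r\mathfrak{s}_{K/N}(2s)^N\,ds$, whence
\begin{equation}
 \frac{\int_0^{2r}\mathfrak{s}_{K/N}^N}{\int_0^{r}\mathfrak{s}_{K/N}^N}\le 2\,\sup_{0<s<r}\Bigl(\frac{\mathfrak{s}_{K/N}(2s)}{\mathfrak{s}_{K/N}(s)}\Bigr)^{\!N}\,,
\end{equation}
which is legitimate since $\mathfrak{s}_{K/N}>0$ on the interval in question. The double-angle identities evaluate the inner ratio explicitly: it equals $2\cos(\sqrt{K/N}\,s)\le 2$ for $K>0$, equals $2$ for $K=0$, and equals $2\cosh(\sqrt{|K|/N}\,s)\le 2\cosh(\sqrt{|K|/N}\,R_*)$ for $K<0$ (using $s<r<R_*$). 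Feeding these bounds in produces $2\cdot 2^N=2^{N+1}$ when $K\ge 0$ and $2\,(2\cosh(\sqrt{|K|/N}R_*))^N=2^{N+1}\cosh(\sqrt{|K|/N}R_*)^N$ when $K<0$, which is precisely $L$ (the factor $\sgn(-K)$ selects the constant $1$ exactly when $K\ge 0$, and $\cosh\ge 1$ makes the maximum equal the second alternative when $K<0$).

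The main obstacle is the remaining range for $K>0$, namely $\tfrac{\pi}{2}\sqrt{N/K}<r<\pi\sqrt{N/K}$, where $R=2r$ overshoots $\pi\sqrt{N/K}$ so that $\int_0^{2r}\mathfrak{s}_{K/N}^N$ is no longer defined and \eqref{eq-tl-bg-ine} cannot be invoked at $R=2r$. Here I would use the Lorentzian Bonnet--Myers bound for $\mathsf{wTCD^e_p}(K,N)$ spaces with $K>0$ (implicit in the admissible range of \eqref{eq-tl-bg-ine}), which forces the time separation from $x_0$ to be at most $\pi\sqrt{N/K}$ and hence $E_{2r}=E=E_{\pi\sqrt{N/K}}$. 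Applying \eqref{eq-tl-bg-ine} with $R\nearrow\pi\sqrt{N/K}$ then yields
\begin{equation}
 m(E_{2r})=m(E_{\pi\sqrt{N/K}})\le \frac{\int_0^{\pi\sqrt{N/K}}\mathfrak{s}_{K/N}^N}{\int_0^{r}\mathfrak{s}_{K/N}^N}\,m(E_r)\,,
\end{equation}
and since $\mathfrak{s}_{K/N}$ is symmetric about its maximum at $\tfrac{\pi}{2}\sqrt{N/K}$ we have $\int_0^{\pi\sqrt{N/K}}\mathfrak{s}_{K/N}^N=2\int_0^{\frac{\pi}{2}\sqrt{N/K}}\mathfrak{s}_{K/N}^N\le 2\int_0^{r}\mathfrak{s}_{K/N}^N$ (as $r>\tfrac{\pi}{2}\sqrt{N/K}$); thus the ratio is at most $2\le 2^{N+1}=L$, closing this case. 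For $K\le 0$ no such difficulty arises, since the model functions are defined and strictly positive on all of $(0,\infty)$ and the $R_*$-cutoff only enters the $\cosh$ bound, so the two displays of the first two paragraphs already finish the proof.
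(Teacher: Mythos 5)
Your proposal is correct and follows essentially the same route as the paper: apply the timelike Bishop--Gromov inequality \eqref{eq-tl-bg-ine} with outer radius $2r$, substitute $t=2s$, and bound the resulting ratio via the double-angle identities $\sin(2u)=2\sin u\cos u$ and $\sinh(2u)=2\sinh u\cosh u$. Your separate treatment of the range $\tfrac{\pi}{2}\sqrt{N/K}<r<\pi\sqrt{N/K}$ for $K>0$ (where $R=2r$ is no longer admissible in \eqref{eq-tl-bg-ine}) is a point the paper's proof passes over silently, so that extra care is welcome rather than a deviation.
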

\begin{pr}
 By the timelike Bishop-Gromov inequality \eqref{eq-tl-bg-ine} one has 
 \begin{align}\label{eq-tcd-dou}
  \frac{m(E_{2r})}{m(E_{r})}\leq \frac{\int_0^{2r} \mathfrak{s}_{K/N}^N}{\int_0^{r} \mathfrak{s}_{K/N}^N} = \frac{2 \int_0^r \mathfrak{s}_{K/N}(2t)^N\,\mathrm{d}t}{\int_0^{r} \mathfrak{s}_{K/N}^N} \,.
 \end{align}
Double angle formulas show that for $K\geq 0$ the right-hand-side of Equation \eqref{eq-tcd-dou} is bounded above by $2^{N+1}$, while for $K<0$ it is bounded above by $2^{N+1} \cosh(\sqrt{\frac{-K}{N}}R_*)^N$.
\end{pr}

In the remainder of this section we establish a consistency result for continuous spacetimes that satisfy a $\mathsf{wTCD^e_p}(K,N)$ curvature bound with respect to the volume measure and a sufficient criterion for general reference measures.
\bigskip

The following theorem is a generalization of \cite[Cor.\ A.2,(2)]{CM:20} to non-smooth spacetimes.
\begin{thm}[Synthetic vs geometric dimension]\label{T:synthetic vs geometric dimension}
 Let $(M,g)$ be a continuous, globally hyperbolic and causally plain spacetime whose induced measured \LLS $(M,d^h,\vol^g,\ll,\leq,\tau)$ satisfies the $\mathsf{wTCD^e_p}(K,N)$ condition for some $K\in\R$, $N\in[1,\infty)$ and ${\mathsf p}\in(0,1)$.
 Then
 \begin{align}\label{synthetic vs geometric dimension}
  n=\dims(M)\leq N+1\,.
 \end{align}
\end{thm}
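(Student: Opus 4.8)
The plan is to first dispose of the equality $n=\dims(M)$, which is exactly Proposition \ref{prop-con-st-d-uni-dim}, so that only the inequality $n\le N+1$ remains. For that I would extract the dimension from the timelike Bishop--Gromov inequality \eqref{eq-tl-bg-ine}, which is available since $\mathsf{wTCD^e_p}(K,N)$ holds for $(M,d^h,\vol^g,\ll,\leq,\tau)$ and $(M,g)$ is globally hyperbolic. The idea is to apply \eqref{eq-tl-bg-ine} to a single, carefully chosen compact $\tau$-star-shaped set $E$ based at a point $x_0$, and to read off the scaling exponent of $\vol^g(E_r)$ as $r\searrow 0$. Since $\mathfrak{s}_{K/N}(u)=u+O(u^3)$ near $0$, one has $\int_0^r \mathfrak{s}_{K/N}(u)^N\,du\sim r^{N+1}/(N+1)$, so for a fixed small $R$ (within the range allowed in \eqref{eq-tl-bg-ine}) with $m_R:=\vol^g(E_R)\in(0,\infty)$ the inequality \eqref{eq-tl-bg-ine} yields the lower bound $\vol^g(E_r)\ge c_1\,r^{N+1}(1+o(1))$ with $c_1>0$. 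If one can simultaneously produce the upper bound $\vol^g(E_r)\le C_2\,r^{n}$, then dividing by $r^{N+1}$ and letting $r\searrow 0$ forces $n-N-1\le 0$, i.e.\ $n\le N+1$.

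To build $E$ I would work inside a cylindrical neighborhood $(W',W)$ of $x_0$ as in Lemma \ref{lem-adv-cyl-nhd}, where $\eta_{C^{-1}}\prec g\prec\eta_C$. The correct model set is a \emph{solid timelike cone of bounded rapidity}: let $E_0$ be the truncated solid $\eta_{C^{-1}}$-cone with apex $x_0$, angular aperture bounded strictly away from the $\eta_{C^{-1}}$-null directions, and height $\sim R$. Every point of $E_0$ is then genuinely $g$-timelike related to $x_0$ with $g$-rapidity bounded by some $\beta_1=\beta_1(C,E_0)<\infty$, while $E_0$ has nonempty $g$-timelike interior, so $\vol^g(E_0)>0$. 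Crucially, a straight narrow cone need \emph{not} be $\tau$-star-shaped once the metric is merely continuous, since maximal geodesics may bulge; I would therefore take $E$ to be the $\tau$-star-shaped hull of $E_0$, namely the union of the images of all maximal $g$-causal curves from $x_0$ to points of a compact thickening of $E_0$. By construction $E$ is $\tau$-star-shaped with respect to $x_0$, compact, and still of positive volume.

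The two estimates then read as follows. The lower bound is the Bishop--Gromov computation above. For the upper bound $\vol^g(E_r)\le C_2\,r^n$ the point is that $E$ has uniformly bounded $g$-rapidity: any $y\in E_r$ lies on a maximal geodesic from $x_0$ with $\tau(x_0,y)\le r$, and bounded rapidity confines $y$ to a Euclidean ball of radius $c\,r$ about $x_0$ (via Lemma \ref{lem-con-tau-loc}, \eqref{eq-con-tau-loc} and the sandwiching $\eta_{C^{-1}}\prec g\prec\eta_C$), whence $\vol^g(E_r)\le K\,\mathcal{H}^n(B^{\mathrm e}_{cr}(x_0))\le C_2\,r^n$, using $\vol^g=|\det g|\,d\mathcal{H}^n$ with $|\det g|$ bounded on $\overline W$ (equivalently $\vol^g=\V^n$ by Theorem \ref{thm-mea-vol-equ}). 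Note that, by contrast, a full causal diamond based at $x_0$ would have $\tau$-sublevel sets scaling only like $r^2$, which is why the restriction to bounded rapidity is essential.

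The main obstacle is precisely this bounded-rapidity claim for the hull $E$: for a continuous metric, maximal causal geodesics are neither unique nor a priori well controlled, so it is not evident that the maximal geodesics from $x_0$ to the narrow cone $E_0$ remain in a region of bounded rapidity rather than drifting toward the null boundary. This is where I would invoke the approximation result of Appendix \ref{app}: approximating $g$ from inside and outside by smooth globally hyperbolic metrics $\check g_\varepsilon\prec g\prec\hat g_\varepsilon$ with $\check g_\varepsilon,\hat g_\varepsilon\to g$, one transfers the cone construction to the smooth setting, where maximal geodesics are genuine timelike geodesics and automatically of bounded rapidity, and controls $\vol^g(E_r)$ by the corresponding smooth volumes, which sandwich it and converge as $\varepsilon\to 0$. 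Carrying the two volume bounds through this limit yields $c_1\,r^{N+1}\lesssim\vol^g(E_r)\lesssim C_2\,r^n$ for small $r$, hence $n\le N+1$; combined with $\dims(M)=n$ from Proposition \ref{prop-con-st-d-uni-dim}, this gives the claimed bound $n=\dims(M)\le N+1$.
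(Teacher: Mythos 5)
Your overall strategy coincides with the paper's: the equality $n=\dims(M)$ is quoted from Proposition \ref{prop-con-st-d-uni-dim}, the lower bound $\vol^g(E_r)\gtrsim r^{N+1}$ is read off from the monotonicity of the Bishop--Gromov ratio \eqref{eq-tl-bg-ine} together with $\mathfrak{s}_{K/N}(t)=O(t)$, and everything is reduced to the complementary upper bound $\vol^g(E_r)\le c\,r^n$, obtained from the smooth case via the approximation of Appendix \ref{app}. Where you diverge is in how that upper bound is secured. The paper does not build a narrow cone: it bounds $\vol^g(E_r)$ by $\vol^g(B^g_r(p,U))$ for a causally convex $U$, invokes the smooth estimate $\vol^{g_k}(B_r^{g_k}(p,U))\le c\,r^n$ of \cite[Eq.\ (A.6)]{CM:20} for the approximating metrics $g_k\succ g$ of Lemma \ref{lem-cg-reg}, and transfers it to $g$ using $\tau\le\tau_k$ and locally uniform convergence of the volume densities. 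Your observation that the \emph{full} $\tau$-sublevel set of a fixed neighborhood scales like $r^2$ rather than $r^n$ (the angular measure on the unit hyperboloid is infinite) is a genuine and relevant subtlety; the $r^n$ estimate holds only on sets of bounded rapidity, which is what compactness and $\tau$-star-shapedness of $E$ inside $I^+(x_0)\cup\{x_0\}$ are meant to enforce and what your narrow-cone construction makes explicit.

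However, the step you yourself single out as the main obstacle --- bounded rapidity of the $\tau$-star-shaped hull $E$ of $E_0$ --- is not closed by the argument you sketch, and this is a genuine gap. First, the set $E$ fed into \eqref{eq-tl-bg-ine} must be $\tau$-star-shaped for $g$ itself, hence built from $g$-maximal curves; a $g$-maximal curve is $\hat g_\eps$-causal but in general not $\hat g_\eps$-maximal, so there is no containment between the $g$-hull and the hulls taken with respect to smooth comparison metrics, and hence no sandwich of $\vol^g(E_r)$ between the corresponding smooth volumes. Second, the approximation the paper actually supplies (Lemma \ref{lem-cg-reg}, Proposition \ref{prop-tau-uni-con}) is one-sided: only outer approximants $g\prec g_{k+1}\prec g_k$ with $\tau_k\searrow\tau$ are available, whereas your sandwich also needs inner approximants $\check g_\eps\prec g$. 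Third, even granting $\tau\le\tau_k$, the induced inclusion of sublevel sets is $\{\tau_k<r\}\subseteq\{\tau<r\}$, which is the unhelpful direction for bounding $\vol^g(E\cap\{\tau<r\})$ from above by a smooth-metric quantity; an extra ingredient (e.g.\ the uniform convergence $\tau_k\to\tau$ on $\overline{U}$ plus continuity in $r$) is needed. To complete your route you would have to prove directly that every $g$-maximal curve from $x_0$ into the narrow $\eta_{C^{-1}}$-cone stays in a region where $\tau(x_0,\cdot)\gtrsim d^e(x_0,\cdot)$, or else follow the paper in applying the $r^n$ bound only after passing to the smooth metrics $g_k$, for which normal coordinates make the estimate elementary.
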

\begin{pr}
Let $p\in M$, $E\subseteq I^+(p)\cup\{p\}$ compact and $\tau$-star-shaped with respect to $p$. By the timelike Bishop-Gromov inequality \eqref{eq-tl-bg-ine} the map $r\mapsto \frac{\vol^g(E_r)}{\int_0^{r} \mathfrak{s}_{K/N}^N}$ is monotonically decreasing. As $\mathfrak{s}_{K}(t)=O(t)$ for small $t$ it suffices to show that $\vol^g(E_r)\leq c\, r^n$ for some constant $c>0$ and small $r>0$. In fact, it suffices to show that $\vol^g(B^g_r(p,U))\leq c\, r^n$, where $U$ is a causally convex neighborhood of $p$ and $B^g_r(p,U):=\{y\in U: \tau(p,y)<r\}$.

Let $(g_k)_k$ be a sequence of smooth metrics given by Lemma \ref{lem-cg-reg}. Moreover, denote by $\tau_k$  the time separation function with respect to $g_k$ ($k\in\N)$, then, in particular, by Proposition \ref{prop-tau-uni-con} we have that $\tau_k\searrow \tau$ locally uniformly and $g_k\to g$ locally uniformly as well. At this point let $U$ be a $g_0$-causally convex, relatively compact chart around $p$. Then $U$ is causally convex for $g$, $g_k$ ($k\in\N$) as $g\prec g_k\prec g_0$ for all $k\in\N$. There is a constant $c>0$ only depending on $U$ such that $\vol^{g_k}(B_r^{{g}_k}(p,U))\leq c\, r^n$, cf.\ \cite[Eq.\ (A.6)]{CM:20}.

Let $\eps>0$ and let $k\in \N$ such that $\sup_{\bar U} |\sqrt{\det{|g|}}-\sqrt{\det{|g_k|}}|< \frac{\eps}{\mathcal{H}^n(\bar U)}$. Consequently, as $\tau\leq \tau_k$, we estimate that
\begin{align}
 \vol^g(B^g_r(p,U)) = \int_U \sqrt{|\det{g}|}\, \mathbbm{1}_{B_r^g(p,U)} \leq \eps + \vol^g(B_r^{g_k}(p,U)) \leq \eps + c\, r^n\,,
\end{align}
where $\mathbbm{1}_A$ is the indicator function of a set $A$. As this holds for all $\eps>0$ we conclude $\vol^g(B^g_r(p,U))\leq c\, r^n$, as claimed.
\end{pr}

Using the same argument we can improve the bound to $n\leq N$ if we assume slightly stronger properties of $(M,g)$. In fact, the refined timelike Bishop-Gromov inequality \cite[Cor.\ 5.14]{CM:20} directly gives:
\begin{cor}[Synthetic dominates geometric dimension]\label{cor-n-leq-N}
  Let $(M,g)$ be a continuous, globally hyperbolic, timelike non-branching and causally plain spacetime of dimension $n\geq 2$ whose induced measured \LLS $(M,d^h,\vol^g,\ll,\leq,\tau)$ satisfies the $\mathsf{TMCP^e_p}(K,N)$ condition for some $K\in\R$, $N\in[1,\infty)$ and $\mathsf{p}\in(0,1)$. Moreover, assume that the causally-reversed structure satisfies the same conditions. Then 
  \begin{equation}\label{synthetic dominates geometric dimension}
  n=\dims(M)\leq N\,.
 \end{equation}
\end{cor}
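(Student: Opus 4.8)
The plan is to transcribe the proof of Theorem~\ref{T:synthetic vs geometric dimension} almost verbatim, replacing the coarse timelike Bishop--Gromov inequality \eqref{eq-tl-bg-ine} by the refined one of \cite[Cor.~5.14]{CM:20}. The additional hypotheses --- timelike non-branching, the stronger $\mathsf{TMCP^e_p}(K,N)$ property in place of $\mathsf{wTCD^e_p}(K,N)$, and the requirement that the causally-reversed structure satisfy the same conditions --- are exactly what is needed to invoke \cite[Cor.~5.14]{CM:20}, and they gain one power of $r$: the sharp comparison density there scales like $t^{N-1}$ near the origin (so the comparison integral behaves like $r^{N}$), rather than the $t^{N}$ density (integral $\sim r^{N+1}$) underlying \eqref{eq-tl-bg-ine}.

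Concretely, I would fix $p\in M$ and a compact, $\tau$-star-shaped set $E\subseteq I^+(p)\cup\{p\}$ with $\vol^g(E)>0$, and apply \cite[Cor.~5.14]{CM:20} to deduce that the ratio of $\vol^g(E_r)$ to the refined comparison integral is monotonically nonincreasing in $r$. Since this ratio is strictly positive at some fixed radius $R$, monotonicity bounds it below by a positive constant for every $r<R$; as the comparison integral is $\sim c\,r^{N}$ as $r\searrow 0$, this furnishes the lower bound $\vol^g(E_r)\geq c_1\,r^{N}$ for small $r$. The matching upper bound $\vol^g(B^g_r(p,U))\leq c\,r^{n}$ is obtained exactly as in the proof of Theorem~\ref{T:synthetic vs geometric dimension}: pick a $g_0$-causally convex, relatively compact chart $U\ni p$, use the smooth regularizations $g_k\searrow g$ with $\tau_k\searrow\tau$ locally uniformly from Lemma~\ref{lem-cg-reg} and Proposition~\ref{prop-tau-uni-con}, apply the smooth estimate \cite[Eq.~(A.6)]{CM:20}, and pass to the limit using the uniform convergence of $\sqrt{|\det g_k|}$ together with the inclusion $\{\tau<r\}\subseteq\{\tau_k<r\}$. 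Restricting $E$ to such a chart and combining, $c_1\,r^{N}\leq\vol^g(E_r)\leq c\,r^{n}$ for all small $r>0$ forces $n\leq N$; since $\dims(M)=n$ by Proposition~\ref{prop-con-st-d-uni-dim}, the claim $n=\dims(M)\leq N$ follows.

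The only genuinely new point --- and the main obstacle --- is justifying the one-power improvement: I must check that \cite[Cor.~5.14]{CM:20} is applicable to the merely continuous metric $g$ (and not only to a smooth one), and that the timelike non-branching assumption together with the two-sided $\mathsf{TMCP^e_p}(K,N)$ hypothesis really do produce the sharp comparison integrand with the lower $t^{N-1}$ scaling. Once this is in hand, the monotonicity-and-positivity extraction of the lower bound and the smooth-approximation proof of the $r^{n}$ upper bound are routine repetitions of the corresponding steps in Theorem~\ref{T:synthetic vs geometric dimension}.
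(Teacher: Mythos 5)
Your proposal matches the paper's approach exactly: the paper offers no separate proof of this corollary, stating only that ``the same argument'' as in Theorem~\ref{T:synthetic vs geometric dimension}, with the refined timelike Bishop--Gromov inequality of \cite[Cor.~5.14]{CM:20} (exponent $N-1$ in place of $N$, hence comparison integral $\sim r^{N}$ rather than $r^{N+1}$) substituted for \eqref{eq-tl-bg-ine}, directly yields $n\le N$. Your identification of the one-power improvement, the monotonicity-plus-positivity lower bound, and the unchanged $r^{n}$ upper bound via smooth approximation is precisely the intended argument, and the extra hypotheses you flag are indeed there to make \cite[Cor.~5.14]{CM:20} applicable.
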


By analogy with the case of metric measure geometry with positive signature \cite[Eq.~(2.4)--(2.5)]{Stu:06a} \cite{Stu:06b} 
\cite{EKS:15} \cite{CM:21},  it is natural to expect the refined Bishop-Gromov inequality (with exponent $N-1$ replacing 
$N$) to also hold under the 
$\mathsf{wTCD^e_p}(K,N)$ condition of Lemma \ref{L:Bishop-Gromov}, improving the conclusion of Theorem \ref{T:synthetic vs 
geometric dimension}  from \eqref{synthetic vs geometric dimension} to \eqref{synthetic dominates geometric dimension} 
provided the anticipated equivalence of various $\mathsf{TCD}$-notions holds true, as is known in the $\mathsf{CD}$ case.
\bigskip

\section*{Acknowledgment}
We would like to thank Christian Ketterer for helpful and stimulating discussions.

\appendix
\section{Appendix}\label{app}

Here in this appendix we establish a refined version of the \emph{Chru{\'s}ciel - Grant approximation} for continuous metrics (cf.\ \cite[Prop.\ 1.2]{CG:12}).

\begin{lem}[Smooth approximation of continuous metrics]
\label{lem-cg-reg}
 Let $(M,g)$ be a continuous spacetime. Then there is a sequence of smooth metrics $(g_k)_k$ such that
 \begin{enumerate}
  \item $g_k\to g$ locally uniformly,
  \item $g\prec g_{k+1}\prec g_k$ for all $k\in\N$, (i.e.,~$g_{k}$ has strictly wider light cones that $g_{k+1}$)
  \item $-g(X,X)< -g_k(X,X)$ for all $g$-causal $X\in TM$, and
  \item $-g_{k+1}(X,X)\leq -g_k(X,X)$ for all $g_{k+1}$-causal $X\in TM$.
 \end{enumerate}
 Note that (iii) and (iv) imply that for all $X\in TM$ $g$-causal we have that $-g(X,X)\leq -g_{k+1}(X,X)\leq -g_k(X,X)$ for all $k\in\N$.
\end{lem}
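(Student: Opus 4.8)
The plan is to obtain the $g_k$ by first smoothing $g$ and then \emph{widening} its light cones by subtracting a suitable decreasing multiple of the fixed background Riemannian metric $h$. Concretely, I would first produce, by a Chru\'sciel--Grant-type mollification carried out in charts and glued with a partition of unity, a sequence of smooth symmetric $(0,2)$-tensor fields $\tilde g_k\to g$ locally uniformly, choosing the mollification scales so small (pointwise) that the increments $|\tilde g_{k+1}-\tilde g_k|_h$ are locally uniformly summable. I would then set $g_k:=\tilde g_k-\psi_k\,h$ for a sequence of smooth positive functions $\psi_k\searrow 0$ to be chosen, the point being that this reduces all four assertions to two positive-definiteness statements about the continuous symmetric $2$-tensors $g-g_k$ and $g_{k+1}-g_k$. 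Here $|T|_h$ denotes the largest eigenvalue in absolute value of a symmetric $2$-tensor $T$ relative to $h$, so that $|T(X,X)|\le |T|_h\,h(X,X)$.

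The key reduction is elementary. Suppose $\psi_k$ is arranged so that both $g-g_k$ and $g_{k+1}-g_k$ are positive definite at every point. Positive definiteness of $g-g_k$ gives $g_k(X,X)<g(X,X)$ for every $X\neq 0$; restricting to $g$-causal $X$ yields (iii), and it also shows $g\prec g_{k+1}$, since $g(X,X)\le 0$ forces $g_{k+1}(X,X)<g(X,X)\le 0$. Positive definiteness of $g_{k+1}-g_k$ gives $g_k(X,X)<g_{k+1}(X,X)$ for every $X\neq 0$; this yields (iv) directly and shows $g_{k+1}\prec g_k$, since $g_{k+1}(X,X)\le 0$ forces $g_k(X,X)<0$, completing the cone chain (ii). Property (i) is immediate because $\tilde g_k\to g$ and $\psi_k\to 0$ locally uniformly, while the concluding Note follows by combining (iii) and (iv). Thus the lemma is reduced to the existence of the functions $\psi_k$.

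To construct them I would choose smooth positive functions $\eps_k$ with $\eps_k>|\tilde g_{k+1}-\tilde g_k|_h$ pointwise and with $\sum_k\eps_k$ locally uniformly convergent, and set $\psi_k:=\sum_{j\ge k}\eps_j$. Then $\psi_k$ is smooth and positive, $\psi_k\searrow 0$ locally uniformly, and $\psi_k-\psi_{k+1}=\eps_k>|\tilde g_{k+1}-\tilde g_k|_h$; the latter forces $(g_{k+1}-g_k)(X,X)=(\tilde g_{k+1}-\tilde g_k)(X,X)+(\psi_k-\psi_{k+1})\,h(X,X)>0$ for $X\neq 0$. Telescoping the series also gives $\psi_k>\sum_{j\ge k}|\tilde g_{j+1}-\tilde g_j|_h\ge |g-\tilde g_k|_h$, whence $(g-g_k)(X,X)=(g-\tilde g_k)(X,X)+\psi_k\,h(X,X)>0$ for $X\neq 0$. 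These are exactly the two positive-definiteness statements required above.

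The main obstacle is the non-compactness of $M$: every domination above must hold \emph{globally and pointwise}, not merely on a fixed compact set, which is precisely why I use position-dependent smooth functions $\psi_k$ and a position-dependent mollification scale rather than constants. The same issue governs checking that each $g_k$ is genuinely Lorentzian. Since the smallest $h$-eigenvalue gap of $g$ (between its negative and its positive eigenvalues) is a strictly positive continuous function, I would shrink the $\eps_k$ (equivalently the mollification scales) so that both $|\tilde g_k-g|_h$ and the tails $\psi_k$ stay below this gap, guaranteeing that subtracting $\psi_k\,h$ preserves the signature $(-,+,\dots,+)$. Discarding finitely many initial indices and reindexing then yields a Lorentzian metric for every $k\in\N$, and the remaining checks are the routine linear-algebra facts recorded in the reduction step.
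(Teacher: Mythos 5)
Your proof is essentially correct, but it takes a genuinely different route from the paper. The paper treats the Chru\'sciel--Grant approximation as a black box: it starts from a net $(g_\eps)_\eps$ already known to satisfy (i)--(iii) (citing \cite[Prop.\ 1.2]{CG:12} and \cite[Prop.\ 2.3(iii)]{KSV:15}) and then devotes the entire proof to extracting property (iv), via two successive compactness arguments on the unit sphere bundle over a compact set (producing gaps $\delta$ and $\delta'$ that control $g_\eps$ against a fixed $g_{\eps_0}$), followed by the globalization lemma \cite[Lemma 2.4]{KSSV:14} and an iteration. You instead rebuild the approximation from scratch and make the monotonicity automatic: by widening the cones of the mollified tensors $\tilde g_k$ with the telescoping tails $\psi_k=\sum_{j\ge k}\eps_j$, you arrange that $g_{k+1}-g_k$ and $g-g_k$ are positive definite, from which (ii)--(iv) all follow by the elementary reduction you describe. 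Your argument is more self-contained, avoids the two-stage compactness extraction entirely, and in fact proves a strictly stronger form of (iv), namely $-g_{k+1}(X,X)<-g_k(X,X)$ for \emph{all} nonzero $X$; the paper's argument is shorter to write given the cited literature, since it only has to upgrade an existing net.

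One step needs repair as written: you assert that $\psi_k=\sum_{j\ge k}\eps_j$ is smooth because the series converges locally uniformly, but locally uniform convergence of smooth functions only guarantees a continuous limit, and $g_k=\tilde g_k-\psi_k\,h$ must be a smooth metric. The fix is immediate and does not disturb the architecture: prescribe the $\eps_j$ in closed form first, e.g.\ $\eps_j:=2^{-j}\eta$ for a single fixed smooth positive function $\eta$, so that $\psi_k=2^{1-k}\eta$ is manifestly smooth, and only then choose the mollification scales so small that $|\tilde g_j-g|_h<2^{-j-2}\eta$ pointwise, which forces $|\tilde g_{j+1}-\tilde g_j|_h<\eps_j$ as you need. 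With that adjustment, and the eigenvalue-perturbation check you already sketch to preserve the Lorentzian signature, the argument goes through.
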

\begin{pr}
 We know that there is a net $(g_\eps)_\eps$ of smooth metrics that satisfies (analogous) points (i)-(ii), see e.g.\ \cite[Prop.\ 1.2]{CG:12} or \cite[Prop.\ 2.3(iii)]{KSV:15}. Also, one sees that the construction actually yields property (iii). Thus let $(g_\eps)_\eps$ be a net of smooth metrics that satisfies (i)-(iii).
 
Let $K\subseteq M$ be compact. Fix $0<\eps_0$ and set $L:=\{X\in TM\rvert_K: |X|_h=1,\, g(X,X)\leq 0\}$, then $L$ is compact. By property (iii) we thus have that $\delta:=\min_{X\in L}(g(X,X)-g_{\eps_0}(X,X))>0$. Let $0<\eps'<\eps_0$ such that for all $0<\eps\leq\eps'$ we have $d_h(g,g_\eps)<\delta$. Let $X\in TM\rvert_K$ be $g$-causal with $|X|_h=1$, hence $X\in L$. Consequently, we have for all $0<\eps\leq\eps'$ that
 \begin{align}\label{eq-lem-cg-reg}
  -g(X,X)&<-g_\eps(X,X) 
  \\ &\leq d_h(g,g_\eps) - g(X,X)\\
  &< \delta - g(X,X) 
  \\&\leq -g_{\eps_0}(X,X)\,.
 \end{align}
Now set $L':=\{X\in TM\rvert_K: |X|_h=1,\, g_{\eps'}(X,X)\leq 0\}$, then $L'$ is compact and consists of $g_{\eps_0}$-timelike vectors. Thus $\delta':=\min_{X\in L'}(-g_{\eps_0}(X,X))>0$. Let $0<\bar \eps\leq\eps' $ such that for all $0<\eps\leq\bar \eps$ we have $d_h(g,g_\eps)<\delta'$. Let $X\in TM\rvert_K$ be $g_\eps$-causal with $|X|_h=1$, hence $X\in L'$. Consequently, in case $X$ is not $g$-causal we have
 \begin{align}
  -g_\eps(X,X) \leq d_h(g,g_\eps) - g(X,X) < \delta' - g(X,X) < \delta' \leq -g_{\eps_0}(X,X)\,.
 \end{align}
If $X$ is $g$-causal we anyway have $-g_\eps(X,X)\leq -g_{\eps_0}(X,X)$ by Equation \eqref{eq-lem-cg-reg} above. Setting $\eps_1:=\bar\eps$, we conclude that $(g_{\eps})_{0<\eps\leq \eps_1}$ is a net that satisfies (i)-(iii) and $-g_\eps(X,X) \leq -g_{\eps_0}(X,X)$ for all $X\in TM\rvert_K$ that are $g_\eps$-causal. In particular, for all $K\subseteq M$ compact there is such a net with that properties. Applying, \cite[Lemma 2.4]{KSSV:14} to the map $(\eps,p)\mapsto g_\eps(p)$ for $0<\eps\leq \eps_1$ and $p\in M$ we globalize to get a net $(\tilde g_\eps)_\eps$ that satisfies (i)-(iii) and $-\tilde g_\eps(X,X) \leq -\tilde g_{\eps_0}(X,X)$ for all $X\in TM$ that are $\tilde g_\eps$-causal. Then start  with $\tilde g_{\eps_1}$ and continue iteratively to obtain the desired sequence.
\end{pr}

\begin{prop}[Monotonicity of time separation along approximation]
\label{prop-tau-uni-con}
 Let $(M,g)$ be a continuous, strongly causal and causally plain spacetime. Then $\tau$ is locally the uniform and nonincreasing limit of time separation functions of smooth metrics approximating $g$ but with wider light cones.
\end{prop}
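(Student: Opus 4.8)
The plan is to run everything through the smooth approximating metrics $(g_k)_k$ furnished by Lemma~\ref{lem-cg-reg}, writing $\tau_k$ for the time separation of $g_k$, and to show $\tau_k\searrow\tau$ locally uniformly. Monotonicity is the routine part: since $g\prec g_{k+1}\prec g_k$, every $g_{k+1}$-causal curve $\sigma$ is $g_k$-causal, and property (iv) of Lemma~\ref{lem-cg-reg} gives $-g_{k+1}(\dot\sigma,\dot\sigma)\le -g_k(\dot\sigma,\dot\sigma)$, hence $L^{g_{k+1}}(\sigma)\le L^{g_k}(\sigma)\le\tau_k$; taking the supremum over such $\sigma$ yields $\tau_{k+1}\le\tau_k$. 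Dually, every $g$-causal $\sigma$ is $g_{k+1}$-causal and the concluding remark of Lemma~\ref{lem-cg-reg} gives $L^g(\sigma)\le L^{g_{k+1}}(\sigma)$, so $\tau\le\tau_{k+1}$. Thus $\tau\le\cdots\le\tau_{k+1}\le\tau_k\le\cdots$, and the pointwise limit $\tau_\infty:=\inf_k\tau_k\ge\tau$ exists.

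Since the assertion is local, I would fix $p_0\in M$ and pass to a cylindrical neighborhood $(W',W)$ as in Lemma~\ref{lem-adv-cyl-nhd}, arranging the smooth metric that renders $W$ globally hyperbolic to have cones wider than $g_0$. Then each $(W,g_k)$ and $(W,g)$ is globally hyperbolic and $W'$ is causally convex in $W$, so on $W'$ the functions $\tau_k$ are finite and continuous, the time separations computed in $W$ and in $M$ agree, and $g_k$-maximizers between $g_k$-timelike related points exist by Avez--Seifert.

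The crux is the pointwise identity $\tau_\infty=\tau$. Fix $x,y\in W'$. If $\tau_k(x,y)=0$ for some $k$, monotonicity forces $\tau_\infty(x,y)=0=\tau(x,y)$; so assume $\tau_k(x,y)>0$ for all $k$ and pick $g_k$-maximizers $\gamma_k$ from $x$ to $y$ with $L^{g_k}(\gamma_k)=\tau_k(x,y)$. All $\gamma_k$ lie in a fixed compact causal diamond, so after $h$-arclength reparametrization the limit curve theorem gives a subsequence $\gamma_{k_j}\to\gamma$ converging uniformly to a curve from $x$ to $y$. For each fixed $m$ and all $k_j\ge m$ the curve $\gamma_{k_j}$ is $g_m$-causal, and iterating property (iv) gives $L^{g_{k_j}}(\gamma_{k_j})\le L^{g_m}(\gamma_{k_j})$; letting $j\to\infty$ and invoking the standard upper semicontinuity of the length functional of the \emph{single smooth} metric $g_m$ under uniform convergence of $g_m$-causal curves yields $\tau_\infty(x,y)=\lim_j L^{g_{k_j}}(\gamma_{k_j})\le L^{g_m}(\gamma)$. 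As $\gamma$ is a uniform limit of $g_m$-causal curves it is $g_m$-causal for every $m$, hence $g$-causal (the closed cones of $g_m$ shrink to that of $g$); and since $-g_m(\dot\gamma,\dot\gamma)\searrow -g(\dot\gamma,\dot\gamma)$ by property (iv), dominated convergence gives $\inf_m L^{g_m}(\gamma)=L^g(\gamma)$. Therefore $\tau_\infty(x,y)\le L^g(\gamma)\le\tau(x,y)$, and with $\tau\le\tau_\infty$ this gives $\tau_\infty=\tau$.

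Finally I would upgrade pointwise to locally uniform convergence: $\tau=\inf_k\tau_k$ is upper semicontinuous as an infimum of continuous functions, while $\tau$ is lower semicontinuous for the continuous metric $g$, so $\tau$ is continuous. Dini's theorem, applied to the monotonically decreasing continuous $\tau_k$ converging to the continuous limit $\tau$ on the compact set $\overline{W'}\times\overline{W'}$, then yields the asserted local uniform convergence. The main obstacle is precisely the pointwise step, where one must rule out the $g_k$-maximizers gaining length in the limit; the device that resolves it is the monotonicity (iv) of Lemma~\ref{lem-cg-reg}, which reduces a delicate varying-metric semicontinuity question to the classical upper semicontinuity of Lorentzian length for a fixed smooth intermediate metric $g_m$ before taking $m\to\infty$.
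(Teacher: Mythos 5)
Your proposal is correct and follows essentially the same route as the paper's proof: monotonicity from the properties of the approximating sequence in Lemma \ref{lem-cg-reg}, a limit-curve extraction from the $g_k$-maximizers in a globally hyperbolic cylindrical neighborhood, the key trick of comparing against a \emph{fixed} intermediate smooth metric $g_m$ via the monotone length inequality plus upper semicontinuity of $L^{g_m}$, convergence $L^{g_m}(\gamma)\to L^g(\gamma)$ on the fixed limit curve, and finally Dini's theorem. The only differences are presentational (your case split on whether some $\tau_k(x,y)=0$ versus the paper's split on whether $x\leq y$, and dominated convergence in place of an explicit $\eps/2$ choice of $k_0$), so no further comparison is needed.
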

\begin{pr}
Let $U$, $W$ as in Lemma \ref{lem-adv-cyl-nhd} with $\overline{U}\subseteq W$ and $W$ globally hyperbolic. As $(M,g)$ is 
causally plain the time separation function $\tau$ is lower semicontinuous by \cite[Prop.\ 5.7]{KS:18}. Thus $\tau$ is 
continuous on $\overline{U}$ by \cite[Thm.\ 3.28]{KS:18}. Moreover, let $(g_k)_k$ be a sequence as in Lemma \ref{lem-cg-reg} 
and denote the time separation function of $g_k$ by $\tau_k$ and the corresponding length functional by $L_k$. First, by the 
properties of $(g_k)_k$ it is clear that for all $p,q\in \overline{U}$ we have $\tau(p,q)\leq \tau_{k+1}(p,q)\leq 
\tau_k(p,q)$. Second, for all $p,q\in \overline{U}$ we have that $\tau_k(p,q)\to\tau(p,q)$ pointwise. To see this note that 
if $p$ is not causally related to $q$ but for every $k$ there is a $g_k$-causal curve $\gamma^k$ from $p$ to $q$ then by the 
appropriate version of the Limit Curve Theorem (\cite[Thm.\ 1.5]{Sae:16}) there is subsequence of $(\gamma^k)_k$ that 
converges uniformly to a $g$-causal curve from $p$ to $q$ (as necessarily $p\neq q$). Thus we only need to consider the case 
where $p\leq q$. To this end and similar to above, let $(\gamma^k)_k$ be a sequence of future directed $g_k$-causal and 
maximal curves from $p$ to $q$ that converge uniformly to a $g$-causal curve $\lambda$ from $p$ to $q$. Now, let $\eps>0$ and 
let $k_0\in\N$ such that for all $k\geq k_0$ we have that $|L_{k}(\lambda)-L^g(\lambda)|<\frac{\eps}{2}$. As $L_{k_0}$ is 
upper semicontinuous (cf.\ e.g.\ \cite[Thm.\ 6.3]{Sae:16}) there is a $k_1\geq k_0$ such that for all $k\geq k_1$ we have 
that $L_{k_0}(\gamma^k)\leq L_{k_0}(\lambda) + \frac{\eps}{2}$. Then for all $k\geq k_1$ we obtain
\begin{align}
 \tau_k(p,q)=L_k(\gamma^k)\leq L_{k_0}(\gamma^k)\leq L_{k_0}(\lambda) + \frac{\eps}{2} \leq L^g(\lambda) + \eps \leq 
\tau(p,q)+\eps\,.
\end{align}
At this point we can apply Dini's Theorem (cf.\ e.g.\ \cite[2.66]{AB:06}) to conclude that $\tau_k\to \tau$ uniformly on $\overline{U}$.
\end{pr}

\bibliographystyle{alphaabbr}
\bibliography{lms}
\addcontentsline{toc}{section}{References}

\end{document}